\numberwithin{equation}{section}
\newtheorem{theorem}{Theorem}[section]
\newtheorem{lemma}[theorem]{Lemma}
\newtheorem{proposition}[theorem]{Proposition}
\newtheorem{remark}[theorem]{Remark}
\newcommand{\ind}{\mathbf{1}}
\newcommand{\R}{\mathbb{R}}
\newcommand{\cP}{{\ensuremath{\mathcal P}} }
\newcommand{\cL}{{\ensuremath{\mathcal L}} }
\DeclareMathSymbol{\leqslant}{\mathalpha}{AMSa}{"36} 
\DeclareMathSymbol{\geqslant}{\mathalpha}{AMSa}{"3E} 
\DeclareMathSymbol{\eset}{\mathalpha}{AMSb}{"3F}     
\newcommand{\dd}{\,\text{\rm d}}             
\newcommand{\bbC}{{\ensuremath{\mathbb C}} }
\newcommand{\bbE}{{\ensuremath{\mathbb E}} }
\newcommand{\bbP}{{\ensuremath{\mathbb P}} }
\newcommand{\bbR}{{\ensuremath{\mathbb R}} }
\newcommand{\ga}{\alpha}
\newcommand{\gb}{\beta}
\newcommand{\gd}{\delta}
\newcommand{\gep}{\varepsilon}       
\newcommand{\go}{\omega}
\newcommand{\gs}{\sigma}
\def\captionfont@{\footnotesize}
\def\captionheadfont@{\scshape}
\long\def\@makecaption#1#2{%
  \vspace{2mm}
  \setbox\@tempboxa\vbox{\color@setgroup
    \advance\hsize-6pc\noindent
    \captionfont@\captionheadfont@#1\@xp\@ifnotempty\@xp
        {\@cdr#2\@nil}{.\captionfont@\upshape\enspace#2}%
    \unskip\kern-6pc\par
    \global\setbox\@ne\lastbox\color@endgroup}%
  \ifhbox\@ne 
    \setbox\@ne\hbox{\unhbox\@ne\unskip\unskip\unpenalty\unkern}%
  \fi
  \ifdim\wd\@tempboxa=\z@ 
    \setbox\@ne\hbox to\columnwidth{\hss\kern-6pc\box\@ne\hss}%
  \else 
    \setbox\@ne\vbox{\unvbox\@tempboxa\parskip\z@skip
        \noindent\unhbox\@ne\advance\hsize-6pc\par}%
\fi
  \ifnum\@tempcnta<64 
    \addvspace\abovecaptionskip
    \moveright 3pc\box\@ne
  \else 
    \moveright 3pc\box\@ne
    \nobreak
    \vskip\belowcaptionskip
  \fi
\relax
}
\def\writefig#1 #2 #3 {\rlap{\kern #1 truecm
\raise #2 truecm \hbox{#3}}}
\newcommand{\mmu}{{\mathrm{m}}}
\newcommand{\vertiii}[1]{{\left\vert\kern-0.25ex\left\vert\kern-0.25ex\left\vert #1 
    \right\vert\kern-0.25ex\right\vert\kern-0.25ex\right\vert}}
\newcommand{\eps }{\varepsilon}
\renewcommand{\d}{\dd}
\newcommand{\gex}{{\hat{\gamma}_\eps}}
\DeclareMathOperator{\Log}{Log}
\newcommand{\res}{\textup{res}}
\renewcommand{\O}[1]{O\left({#1}\right)}
\newcommand{\Oeps}[1]{\O{\eps^{#1}}}
\newcommand{\E}[1]{\bbE \left[ #1 \right]}
\begin{document}

\title[ Lyapunov exponent of a product of random $2 \times 2$ matrices]
{Singular behavior of the leading Lyapunov exponent of a product of random $2 \times 2$ matrices}

\author[G. Genovese]{Giuseppe Genovese}
\address[Giuseppe Genovese]{Institut f\"ur Mathematik \\Universit\"at Z\"urich \\
Winterthurerstrasse 190 \\
CH-8057 Z\"urich \\ Switzerland}
\author[G. Giacomin]{Giambattista Giacomin}
\address[Giambattista Giacomin]{Universit\'e Paris Diderot, Sorbonne Paris Cit\'e,  Laboratoire de Probabilit{\'e}s et Mod\`eles Al\'eatoires, UMR 7599,
            F-75205 Paris, France}
\author[R. L. Greenblatt]{Rafael Leon Greenblatt}
\address[Rafael Leon Greenblatt]{Dipartimento di Scienze di Base e Applicate per l'Ingegneria \\
Universit\`a degli Studi di Roma ``La Sapienza''\\
Via Antonio Scarpa 14/16 \\
I-00161 Rome \\ Italy}
\curraddr{Institut f\"ur Mathematik \\Universit\"at Z\"urich \\
Winterthurerstrasse 190 \\
CH-8057 Z\"urich \\ Switzerland}
\thanks{
  The work of the first author is supported by Swiss National Science Foundation. The work of the third author was supported by Ricerche Universitarie Sapienza grant C26A14WR4N
and the PRIN National Grant \textit{Geometric and analytic theory of Hamiltonian systems in finite and infinite dimensions}.}
\email{rafael.greenblatt@gmail.com}
\date{\today}

\begin{abstract}
We consider a certain infinite product of random $2 \times 2$ matrices appearing in the  solution of some $1$ and $1+1$ dimensional disordered models in statistical mechanics, 
which depends on a  parameter $\varepsilon>0$ and on a real random variable with distribution $\mu$.  
For a large class of $\mu$, we prove the prediction by B.~Derrida and H.~J.~Hilhorst (J. Phys. A 16, 1641--2654 (1983)) that the Lyapunov exponent behaves like $C \eps^{2 \alpha}$ in the limit $\eps \searrow 0$, where $\alpha \in (0,1)$ and $C>0$ are determined by $\mu$.  Derrida and Hilhorst performed a two-scale analysis of the integral equation for the invariant distribution of the Markov chain associated to the matrix product and obtained a probability measure that is expected to be close to the invariant one for small $\gep$. We introduce suitable norms and exploit contractivity properties to show that 
such a probability measure is indeed close to the invariant one in a sense which implies a suitable control of the Lyapunov exponent. 
\end{abstract}
\keywords{Product of Random Matrices, Lyapunov Exponent, Singular Behavior, Statistical Mechanics,  Disordered Systems}

\maketitle

\section{Introduction}
\label{sec:intro}

\subsection{Products of random matrices, Lyapunov exponents and statistical mechanics}
Products of random $2 \times 2$ matrices have appeared in the physics literature since Schmidt \cite{Schmidt} introduced them to analyse a finite-difference equation with random coefficients proposed by Dyson \cite{Dyson} to study disordered harmonic chains. 
In the following years, probabilists and analysts began to investigate more general random matrix products, obtaining powerful results such as Furstenberg's theorem (regarding the existence and implicit characterization of the leading Lyapunov exponent \cite{F}) and Osoledets' multiplicative Ergodic theorem \cite{Os},  
many of which also hold in the more general context of linear cocycles (see~\cite{Viana} for a recent review).
The same difference equation studied by Dyson occurs as the Schr\"{o}dinger equation for the Anderson tight-binding model in one dimension, and Furstenberg's work played an important role in the first rigorous proofs of localization in this model (e.g.\ \cite{KS,Molchanov}); random matrix product theory provides a unified framework for these otherwise disparate treatments \cite{BL}.
Random-matrix-product studies of the one-dimensional Schr\"odinger equation have seen continued use in recent years to obtain further results about localization \cite{CTT10,CTT13}.

The present work considers random matrices of the form
\begin{equation}
\label{eq:M}
M_n^\gep\, :=\, 
 \begin{pmatrix}
1 & \gep \\
\gep Z_n & Z_n
\end{pmatrix}
\, ,
\end{equation}
where $\eps \in (0,1)$ is a constant and $\{Z_n\}_{n=1,2, \ldots}$ a sequence of positive, independent random variables with identical distribution  $\mu$. We will write $Z$ for a random variable with distribution $\mu$. This product  of random variables, and the associated Lyapunov exponent(s), appear in various statistical mechanics models. For example,
up to an unimportant factor, $M_n^\gep$ is the transfer matrix of the $1D$ Ising model with $\eps = e^{2 \beta J}$, $Z = e^{2 \beta h}$. Here randomness in $Z$ corresponds to a random magnetic field 
and the free energy density (in the thermodynamic limit) is the leading Lyapunov exponent \cite[Chapter 4]{CPV} defined by
\begin{equation}
\label{eq:Leps}
\cL(\gep)\,  :=\, \lim_{n \to \infty} \frac 1n\bbE 
\log \left\Vert  M_n^\gep M_{n-1}^\gep \cdots M_1^\gep \right\Vert,
\end{equation}
where $\| \cdot \|$ denotes an operator norm (the limit is independent of the norm chosen); 
our results apply to part of the frustrated regime, where the magnetic field can have either sign with nonzero probability.
Moreover, the free energy of the McCoy-Wu model in the thermodynamic limit can be expressed as an integral of the free energy of this model with respect to a parameter $q$ -- which maps to $\gep$ in our notation -- and the singular behavior comes from the values of $q$ (i.e.\ $\eps$) close to zero \cite{ShankarMurthy}.  
A similar matrix product also appears in the original treatment of the   McCoy-Wu model \cite{MW1}.  

\subsection{Working definitions and main result}
\label{sec:setup}

The classical theory of products of random matrices provides a technique for calculating $\cL(\eps)$ as an ergodic average \cite{BL,FK}.  
Since $\mathrm{det} (M_1^\gep) = (1-\gep^2) Z_1>0$,
\begin{equation}
\label{eq:A}
A_n^\gep \, := \, \frac{ M_n^\gep}{\sqrt{\mathrm{det}\left(  M_n^\gep \right)}}\, ,
\end{equation}
is a an element of $\mathrm{SL}(2, \bbR)$, and assuming that $\eps > 0 $ and that $\mu$ is absolutely continuous with bounded support, it is easy to confirm that it satisfies the assumptions of \cite[Chapter II, Prop.~4.3 and Th.~4.1]{BL}, which shows that the Markov process on $\cP(\bbR^2)$ defined by 
\begin{equation}
\label{eq:iteration123}
x, \ A_1^\gep x ,\  A_2^\gep A_1^\gep x , \  \ldots \,, \  A_n^\gep A_{n-1}^\gep \cdots A_1^\gep x, \
\ldots 
\end{equation}
has a unique (and therefore ergodic)
invariant probability measure $\mmu_\gep$.
As already remarked for example in  \cite{DH,MW1}, special features of the specific random matrices in question allow us to simplify the expression for $\cL(\gep)$.
Firstly, since all the matrix elements are positive and (for fixed $\mu$ and $\eps$, also assuming the support of $\mu$ is bounded away from 0) bounded from above and below, the limit is unchanged if we replace the vector norm $\| \cdot \|$ in the last line of \eqref{eq:Leps} with the scalar product with a fixed matrix element~\cite{FK}, i.e.\
\begin{equation}
\cL(\gep)\,  =\, \lim_{n \to \infty} \frac 1n\bbE 
\log \left[  M_n^\gep M_{n-1}^\gep \cdots M_1^\gep \right]_{11},
\end{equation}
and using the pointwise ergodic theorem we can rewrite this
\begin{equation}
  \begin{split}
	\cL(\gep)\, & =\, \lim_{n \to \infty} \frac 1n
	  \bbE \log \left[ A_n^\eps A_{n-1}^\eps \cdots A_1^\eps \right]_{11} 
	  + \frac 12 \bbE \, \log \mathrm{det}\left( M_1^\gep\right) 
	\\ & = \,
	\lim_{n \to \infty} \frac 1n \sum_{m=1}^n \bbE \left[ 
	  \log \frac{\left[ A_m x^{(m-1)} \right]_1}{x^{(m-1)}_1}
	  + \frac12 \log \mathrm{det}\left( M_m^\gep\right) 
	\right]
	\\ & = \,
	\int \bbE \log \frac{\left[M_1^\eps \hat x\right]_1}{\hat{x}_1}\mmu_\gep (\dd x)\, ,
  \end{split}
\end{equation}
where $x^{(m)} := A_m^\eps A_{m-1}^\eps \cdots A_1^\eps \begin{pmatrix}
  1 \\ 0
\end{pmatrix}$ and 
$\hat{x}$ is an arbitrarily chosen vector in $\mathbb{R}^2$ from the equivalence class $x$.

Secondly, since the elements of the first row of $M_1^\gep$ are deterministic, the first component of $M_1^\gep \hat{x}$ is a deterministic function of $\hat{x}$, and the $\bbE$ in the last line above is trivial.
Parameterizing $\cP(\bbR^2)$ by $\sigma \in (-\infty,\infty]$ with the choice $\hat{x} = (\eps, \sigma)$, for the $M$ under consideration we have explicitly
\begin{equation}
\label{eq:Leps2}
\cL(\gep)\, =\, 
\int \log (1 + \sigma) \, \bar{\omega}_\eps (\dd \sigma) \, ,
\end{equation}
where $\bar{\omega}_\eps$ is obtained from $\mmu_\eps$ by a change of variables; a simple computation shows that the Markov process defined by \eqref{eq:iteration123} corresponds to the one defined by the iteration
\begin{equation}
\label{eq:sigma_recursion}
\sigma_{n+1} = Z_n \frac{\eps^2 + \sigma_{n}}{1 + \sigma_n}\, ;
\end{equation}
$\bar{\omega}_\eps$ is then the unique stationary measury of this process.

All we have outlined up to now depends heavily on  $\gep\in (0,1)$. The case of $\gep\in (-1,0)$ can be dealt with just observing 
that $M_n^{-\gep}$ is conjugate to $M_n^\gep$ under the action of the diagonal matrix with eigenvalues $+1$ and $-1$, and therefore $\cL(\gep)=\cL(-\gep)$.
The case $\gep=0$ is however different: the invariant probability is not unique. In fact,  all the invariant probabilities can be written as convex combination of the 
Dirac deltas at $0$ and $\infty$, as can be seen by elementary arguments, and it is straightforward to see that $\cL(0)=\max(0,\bbE\log Z)$.

\medskip

Another fact that provides important context for our result is that, 
by applying the general result in \cite{Ruelle}, we see that 
 $\cL(\gep)$ is real analytic  (see also \cite{Hennion,LePage} that show $C^\infty$ behavior and H\"older continuity).  
 The singular character of the matrices for $\gep=0$ is, as we just pointed out, apparent,   but sharp results on the behavior of $\cL(\gep)$ for $\gep$ approaching zero are lacking in the mathematical literature.
  However  Derrida and Hilhorst \cite{DH}
claim that when $\E{Z} > 1$ and $\E{\log Z} < 0$
\begin{equation}
\label{eq:DH}
\cL(\gep) \stackrel{\gep\searrow 0} \sim C_\mu \gep^{2\ga}\, ,
\end{equation}
where $C_\mu$ is a positive constant and $\ga\in (0,1)$ is the unique positive real solution of
\begin{equation}
\bbE \left[ Z^\ga \right]\, =\, 1\, .
\label{eq:A_def}
\end{equation}
Existence and uniqueness of $\ga$
follow from the convexity of  the function $\gb\mapsto \bbE[Z^\gb]$, which takes value one with derivative 
$\E{ \log Z}<0$ at $\beta=0$  and value 
$\E{ Z}>1$ at $\beta=1$.

\medskip

The main result of the present work is a proof of \eqref{eq:DH}, in the following form:

\medskip

\begin{theorem}
\label{th:main}
Let $\mathcal{L}(\eps)$ and $M_n^\eps $ be as defined above, with $Z$ satisfying
\begin{enumerate}
\item $\bbE [Z]>1$ and $\bbE [\log Z ]  <0$;
\item There exist $c_-$ and $c_+$ with $0< c_-<c_+< \infty$
such that  $\bbP(Z \in [c_-, c_+])\,=\,1$, and there is no smaller closed interval so that this is true;  
\item The distribution $\mu$ of $Z$ is absolutely continuous with respect to the Lebesgue measure, and its density is a continuously differentiable function.
\end{enumerate}
Then there exist $\varkappa>0$ and $C_\mu>0$ such that
\begin{equation} 
\cL(\gep)\, =\, C_{\mu} \gep^{2\ga} +O\left( \gep^{2\ga + {\varkappa}}\right)\, ,
\end{equation}
for $\gep\searrow 0$ and $\ga$ the positive real solution of~\eqref{eq:A_def}.
\end{theorem}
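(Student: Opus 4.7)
The strategy is to construct, explicitly and \emph{a priori}, an approximate invariant measure $\omega_\eps^\sharp$ for the Markov chain \eqref{eq:sigma_recursion} guided by the two-scale ansatz of Derrida--Hilhorst, show that the exact invariant measure $\bar\omega_\eps$ is close to $\omega_\eps^\sharp$ in a norm that is sensitive enough to resolve the $\eps^{2\alpha}$ scale, and then read off the asymptotics of $\cL(\eps)$ from \eqref{eq:Leps2}. All three ingredients (construction, contractivity, and asymptotic evaluation) are needed together; the delicate point is to match them in a single functional-analytic framework.

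First I would write the Markov operator on probability measures as $T_\eps \nu := \mu \otimes \nu \circ f_\eps^{-1}$, where $f_\eps(Z,\sigma) = Z(\eps^2+\sigma)/(1+\sigma)$. To guess $\omega_\eps^\sharp$ I split the state space into three regions. In the \emph{inner} region $\sigma = \eps^2 \tau$ with $\tau$ of order one, the chain is approximately the perpetuity $\tau_{n+1} = Z_n(1+\tau_n)$, whose invariant distribution $\pi_0$ exists and has heavy tail $\pi_0((\tau,\infty)) \sim c_\infty \tau^{-\alpha}$ by Kesten--Goldie (using $\bbE[Z^\alpha]=1$ and $\bbE[\log Z]<0$). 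In the \emph{outer} region $\sigma$ of order one, the $\eps=0$ chain $\sigma_{n+1} = Z_n \sigma_n / (1+\sigma_n)$ has a family of invariant measures (one cannot normalise, since the chain escapes to $0$), and the correct weight on the outer profile $\pi_\infty$ is determined by matching to the inner power-law tail in the \emph{intermediate} region $\eps^2 \ll \sigma \ll 1$, where both inner and outer reduce to the self-similar density $\propto \sigma^{-\alpha-1}$. The combined ansatz has the form $\omega_\eps^\sharp = (1-\eps^{2\alpha}\gamma) \eps^{-2}\pi_0(\sigma/\eps^2) \,\dd\sigma + \eps^{2\alpha} \gamma \,\pi_\infty(\dd\sigma)$ with a suitable prefactor $\gamma$ fixed by the matching.

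Next I would establish contractivity of $T_\eps$ in a carefully chosen norm $\vertiii{\cdot}$ on signed measures. A natural candidate is a weighted Kantorovich/Wasserstein-type norm based on the distance $d(\sigma,\sigma') = |\log(\eps^2 + \sigma) - \log(\eps^2 + \sigma')|$, which is well adapted to the map $f_\eps$: indeed, for fixed $Z$, $|\log(\eps^2 + f_\eps(Z,\sigma)) - \log(\eps^2 + f_\eps(Z,\sigma'))| \leq \log Z$-independent factor times $d(\sigma,\sigma')$, and the Lipschitz constant is strictly less than $1$ on bounded regions thanks to $|\partial_\sigma f_\eps|\leq 1$ under hypothesis~(2). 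Averaging over $Z$ and using that the chain returns to a compact set in a bounded number of steps gives a contraction $\vertiii{T_\eps^N(\nu_1 - \nu_2)} \leq (1-\eta)\vertiii{\nu_1 - \nu_2}$ with $\eta$ bounded below as $\eps \to 0$. Hypothesis~(3) (smooth density of $\mu$) lets me smooth the signed measure part so that regularity is preserved.

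The key quantitative step is then to estimate $\vertiii{T_\eps \omega_\eps^\sharp - \omega_\eps^\sharp}$. In each of the three regions the ansatz solves the fixed-point equation only up to controlled errors: on the inner scale the error comes from expanding $1/(1+\eps^2\tau)$ and is $O(\eps^2)$ times the inner density; on the outer scale the error is $O(\eps^2)$ coming from replacing $f_\eps$ by $f_0$; the matching region contributes the main error, of order $\eps^{2\alpha + \varkappa}$ for some $\varkappa > 0$, which is where the choice of cut-off scales has to be optimised (cutting at $\sigma \sim \eps^{2\theta}$ with $\theta \in (0,1)$ will yield an error of the form $\eps^{2\alpha}\cdot\eps^{2\varkappa(\theta)}$, and I take $\theta$ to minimise $-\varkappa(\theta)$). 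Combining contractivity with the approximate-fixed-point estimate via the standard telescoping argument yields $\vertiii{\bar\omega_\eps - \omega_\eps^\sharp} = O(\eps^{2\alpha + \varkappa})$.

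Finally, I would insert $\omega_\eps^\sharp$ into \eqref{eq:Leps2}. The inner piece contributes $\int \log(1 + \eps^2 \tau)\, \pi_0(\dd\tau) = O(\eps^2)$ which is dwarfed by $\eps^{2\alpha}$ since $\alpha<1$, while the outer piece gives $\eps^{2\alpha}\gamma \int\log(1+\sigma)\,\pi_\infty(\dd\sigma) =: C_\mu \eps^{2\alpha}$ with $C_\mu > 0$ by positivity of $\pi_\infty$ and $\log(1+\sigma)$. The contribution of the residual $\bar\omega_\eps - \omega_\eps^\sharp$ is then controlled by the dual bound $\big|\int\log(1+\sigma)\, (\bar\omega_\eps - \omega_\eps^\sharp)(\dd\sigma)\big| \leq \|\log(1+\cdot)\|_{\mathrm{Lip}(d)} \vertiii{\bar\omega_\eps - \omega_\eps^\sharp}$, provided the chosen metric $d$ controls the modulus of continuity of $\log(1+\sigma)$. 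The main obstacle in the whole program is balancing (a) a norm strong enough to witness differences at the $\eps^{2\alpha}$ scale and to control $\int\log(1+\sigma)$, with (b) a norm weak enough that $T_\eps$ contracts with a constant bounded uniformly in $\eps$; getting these two requirements compatible is what dictates the logarithmic weighting $d(\sigma,\sigma') = |\log(\eps^2+\sigma) - \log(\eps^2+\sigma')|$ and ultimately fixes the exponent $\varkappa$.
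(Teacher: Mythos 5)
Your overall architecture --- paste the two limiting profiles into an approximate invariant measure, prove a contraction bound uniform in $\eps$, and transfer the approximate-fixed-point error to the Lyapunov functional --- is exactly the strategy of the paper. But two of your three ingredients have genuine gaps. The first is the contraction. In the logarithmic transport metric you propose, the one-step Lipschitz constant of $\sigma \mapsto Z(\eps^2+\sigma)/(1+\sigma)$ is roughly $(1+\sigma)^{-1}\bigl(1+\eps^2/(Z\sigma)\bigr)^{-1}$, which tends to $1$ throughout the intermediate region $\eps^2 \ll \sigma \ll 1$; the worst case over the state space is $1-O(\eps)$, and your fallback (``the chain returns to a compact set in a bounded number of steps'') fails because the only $\eps$-independent compact set on which strict contraction holds, $\{\sigma \geq \delta\}$, is visited from the bottom of the state space only via excursions of probability of order $\eps^{2\alpha}$. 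Any $N$-step contraction rate extracted this way degenerates as $\eps \searrow 0$, which is fatal when you must resolve effects of size $\eps^{2\alpha+\varkappa}$. The paper avoids recurrence arguments altogether: in the weighted tail norm $\vertiii{\eta}_\beta = \int_0^\infty \tau^{\beta-1}|G_\eta(\tau)|\,\dd\tau$ one has the clean one-step bound $\vertiii{T_\eps\nu_1-T_\eps\nu_2}_\beta \leq \bbE[Z^\beta]\,\vertiii{\nu_1-\nu_2}_\beta$ (Lemma~\ref{lem:little_contractivity}), and $\bbE[Z^\beta]<1$ for $\beta\in(0,\alpha)$ by convexity --- a contraction constant uniform in $\eps$ by an algebraic identity. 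If you insist on a transport metric you must actually produce a uniform-in-$\eps$ gap, and your justification does not.

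The second gap is the quantitative matching error, which is where all the real work lies. Kesten--Goldie gives only the leading tail $\pi_0((\tau,\infty))\sim c_\infty \tau^{-\alpha}$ with no rate on the remainder; with that alone, the mismatch between the inner and outer profiles at any cut-off is merely $o(1)\cdot\eps^{2\alpha}$, so you cannot exhibit a positive $\varkappa$ --- indeed you cannot even show the approximate-fixed-point error is $o(\eps^{2\alpha})$ rather than $O(\eps^{2\alpha})$, and the theorem does not follow. The paper needs, and proves via Mellin-transform analysis (Lemmas~\ref{lem:nu_asymptotics} and~\ref{lem:omega_asymptotics}), that the remainders are smaller than the leading power by a factor $O(\tau^{-\delta})$, with $\delta>0$ governed by the absence of complex roots of \eqref{eq:A_def} in the strip $\Re z\in(\alpha,\alpha+\delta]$; this is precisely where hypotheses (2)--(3) enter, and your proposal never engages with it. Relatedly, you assert the existence of the outer profile $\pi_\infty$ and its small-$\sigma$ behaviour without argument: it is an infinite stationary measure, outside the scope of Kesten--Goldie and of standard Harris theory, and its construction and asymptotics (with remainder) occupy Sections~\ref{sec:omega0} and~\ref{sec:asymptotics} of the paper. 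These are not presentational omissions; without them the error term is of the same order as the main term.
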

\medskip

Note that assumption (1) implies that $c_-<1<c_+$. 
As we shall see, $\varkappa$ can be expressed explicitly given some information about the complex roots of \eqref{eq:A_def}, and our expression for $C_\mu$ is 
in agreement with \cite{DH}. 

The difficulty in proving Theorem~\ref{th:main} comes of course  from the
implicit characterization of 
$\mmu_\gep$. 
Identifying $\mmu_\gep$ is identifying the invariant measure of a Markov chain 
which does not have any special properties which would allow an explicit expression.
What has been exploited in \cite{DH} for \eqref{eq:DH}
is, in a sense, the solvable character of the model for $\gep=0$, 
but this limit is singular and taking advantage of it is by no means trivial, as we shall now explain.

\subsection{The Derrida-Hilhorst approach}
\label{sec:DH}

Let us review now the main argument of \cite{DH}. In view of \eqref{eq:Leps2},
we look at the evolution of $\gs$ under the random iteration~\eqref{eq:sigma_recursion}.
Recalling that $Z$ is supported on $[c_-, c_+]$, we see that 
the mapping~\eqref{eq:sigma_recursion} takes $(-\infty,\infty]$ into $[0,\infty)$, and $[0,\infty]$ into $[c_- \eps^2, c_+]$, and therefore
\begin{equation}
\label{eq:support}
\bar{\go}_\gep \left( [c_- \eps^2, c_+]^{\complement} \right) \, =\, 0\, .
\end{equation}
Incidentally, it is straightforward to see also that $\bar \go_\gep$ has a density (e.g.\ by the argument used in Proposition~\ref{lem:nu0} below); we will always denote the density of a measure with the same symbol, i.e.\ $\bar \go_\gep (\dd \gs)=\bar\go_\gep(\gs) \dd \gs$.

\medskip

The Derrida-Hilhorst approach is based on a two regime argument:
\medskip

\begin{itemize}
\item [I.] In the limit  $\gep \searrow 0$, the random recursion \eqref{eq:sigma_recursion} takes the form
\begin{equation}
\label{eq:newmapgs1}
\gs \mapsto Z\frac \gs {1+\gs}\, ,
\end{equation}
which defines a Markov chain whose unique invariant probability is concentrated at zero, since $\sigma_n < Z_n \cdots Z_1 \sigma_0$ which, for $\bbE \log Z < 0$, converges almost surely to $0$. 
However this chain has other invariant measures which cannot be normalized.  If instead of considering the stationary {\it probability} measures $\bar{\omega}_\eps$ we consider the stationary measures $\go_\eps$ with the normalization $\go_\eps\left( (y,\infty) \right) = 1$ for some suitable fixed $y$, these should have a nontrivial limit given by one of the non-probability stationary measures of the limiting process~\eqref{eq:newmapgs1},
which we denote by $\omega_0$. 
That is, for $\gs>0$ fixed  and $\gep$ small,   $\bar{\omega}_\eps(\gs)$ should be close to $a(\gep)\go_0(\gs)$, for some positive $a(\gep)$, and $a(\gep)=o(1)$ because 
the limit for $\gep \searrow 0$ of $\go_\gep$ concentrates in zero
 (see Figure~\ref{fig:density} and \eqref{eq:claimDH0}-\eqref{eq:claimDH0.1}).
  For $\sigma \searrow 0$, the recursion~\eqref{eq:sigma_recursion} formally converges to the linear map
 $\gs \mapsto Z \gs$, and the asymptotic behavior of $\omega_0$ should match that of a stationary measure of this map.
\item [II.] Derrida and Hilhorst  then analyse $\go_\gep$ by blowing up the scale by $\gep^{-2}$. Namely, they consider 
\begin{equation}
\label{eq:s_recursion}
s \, \mapsto\,   Z \frac{1+s}{1+\eps^2 s}\, .
\end{equation}
with stationary probability denoted by 
$\nu_\gep$, which, by  \eqref{eq:support}, is supported in $[c_-, \gep^{-2}c_+]$.
By taking the $\gep\searrow 0$ limit we get to $s \mapsto Z(1+s)$, which is linear but not straightforward to analyse.
The claim for this chain is that it does have a unique invariant probability $\nu_0$, supported on $[c_-, \infty)$, 
whose tail behavior ($s$ large) can be understood by    studying the simpler map $s \mapsto Zs$. 
\end{itemize}

\begin{figure}
\centering
\includegraphics[width=11.5 cm]{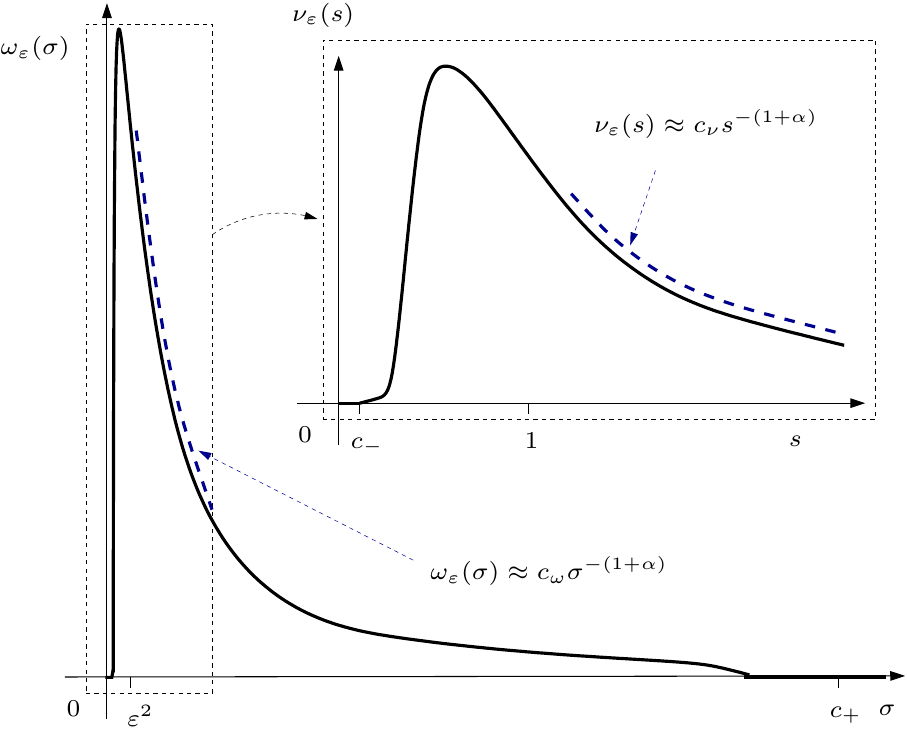}
\caption{\label{fig:density} 
A schematic view of the invariant measure at fixed small $\gep$. This plot shows the density $\gs \mapsto \go_\gep(\gs)$ (not a probability)    and the probability density
$s\mapsto \nu_\gep(s)$. The larger plot corresponds to regime I, while in the inset the horizontal coordinate  is blown up by a factor $\gep^{-2}$ and we are therefore in regime II. The {\sl intermediate regime} ($\gep^2 \ll \gs\ll 1$ or, equivalently,  $1 \ll s \ll \gep^{-2}$), where we expect to have  at the same time  $\go_\gep (s) \approx \go_0(s)$, $\nu_\gep(t)\approx\nu_0(t)$ and that the asymptotic expressions give good approximations, is highlighted by thick dashed curves. 
}
\end{figure}

\medskip

Therefore both the behavior  of $\go_0(\gs)$ for small $\gs$  and the behavior $\nu_0(s)$ for large $s$
are expected to be captured by the invariant measures of the random map on $(0, \infty)$ given by multiplication by $Z$. 
These measures have densities $f(\cdot)$ that satisfy 
$f(x)= \int_0 ^\infty f(y/x)\mu(y)y^{-1} \dd y$ and a simple computation shows that they can be written as
\begin{equation}
f(x)\, =\, \sum_i A_i x^{-1-\ga_i}\, ,
\end{equation}
and where $A_i\ge 0$ and $\ga_i\in \bbC$ are such that $\bbE[Z^{\ga_i}]=1$.
Derrida and Hilhorst  set forth  arguments suggesting that the asymptotic behaviors we are after are 
given by the case that contains only $\ga_i= \ga$, unique positive solution of $\bbE[Z^{\ga}]=1$, so 
\begin{equation}
\label{eq:CnuCgo}
\nu_0(s) \stackrel{s \to \infty}{\sim} \frac{c_\nu}{s^{1+\ga}} \ \ \text{ and } \ \ \ 
\go_0(\gs) \stackrel{\gs \searrow 0}{\sim} \frac{c_\go}{\gs^{1+\ga}}\, , 
\end{equation} 
where $c_\nu$ and $c_\go$ are positive constants: $c_\nu$ is fixed by the requirement that $\nu_0$ is a probability,
$c_\go$ is fixed by the normalization of the $\go_\gep$. Now the claim
is that
\begin{equation}
\label{eq:claimDH0}
\bar \go_\gep (\gs) \, \cong \, \begin{cases}
a(\gep) \go_0(\gs) & \text{ for } \gs\ge \gep\, ,
\\
\gep^{-2} \nu_0\left( \gep^{-2}\gs \right) & \text{ for } \gs< \gep\, ,
\end{cases}
\end{equation} 
where the symbol $\cong$ stands for {\sl approximately equal} and $a(\gep)$ can be evaluated by noting that the two terms should be essentially the same near $\gs=\gep$, which using \eqref{eq:CnuCgo} gives
\begin{equation}
\label{eq:claimDH0.1}
a(\gep) \stackrel{\gep \searrow 0} \sim \frac{c_\nu}{c_\go} \gep^{2\ga}\, .
\end{equation}  
Now we can go back to \eqref{eq:Leps2}, which can be computed using \eqref{eq:claimDH0}: for $\eps$ small,
\begin{equation}
\begin{split}
\cL (\gep) \,&= \, \int_{c_- \gep^2}^{\gep} \log(1+ \gs)  \ \bar \go_\gep(\gs) \dd \gs
+ \int_{\gep}^{c_+} \log(1+ \gs) \ \bar\go_\gep(\gs) \dd \gs
\\
&\cong \, C_\mu \gep^{2\ga} + \O{\gep^{1+\ga}},   \  \ \ \ \text{ with } 
C_\mu =\frac{c_\nu}{c_\go}  \int_{0}^{c_+} \log(1+ \gs) \ \go_0(\gs) \dd \sigma\, . 
\end{split}
\end{equation}

\subsection{Strategy of the proof and   structure of the paper} 
\label{sec:approach}
The strategy for our proof is very much inspired by \cite{DH}.  In short: we will construct a family of probability measures $\gamma_\eps$ by pasting together $\nu_0$ and $\omega_0$ (essentially, the right-hand side of \eqref{eq:claimDH0}) and use this to obtain an estimate of the Lyapunov exponent $\cL(\eps)$ which can be shown to be correct to within a remainder term which is of the order indicated in Theorem~\ref{th:main} as $\eps \searrow  0$. 

But this approach faces two main difficulties: the rigorous construction and analysis of  $\nu_0$ and $\go_0$ to 
get to a definition of $\gamma_\eps$ and (worse!) the fact in any case $\gamma_\eps$ is not the invariant probability. 
Let us elaborate on this: 

\medskip

\begin{enumerate}
\item To define the right-hand side of \eqref{eq:claimDH0} 
  one needs to construct and control $\nu_0$ and $\go_0$. This work is in part already done for $\nu_0$ \cite{deC}, but it is lacking for $\go_0$. Nevertheless, the road is paved for this analysis -- notably, we are going to use Mellin transform techniques similar to those of \cite{deC} --  and the difficulty that one needs to face are of technical nature.  
\item More substantially, $\gamma_\gep$ is not the invariant probability: we certainly expect it to be close to it, but in which sense and for what reasons? This step will be performed by introducing a family of norms that allows us to state in a precise and quantitative fashion that
$\gamma_\gep$  is one step {\sl approximately invariant}.  To use this to obtain meaningful estimates, we will show these norms are contracted by the action of the Markov Kernel; this provides an explicit estimate on the distance between $\gamma_\gep$ and the invariant probability $\nu_\gep$ and, more important 
for us,   the control is fine enough to pass to the functional of the invariant we are after: the Lyapunov exponent $\cL(\gep)$.
\end{enumerate}

\medskip

Now we are going to go more deeply into the strategy  and, in particular, we explain the tools and the fundamental ideas to deal with item (2) of the list. For this we find more practical to work in the scale of regime II, even if this is to a large extent arbitrary. 
We introduce the 
 map $T_\eps$ defined by
\begin{equation}
\int f(\tau)  T_\eps \nu (\d \tau) 
\, =\, 
\int \int f \left( t g_\eps(s)\right)  \nu (\d s) \mu (\d t)
\label{eq:T_def_integral}
\end{equation}
for all measurable bounded $f$ where
\begin{equation}
g_\eps(s) := \frac{1+s}{1+\eps^2 s}\, .
\label{eq:g_def}
\end{equation}
One readily checks  that $T_\gep$ is the one step transition map for the law of the chain
defined in \eqref{eq:s_recursion}, and hence $T_\gep \nu_\gep=\nu_\gep$.
\medskip

Later on we will need the analog of $T_\gep$, but in regime I.  So we introduce  $S_\gep$: 
 given a (finite) measure $\go$ on $(0, \infty)$ we write $S_\gep\go$ for the (dual) action of $S_\gep$ 
on $\go$, i.e. $S_\gep\go  (B)= \int S_\gep (\gs, B) \go(\dd \gs)$ for every Borel subset of $(0, \infty)$. 
Letting 
\begin{equation}
h_\eps(\sigma) := \frac{\eps^2 + \sigma}{1 + \sigma},
\label{eq:h_def}
\end{equation}
we have
\begin{equation}
\int f(\sigma)  S_\eps \omega (\d \sigma) 
=
\int \int f \left( z h_\eps(s)\right)  \omega (\d s)  \mu (\d z)\, ,
\label{eq:S_def_integral}
\end{equation}
for all measurable bounded $f$ and this is of course an alternative way to define $S_\gep\go$.

\medskip

We now introduce  for all measures $\nu$ supported on $[0,\infty)$ 
\begin{equation}
\label{eq:L[eps]}
L_\gep[\nu]\, :=\, \int \log(1+ \gep^2 s) \nu (\dd s)\,
\end{equation}
and observe  that $L_\eps[\nu_\eps] = \cL (\eps)$ (cf. \eqref{eq:Leps2})

\medskip

We also introduce the family of functionals on signed measures
\begin{equation}
\vertiii{\eta}_\beta \, :=\,  \int_0^\infty \tau^{\beta - 1} \left| G_\eta(\tau) \right| d \tau\, ,
\end{equation}
where we have used  the standard notation for the cumulative (tail) distribution $G$ and we take the occasion to 
introduce also the companion quantity $F$:
\begin{equation}
\label{eq:cumulative}
G_\nu (t) \, :=\, \nu ((t, \infty))
\ \ \ \text{ and } \ \ \
F_\nu (t) \, :=\, \nu ((-\infty,t])\, ,
\end{equation}
where, in general, $\nu$ is again a signed measure on $\bbR$.  Let us underline that, as is customary, by signed measure we mean a signed measure of finite total variation, that is the difference of two finite non negative measures. 
\smallskip

We have the following:
\medskip

\begin{lemma}
 \label{th:mainlem}
 For every $\gep \in(0,1)$, $\gb\in [0,1]$ and all probability measures $\nu_1$ and $\nu_2$,   supported in $[0, \infty)$, with  $L_\eps[\nu_2]$ and $L_\eps[\nu_1] $ finite,
we have
\begin{equation}
 \label{eq:mainlem2}
\left \vert L_\gep[\nu_1]- L_\gep[\nu_2] \right \vert\, \le \, \gep^{2\gb} \vertiii{\nu_1-\nu_2}_\beta\, .
\end{equation}
Moreover if $0< \gb < \ga$ there exists 
$c_\gb=c_\gb(\mu)$
independent of $\eps$
such that for every $\nu $ with $\vertiii{\nu}_\beta< \infty$
\begin{equation}
 \label{eq:mainlem3}
\vertiii{\nu_\gep - \nu}_\beta \, \le\,  c_\beta \vertiii{T_\gep \nu - \nu}_\beta\ .
 \end{equation}
 \end{lemma}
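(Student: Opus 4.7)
The two assertions of Lemma~\ref{th:mainlem} have rather different natures. For \eqref{eq:mainlem2} the starting point is integration by parts. Writing $\log(1+\eps^2 s)=\int_0^s \eps^2/(1+\eps^2 u)\,du$ and exchanging integrals (justified by the finiteness of $L_\eps$ on $|\nu_1|+|\nu_2|$) yields
\begin{equation*}
L_\eps[\nu_1]-L_\eps[\nu_2] \,=\, \int_0^\infty \frac{\eps^2}{1+\eps^2 s}\, G_{\nu_1-\nu_2}(s)\,ds.
\end{equation*}
The claim then reduces to the pointwise bound $\eps^2/(1+\eps^2 s) \leq \eps^{2\beta} s^{\beta-1}$ for $s>0$, $\beta\in[0,1]$. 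Rewritten as $(\eps^2 s)^{1-\beta} \leq 1+\eps^2 s$, this is the instance $u=\eps^2 s$ of the elementary inequality $u^{1-\beta}\leq\max(1,u)\leq 1+u$ for $u\geq 0$.

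For \eqref{eq:mainlem3} the crux will be a contractivity property of $T_\eps$ on the subspace of signed measures with zero total mass:
\begin{equation*}
\vertiii{T_\eps \eta}_\beta \,\leq\, (1-\eps^2)\,\bbE[Z^\beta]\,\vertiii{\eta}_\beta, \qquad \eta([0,\infty))=0.
\end{equation*}
To prove this I will compute $G_{T_\eps\eta}$ explicitly. The map $s\mapsto g_\eps(s)$ is strictly increasing from $1$ to $\eps^{-2}$ on $[0,\infty)$, with inverse $\psi_\eps(u)=(u-1)/(1-\eps^2 u)$; by a direct computation and using $\eta([0,\infty))=0$ to kill the boundary term that would otherwise arise,
\begin{equation*}
G_{T_\eps\eta}(\tau) \,=\, \int_{\eps^2\tau<t\leq\tau} G_\eta\bigl(\psi_\eps(\tau/t)\bigr)\,\mu(dt).
\end{equation*}
Triangle inequality, Fubini, and the successive substitutions $u=\tau/t$ and $s=\psi_\eps(u)$ (whose Jacobian is $g_\eps'(s)=(1-\eps^2)/(1+\eps^2 s)^2$) then yield
\begin{equation*}
\vertiii{T_\eps\eta}_\beta \,\leq\, (1-\eps^2)\,\bbE[Z^\beta]\int_0^\infty \frac{(1+s)^{\beta-1}}{(1+\eps^2 s)^{\beta+1}}\,|G_\eta(s)|\,ds,
\end{equation*}
and the last integrand is dominated by $s^{\beta-1}|G_\eta(s)|$ thanks to $(1+s)^{\beta-1}\leq s^{\beta-1}$ (valid for $\beta\leq 1$) and $(1+\eps^2 s)^{\beta+1}\geq 1$. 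The hypothesis $0<\beta<\alpha$ enters at precisely this point: by convexity of $\beta\mapsto\bbE[Z^\beta]$, combined with $\bbE[Z^0]=\bbE[Z^\alpha]=1$, the factor $\bbE[Z^\beta]$ is strictly less than $1$ exactly on the interval $(0,\alpha)$.

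With contractivity established, the conclusion follows from a standard Neumann-series argument. Taking $\nu$ to be a probability measure (the case of interest in applications), $\eta:=\nu_\eps-\nu$ is mass-zero, and since $T_\eps\nu_\eps=\nu_\eps$ we have $(I-T_\eps)\eta = T_\eps\nu-\nu=:r$, where $r$ is also mass-zero. Iterating gives $\eta=\sum_{n\geq 0} T_\eps^n r$, and the contractivity delivers $\vertiii{T_\eps^n r}_\beta\leq\bigl((1-\eps^2)\bbE[Z^\beta]\bigr)^n\vertiii{r}_\beta$; summing the geometric series proves \eqref{eq:mainlem3} with $c_\beta:=1/(1-\bbE[Z^\beta])$, independent of $\eps$. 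The main technical obstacle is the contraction estimate itself: the successive changes of variable must be executed carefully so that no spurious $\eps$-dependent factor intrudes, and the fortunate algebraic fact driving the proof is that the Jacobian factor $(1-\eps^2)/(1+\eps^2 s)^{\beta+1}$ combines with the weight $(1+s)^{\beta-1}$ produced by $g_\eps(s)^{\beta-1}$ to yield a kernel bounded pointwise by $s^{\beta-1}$.
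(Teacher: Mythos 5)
Your proposal is correct and follows essentially the same route as the paper: the same integration by parts and the same pointwise bound $\eps^2/(1+\eps^2 s)\le\eps^{2\beta}s^{\beta-1}$ for \eqref{eq:mainlem2}, and for \eqref{eq:mainlem3} the same contraction estimate on mass-zero measures driven by the identical kernel bound $(1-\eps^2)(1+s)^{\beta-1}(1+\eps^2 s)^{-\beta-1}\le s^{\beta-1}$ together with $\bbE[Z^\beta]<1$ on $(0,\alpha)$, yielding the same constant $c_\beta=(1-\bbE[Z^\beta])^{-1}$. The only differences are cosmetic: you obtain the contraction by a change of variables in the $G$-representation where the paper integrates by parts and applies Fubini in the density representation, and you package the last step as a Neumann series where the paper simply rearranges the triangle inequality (both implicitly using that $\vertiii{\nu_\eps-\nu}_\beta<\infty$).
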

 
 \medskip

Lemma~\ref{th:mainlem} will be proven in Section~\ref{sec:contractivity} below.
We draw the attention of the reader on the fact that the important estimate \eqref{eq:mainlem3} is a contractivity property 
of $T_\eps$  with respect to $\vertiii{\cdot}_\beta$ for $0 < \beta < \alpha$. In fact 
\begin{equation}
\label{eq:Gstep1.s}
\vertiii{\nu_\gep - \nu}_\beta \, =\, \vertiii{T_\gep\nu_\gep - \nu}_\beta \, \le \, \vertiii{T_\gep (\nu_\gep - \nu)}_\beta +
 \vertiii{T_\gep \nu - \nu}_\beta\, ,
\end{equation}
from which it is apparent that \eqref{eq:mainlem3} follows once we have the claimed contractive property of  $T_\gep$.

It should also be quite clear at this stage that the key is to find a test measure $\nu$ that makes 
$\vertiii{T_\gep \nu - \nu}_\beta$ suitably small, so we can apply \eqref{eq:mainlem3} and then \eqref{eq:mainlem2},
with $\nu_1=\nu_\gep$ and $\nu_2=\nu$. The test measure of course corresponds to $\gamma_\gep$ presented informally in
Section~\ref{sec:DH}, although we will need to make a more precise definition, and we will find it convenient to do so in terms of distribution functions rather than densities since these appear more naturally in our arguments.
In any case, building $\gamma_\gep$ requires building first $\nu_0$ and $\omega_0$ and establishing properties of these 
two measures. This is done in Section~\ref{sec:omega0} and Section~\ref{sec:asymptotics}: we postpone the overview of these two sections  
and complete the argument that we are outlining.
In Section~\ref{sec:norm_bounds}  we   show in a rather straightforward way 
that
\begin{equation}
L_\eps[\gamma_\eps] \, =\,  C_\mu \eps^{2\alpha} + \Oeps{1+\alpha}\, ,
\end{equation}
and in Section~\ref{sec:ghat} we prove, on 
 the basis of a much less straightforward computation, 
 that
\begin{equation}
\label{eq:insecas}
\vertiii{T_\eps \gamma_\eps - \gamma_\eps}_\beta
\, =\, 
\Oeps{2\alpha - \beta}
+\Oeps{\alpha+\delta - \beta}
\end{equation}
for all $\beta \in (0,1)$ and for any $\delta>0$  chosen so that  $\bbE[Z^z]=1$, cf. \eqref{eq:A_def}, is solved in the strip  $\Re z \in [\alpha,\alpha+\delta]$
only by $z=\ga$ (as we shall see, the hypothesis that $\mu$ has a density largely suffices for the existence of such a $\gd$). Hence by using the two inequalities in
 Lemma~\ref{th:mainlem}
 we get to
\begin{equation}
\label{eq:Gstep1.f}
\begin{split}
  \left|
  \mathcal{L} (\eps) - L_\eps[\gamma_\eps]
  \right|
  \le 
  \eps^{2\beta} \vertiii{ \nu_\eps - \gamma_\eps}_\beta
  &\le 
  c_\beta \eps^{2\beta} 
  \vertiii{T_\eps \gamma_\eps - \gamma_\eps}_\beta
  \\& \ \ =
  \Oeps{2\alpha + \beta} + \Oeps{\alpha+\beta+\delta}\, ,
\end{split}
\end{equation}
and then choosing $\beta$ so that
\begin{equation}
\alpha > \beta > (\alpha - \delta) \wedge 0,
\end{equation}
we obtain Theorem~\ref{th:main}
with $\varkappa= \min(\beta, \beta+ \gd -\ga, 1-\ga) > 0$ .
This therefore concludes the argument.

\medskip

Let us spend a few words on the content of Section~\ref{sec:omega0} and Section~\ref{sec:asymptotics}:
in Section~\ref{sec:omega0} we  show the existence of the fixed points $\nu_0$ and $\omega_0$ as weak limits:
$\nu_\gep$ is a probability, but $\omega_0$ is instead an infinite measure which inherits the normalization $\go_\gep((y, \infty))=1$ for an appropriately chosen $y>0$. As we have already pointed out in Section~\ref{sec:DH},
the characterization  of $\nu_0$ (existence, uniqueness and asymptotic properties of $G_{\nu_0}$) is well known    \cite{kesten}, but we provide a (simple) proof for completeness and to introduce the methods used to show the existence of $\omega_0$.
In Section~\ref{sec:asymptotics} we characterize the behaviour of $G_{\nu_0}(s)$ for $s$ large and $G_{\omega_0}(s)$ for $s$ small, including some control on the subdominant terms (thereby improving on the result of \cite{kesten}, which only applies to $\nu_0$ and only gives the leading order behavior): an explicit control of the type $O(s^{-\gd})$ (for some $\gd>0$) on the ratio of the remainder to the leading term is crucial
 for our approach, as is clear from \eqref{eq:Gstep1.s}-\eqref{eq:Gstep1.f}.
 The proof is based on the characterization of the domain of analyticity of the Mellin transforms of $G_{\nu_0}$ and $G_{\omega_0}$.  As shown in \cite{deC}, the poles of the Mellin transform of $G_{\nu_0}(s)$ are either roots of \eqref{eq:A_def} or at integer translates of those roots (the latter are not important for our result); the relevant argument is summarized in Section~\ref{sec:nu_Mellin} for completeness.  
We  are not able to control the behavior of the Mellin transform well enough to use it to directly obtain an asymptotic expression for $G_{\nu_0}$ with control on the remainder, but we are instead able to do this (in Section~\ref{sec:nu_asymptotics}) at the level of the  primitive of  $G_{\nu_0}$ and then then recover the desired result on $G_{\nu_0}$ by reinjecting the estimate into 
the fixed point equation satisfied by $G_{\nu_0}$.
 We will also verify that there is a positive $\delta$ such that \eqref{eq:A_def} has no other roots with $\Re z \in [\alpha,\alpha+\delta]$, so that the subleading terms in the expansion are in fact smaller then the leading term by a factor $\O{s^{-\delta}}$.
In Section~\ref{sec:omega_asymptotics} we use the same techniques as in Sections~\ref{sec:nu_Mellin} and~\ref{sec:nu_asymptotics} to obtain similar results for the behavior of  $G_{\omega_0}$ near $0$.

\medskip

%

\subsection{Perspectives} 
\label{sec:perspectives}
Before embarking on the proof, we shall make a few remarks about the assumptions of Theorem~\ref{th:main} and perspectives for generalizations. 

\smallskip 

\begin{itemize}
\item
When \eqref{eq:A_def} has complex roots $a$ with $\Re a = \alpha$ (not covered by the present result: as shown in Section \ref{sec:asymptotics}, assumptions (2) and (3) of Theorem~\ref{th:main} ensure that complex roots have real part larger than $\ga$), the behavior in the intermediate regime can be given by a linear combination with the associated stationary measures, without violating the monotonicity of $G_{\nu_0}$.  
In this situation the leading Lyapunov exponent may instead behave like
\begin{equation}
\label{eq:DH_logper}
\cL(\gep) \stackrel{\gep\searrow 0} \sim \gep^{2\ga} H_\mu ( \log \eps)\, ,
\end{equation}
where $H_\mu $ is a nonconstant periodic function, which has been obtained by an exact calculation for a specific choice of $\mu$ in \cite{DH}.
\item
It is probably natural to expect that
Theorem~\ref{th:main} holds without conditions (2) and (3),  assuming instead  that there is a $\delta > 0$ such that \eqref{eq:A_def} has no complex solutions with $\Re a \in [\alpha,\alpha+\delta]$ (rather than deriving this using condition (3)), and that $\E{Z^a}<\infty$ for some $a > 0$.
Generalizing our approach in this fashion would at least complicate many of the estimates used
and require, in particular, a new approach to the  results in Section~\ref{sec:asymptotics} on asymptotic behavior of 
$G_{\omega_0}$ and $G_{\nu_0}$ would be needed.
If this could be done, the same methods might also be used to show that the log-periodic behavior in \eqref{eq:DH_logper} holds for other distributions $\mu$ besides the special case where it has been obtained so far \cite{deC,DH}. All of this however is not 
straightforward. 
\item
The cases excluded by assumption (1) are discussed in \cite{DH}: by replacing $Z$ with $1/Z$, as a corollary of Theorem~\ref{th:main} we obtain a similar result for distributions with $\E{\log Z} > 0$ and $\E{1/Z} > 1$.  The case $\E{\log Z} = 0$ remains of considerable interest, since it corresponds to the critical point of the statistical mechanical models discussed above, and obtaining more control over the behavior of $\mathcal{L}(\eps)$ as this condition is approached (along the lines of the discussion in \cite{ShankarMurthy}) appears to be worthwhile.  The case where $\E{Z} \le 1$, on the other hand, should merely exhibit a weakening of the singularity at $\eps = 0$.
\item
Finally, although we have focused on a concrete example of physical relevance, the method used here has the potential to generalize to matrices of other forms, most immediately other $2 \times 2$ matrices whose off-diagonal entries are $\Oeps{}$.
\end{itemize}

\section{Estimating the Lyapunov exponent with almost-stationary points of $T_\eps$}
\label{sec:contractivity}
In this section we will prove Lemma~\ref{th:mainlem}. The assertions in \eqref{eq:mainlem2} and~\eqref{eq:mainlem3} are separate.   
For brevity, we will take {\sl absolutely continuous} to mean absolutely continuous with respect to the Lebesgue measure on $(0,\infty)$.
Recall that we denote  absolutely continuous measures and their densities with the same symbol.

%

\medskip

\noindent
{\it Proof of \eqref{eq:mainlem2}.}

For any signed measure $\nu$ such that $G_\nu(x)$ is bounded and $L_\gep(\vert \nu \vert)< \infty$, we can integrate \eqref{eq:L[eps]} by parts to obtain
\begin{equation}
L_\eps[\nu] \, 
=\, 
\eps^2 
\int_0^\infty \frac{G_\nu (x) }{1+\eps^2 x} \d x\, . 
\end{equation}
This applies to $\nu_1 - \nu_2$, giving
\begin{equation}
\left|
L_\eps[\nu_1] - L_\eps[\nu_2]
\right|
\,=\,
\eps^2 \left| 
\int_0^\infty 
\frac{ G_{\nu_1}(x) - G_{\nu_2}(x) }{1+\eps^2 x}
\d x
\right|
\, \le\, 
\eps^2
\int_0^\infty 
\frac{ \left| G_{\nu_1}(x) - G_{\nu_2}(x) \right|}{1+\eps^2 x}
\d x \, .
\end{equation}
Noting that
\begin{equation}
\frac{1}{1+z} \,\le\,  z^{\beta-1}\, ,
\end{equation}
for all $\beta \in [0,1]$ and  $z >0$, this implies
\begin{equation}
\left|
L_\eps[\nu_1] - L_\eps[\nu_2]
\right|
\le \eps^{2\beta} \vertiii{\nu_1 - \nu_2}_\beta,
\end{equation}
that is, \eqref{eq:mainlem2}.

\qed

\medskip 

We now move to \eqref{eq:mainlem3}. The bulk of the proof is in the following lemma.

\medskip

\begin{lemma}
For any probability measures $\nu_1,\nu_2$ supported on $[0,\infty)$ such that
both $\vertiii{\nu_1}$ and $\vertiii{\nu_2}$ are finite 
 and any $\beta \in (0,1)$,
\begin{equation}
\vertiii{T_\eps \nu_1 - T_\eps \nu_2}_\beta 
\, <\,  
\bbE\left[ Z^\gb\right]
\vertiii{\nu_1 - \nu_2}_\beta\, .
\end{equation}
\label{lem:little_contractivity}
\end{lemma}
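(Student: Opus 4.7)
The plan is to exploit a natural decomposition of $T_\eps$: the action $s \mapsto t g_\eps(s)$ on measures is the composition of pushing forward by $g_\eps$ followed by dilation by $t$, then averaging $t$ against $\mu$. Since $\eta := \nu_1 - \nu_2$ has total mass zero, the pushed-forward signed measure $g_{\eps,*}\eta$ (supported in $[1,1/\eps^2)$) has cumulative tail $G_{g_{\eps,*}\eta}(\tau) = G_\eta(g_\eps^{-1}(\tau)) \mathbf{1}\{1 \le \tau < 1/\eps^2\}$ (the boundary values contribute 0 precisely because of the zero-mass condition). I would then use this together with $G_{M_t\xi}(\tau) = G_\xi(\tau/t)$ (where $M_t$ is dilation) to get the pointwise identity
\begin{equation*}
G_{T_\eps \eta}(\tau) = \int \mu(\d t)\, G_{g_{\eps,*}\eta}(\tau/t).
\end{equation*}

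From here, the triangle inequality and Fubini, combined with the simple observation $\vertiii{M_t \xi}_\beta = t^\beta \vertiii{\xi}_\beta$, give
\begin{equation*}
\vertiii{T_\eps \eta}_\beta \le \int \mu(\d t)\, t^\beta \vertiii{g_{\eps,*}\eta}_\beta = \bbE[Z^\beta] \vertiii{g_{\eps,*}\eta}_\beta,
\end{equation*}
so the entire claim reduces to showing the strict contraction $\vertiii{g_{\eps,*}\eta}_\beta < \vertiii{\eta}_\beta$ whenever $\eta \ne 0$.

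To establish the last inequality, I would change variables $\tau = g_\eps(u)$ in the definition of $\vertiii{g_{\eps,*}\eta}_\beta$, obtaining
\begin{equation*}
\vertiii{g_{\eps,*}\eta}_\beta = \int_0^\infty g_\eps(u)^{\beta-1}\, g_\eps'(u)\, |G_\eta(u)|\, \d u = \int_0^\infty (1-\eps^2) (1+u)^{\beta-1} (1+\eps^2 u)^{-\beta-1} |G_\eta(u)|\,\d u,
\end{equation*}
and compare the weight pointwise with $u^{\beta-1}$. A direct manipulation shows that the desired inequality is equivalent to $(1-\eps^2)\bigl(u/(1+u)\bigr)^{1-\beta} < (1+\eps^2 u)^{\beta+1}$ for all $u>0$, which is immediate since $\beta \in (0,1)$ and $\eps \in (0,1)$ force the left-hand side to be strictly less than $1-\eps^2 < 1$ while the right-hand side is at least $1$. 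Since this strict inequality holds for every $u>0$ and $|G_\eta|$ is not a.e.\ zero when $\eta \ne 0$, the contraction is strict.

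The only delicate point is verifying the cancellation of $G_{g_{\eps,*}\eta}$ outside $[1, 1/\eps^2)$, which is what makes the map a genuine contraction rather than one with an additional additive constant: this is the sole place where the assumption that $\nu_1$ and $\nu_2$ are \emph{probability} measures (rather than arbitrary positive measures) enters. Everything else is bookkeeping; the finiteness hypothesis $\vertiii{\nu_i}_\beta < \infty$ is used just to ensure that the integrals above are well defined and Fubini applies.
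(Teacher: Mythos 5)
Your proof is correct, and although it ultimately rests on the same pointwise inequality as the paper's argument, it gets there by a genuinely different and arguably cleaner route. The paper starts from $G_{T_\eps \nu}(\tau)=\int G_\mu(\tau/g_\eps(s))\,\nu(\d s)$, integrates by parts (which uses the density of $\mu$ and the boundary cancellation $G_\nu(0)=G_\nu(\infty)=0$), applies Fubini, computes $\int_0^\infty \tau^\beta \mu(\tau/g_\eps(s))\,\d\tau=(g_\eps(s))^{\beta+1}\,\bbE[Z^\beta]$, and concludes from $(g_\eps(s))^{\beta+1}/(1+s)^2<s^{\beta-1}$. You instead factor $T_\eps$ as pushforward by $g_\eps$ followed by a $\mu$-average of dilations, use the exact homogeneity $\vertiii{M_t\xi}_\beta=t^\beta\vertiii{\xi}_\beta$ to peel off $\bbE[Z^\beta]$, and reduce everything to $g_\eps(u)^{\beta-1}g_\eps'(u)<u^{\beta-1}$; since $g_\eps'(u)=(1-\eps^2)/(1+\eps^2u)^2$, this is literally the paper's inequality with the harmless extra factor $1-\eps^2$. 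The zero-total-mass condition plays the same role in both proofs: it kills $G_{g_{\eps,*}\eta}$ on $(0,1)$ for you, and the boundary terms of the integration by parts for the paper. What your version buys is that it never differentiates $\mu$, so it works for an arbitrary law of $Z$, whereas the paper's proof as written leans on assumption (3) of Theorem~\ref{th:main}. One cosmetic point you share with the paper: the strict inequality degenerates when $\nu_1=\nu_2$; otherwise your strict pointwise comparison of weights does deliver strictness, which the paper's final display (written with $\le$) glosses over.
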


\begin{proof}
Choosing 
$f(x) = \ind_{(\tau, \infty)}(x)$,
\eqref{eq:T_def_integral} becomes
\begin{equation}
G_{T_\eps \nu} (\tau)
=
\int G_\mu\left( \frac{\tau}{g_\eps (s)} \right)
\nu ( \d s)\, . 
\label{eq:T_mu_CDF}
\end{equation}
Letting $\nu := \nu_1 - \nu_2$, we have $G_\nu(0) = G_\nu(\infty) = 0$.  
We can use this and the fact that $\mu$ is absolutely continuous to integrate by parts, obtaining
\begin{equation}
\begin{split}
\left|
G_{T_\eps \nu} (\tau)
\right|
&=
\left|
\int_0^\infty
\tau
\frac{1-\eps^2}{(1+s)^2}
\mu \left( \frac{\tau}{g_\eps (s)} \right)
G_\nu ( s) \d s
\right|
\\ &
\le 
\int_0^\infty 
\frac{\tau}{(1+s)^2}
\mu \left( \frac{\tau}{g_\eps (s)} \right)
\left|G_\nu ( s)\right| \d s
\end{split}
\end{equation}
so that
\begin{equation}
\vertiii{T_\eps \nu}_\beta
\le 
\int_0^\infty \frac{|G_\nu(s)|}{(1+s)^2} 
\int_0^\infty \tau^\beta \mu \left( \frac{\tau}{g_\eps(s)} \right)
\d \tau \d s\, .
\end{equation}
Noting that
\begin{equation}
\int_0^\infty \tau^\beta \mu \left( \frac{\tau}{g_\eps(s)} \right)
\d \tau\, 
=\,  \left( g_\eps (s) \right)^{\beta+1} 
\int_0^\infty z^\beta \mu(z) \dd z\, =\, \left( g_\eps (s) \right)^{\beta+1}  \bbE \left[ Z^\gb \right]\, ,
\end{equation}
and
\begin{equation}
\frac{\left( g_\eps (s) \right)^{\beta+1}}{(1+s)^2}
\, =\, 
\frac{(1+s)^{\beta - 1}}{(1+\eps^2 s)^{\beta+1}}\, <\, (1+s)^{\beta - 1}\, < \,  
 s^{\beta - 1}\, ,
\end{equation}
for $\beta \in (0,1)$ and $s>0$, we have 
\begin{equation}
\vertiii{T_\eps \nu }_\beta
\le \E{Z^\beta} \vertiii{\nu}_\beta \ .
\end{equation}
Since 
\begin{equation}
G_{T_\eps \nu_1}(\tau) - G_{T_\eps \nu_2} (\tau) 
= G_{T_\eps \nu}(\tau), 
\end{equation}
the conclusion follows immediately.
\end{proof}

\medskip

\noindent
{\it Proof of \eqref{eq:mainlem3}.}
To complete the proof of \eqref{eq:mainlem3},
we note that $\vertiii{\cdot}_\beta$ satisfies the triangle identity, and apply this along with Lemma~\ref{lem:little_contractivity} to obtain
\begin{equation}
\begin{split}
\vertiii{\nu_\eps - \nu}_\beta
=\,&
\vertiii{T_\eps \nu_\eps - \nu}_\beta
\le 
\vertiii{T_\eps \nu_\eps - T_\eps \nu}_\beta
+
\vertiii{T_\eps \nu - \nu}_\beta
\\ \le\,  &
\bbE\left[ Z^\gb \right]
\vertiii{\nu_\eps - \nu}_\beta
+
\vertiii{T_\eps \nu - \nu}_\beta\, .
\end{split}
\end{equation}
Then noting that $\bbE\left[ Z^\gb \right] < 1$ exactly for $\beta \in (0,\ga )$, we have the desired bound with
\begin{equation}
c_\beta \, =\,  \left( 
1 - \bbE\left[ Z^\gb \right] 
\right)^{-1}.
\end{equation}
\qed

\section{Existence of the limiting fixed points $\nu_0$ and $\omega_0$}
\label{sec:fixed_points}
\label{sec:omega0}

As alluded to in Section~\ref{sec:intro}, the definitions of the maps $T_\eps$ and $S_\eps$ are perfectly valid for $\eps = 0$, and their fixed points in this limiting case play an important role in our proof.  However, in this case the relationship of these operators to the random matrix product becomes singular, and we can no longer use the same techniques to establish the existence and uniqueness of these fixed points.

In this section, we use compactness and continuity arguments to establish these  results.  As already pointed out in the introduction,
$\nu_0$ has already been built and (partly) studied elsewhere. Our indirect approach to $\nu_0$, i.e. via $\nu_\gep$, may appear a bit convoluted 
since $\nu_0$ can be approached directly as the invariant probability of a Markov chain. We draw however the attention of the reader 
on the fact 
that standard 
approaches, like  Foster-Lyapunov criteria  (see e.g. \cite{MT}), 
are rather involved -- above all in the case of a non-countable state space -- and yield a lot of information that we do not need. That is, following \cite{MT}, we can find 
a Lyapunov function starting from the monotonicity properties of $x^\gb\ind_{(0, \infty)}(x)$, $\gb \in (0, \ga)$ under the action of the Markov kernel, as we do below for $\gep>0$; but then the completion of the proof 
requires verifying a  ``petite sets condition" (which is rather straightforward if one assumes that for every $\eta>0$ small 
$\min_{t\in [c_-+ \eta, c_+-\eta]}\mu(t)>0$, but in general becomes rather laborious) and the final result includes  uniqueness and (time!) mixing properties. 
The approach  in \cite{HM}, on the other hand, cannot in general be directly applied to our Markov kernel: some iterated version of the kernel should be used.
Once again the method also yields mixing. Our approach is not constructive and a priori it does not yield even 
uniqueness (for $\nu_0$ uniqueness is easily recovered, but uniqueness in reality is not even required for the rest of our proof to go thorough), but it is very concise, self-contained and it shoots for the information we really need. 
More importantly, a similar method also applies to the treatment of $\omega_0$, which is not a probability measure, and which is therefore not covered by the more usual techniques.

\medskip

Here and in the rest of this section all measures are Borel measures supported on $(0,\infty)$.

\begin{proposition}
\label{lem:nu0}
There is a probability measure $\nu_0$ such that $T_0 \nu_0 = \nu_0$,
and $\nu_0$ is absolutely continuous.
\end{proposition}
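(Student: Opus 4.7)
The plan is to construct $\nu_0$ as a weak subsequential limit of $\{\nu_\eps\}_{\eps \in (0,1)}$ and then pass to the limit in the identity $T_\eps \nu_\eps = \nu_\eps$.

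First I would establish tightness of the family $\{\nu_\eps\}$, uniformly in $\eps$. Pick any $\beta \in (0, \alpha)$, so that $\bbE[Z^\beta] < 1$ by convexity of $\beta \mapsto \bbE[Z^\beta]$. The measure $\nu_\eps$ inherits compact support from \eqref{eq:support} (via the blow-up $\sigma = \eps^2 s$), so $\int s^\beta \nu_\eps(\dd s)$ is a priori finite. Testing the identity $T_\eps \nu_\eps = \nu_\eps$ against $s \mapsto s^\beta$, using $g_\eps(s) \leq 1+s$ and the subadditivity inequality $(1+s)^\beta \leq 1 + s^\beta$ valid for $\beta \in (0,1)$, one obtains
\begin{equation*}
\int s^\beta \nu_\eps(\dd s) \, = \, \bbE[Z^\beta] \int g_\eps(s)^\beta \nu_\eps(\dd s) \, \leq \, \bbE[Z^\beta] \left( 1 + \int s^\beta \nu_\eps(\dd s) \right),
\end{equation*}
which rearranges to the $\eps$-independent bound $\int s^\beta \nu_\eps(\dd s) \leq \bbE[Z^\beta]/(1-\bbE[Z^\beta])$. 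Markov's inequality then yields tightness on $[0,\infty)$, and Prokhorov's theorem supplies a sequence $\eps_k \searrow 0$ along which $\nu_{\eps_k}$ converges weakly to some probability measure $\nu_0$ on $[0,\infty)$.

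Second, I would verify $T_0 \nu_0 = \nu_0$ by letting $k \to \infty$ in the identity $T_{\eps_k} \nu_{\eps_k} = \nu_{\eps_k}$. The identity $|g_\eps(s) - g_0(s)| = (1+s)\eps^2 s/(1+\eps^2 s) \leq \eps^2 s (1+s)$ shows that $g_\eps \to g_0$ uniformly on every compact subset of $[0,\infty)$. For a bounded continuous test function $f$, set $h_\eps(s) := \int f(t g_\eps(s)) \mu(\dd t)$ and $h_0(s) := \int f(t(1+s)) \mu(\dd t)$. Because the range of $t g_\eps(s)$ as $t$ varies over $[c_-, c_+]$ and $s$ over a compact set is itself compact, and $f$ is uniformly continuous there, the uniform convergence $g_\eps \to g_0$ on compacta lifts to $h_\eps \to h_0$ uniformly on compacta. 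A routine three-$\eta$ argument -- truncating to a large compact $[0, K]$ via tightness, then using uniform convergence of $h_{\eps_k}$ to $h_0$ on $[0,K]$ together with weak convergence of $\nu_{\eps_k}$ against the bounded continuous function $h_0$ -- gives
\begin{equation*}
\int f \dd (T_{\eps_k} \nu_{\eps_k}) \, = \, \int h_{\eps_k} \dd \nu_{\eps_k} \, \longrightarrow \, \int h_0 \dd \nu_0 \, = \, \int f \dd (T_0 \nu_0).
\end{equation*}
The left-hand side is also equal to $\int f \dd \nu_{\eps_k}$, which converges to $\int f \dd \nu_0$ by weak convergence. Hence $T_0 \nu_0 = \nu_0$.

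Absolute continuity of $\nu_0$ then follows directly from this fixed-point identity and assumption~(3) of Theorem~\ref{th:main}: for any Borel $B \subset (0,\infty)$ of zero Lebesgue measure,
\begin{equation*}
\nu_0(B) \, = \, \int \int \ind_B(t(1+s)) \, \mu(\dd t) \, \nu_0(\dd s) \, = \, \int \mu\!\left( \tfrac{B}{1+s} \right) \nu_0(\dd s) \, = \, 0,
\end{equation*}
because $B/(1+s)$ is Lebesgue-null for every $s \geq 0$ and $\mu$ is absolutely continuous. The main delicate point is the passage to the limit in the second paragraph, since both the operator and the measure depend on $\eps$; the explicit modulus for $g_\eps - g_0$ combined with the uniform moment bound from the first paragraph is precisely what makes the limit tractable.
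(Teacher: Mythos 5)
Your proposal is correct, and the overall architecture is the same as the paper's: a uniform $\beta$-moment bound (identical to the paper's, via $(g_\eps(s))^\beta \le (1+s)^\beta \le 1+s^\beta$ and $\bbE[Z^\beta]<1$ for $\beta\in(0,\alpha)$) gives tightness and a weak subsequential limit, and one then passes to the limit in $T_{\eps_k}\nu_{\eps_k}=\nu_{\eps_k}$. The execution differs in two places, both legitimately. For the limit passage, the paper works at the level of distribution functions and splits the error as $(T_0\nu_n - T_0\nu_0) + (T_n\nu_n - T_0\nu_n)$, controlling the second piece \emph{uniformly in the measure} by the pointwise bound $|F_{T_0\nu}(\tau)-F_{T_n\nu}(\tau)|\le \tau\|\mu\|_\infty\,\eps_n^2$, which uses boundedness of the density of $\mu$; you instead test against bounded continuous $f$, use the explicit modulus $g_0(s)-g_\eps(s)=(1+s)\eps^2 s/(1+\eps^2 s)$ to get locally uniform convergence of $s\mapsto \int f(tg_\eps(s))\,\mu(\dd t)$, and close with a truncation/three-$\eta$ argument. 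Your version is marginally more elementary in that it does not need $\mu$ to have a bounded density at this stage, at the cost of an extra tightness-based truncation. For absolute continuity, the paper proves the stronger statement that $F_{\nu_0}$ is continuously differentiable (by differentiating $F_\mu(\tau/(1+s))$ under the integral, again using $\|\mu\|_\infty<\infty$), whereas you verify only the null-set characterization $\nu_0(B)=\int \mu(B/(1+s))\,\nu_0(\dd s)=0$ for Lebesgue-null $B$; this suffices for the proposition as stated, though the paper's continuous-density conclusion is what is implicitly invoked later (e.g.\ the continuity of $G_\xi$ used to justify the inverse Mellin formula in Section~\ref{sec:nu_asymptotics}), so if one adopts your route that extra regularity would need to be recovered separately.
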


\begin{proof}
Recall \eqref{eq:support}.
We apply \eqref{eq:T_def_integral} to $f(x) = x^\beta$ to obtain 
\begin{equation}
\int \tau^\beta \nu_\eps (\d \tau)
=
\int \int
(t g_\eps(s))^\beta \nu_\eps ( \d s) \mu ( \d t)
=
\int t^\beta \mu ( \d t)
\int \left(g_\eps(s)\right)^\beta \nu_\eps (\d s).
\end{equation}
Noting that $(g_\eps (s))^\beta < (1+s)^\beta < 1 + s^\beta$ for $\beta \in (0,1)$ and $s, \eps > 0$, we see that
\begin{equation}
\label{eq:technu0}
\int \tau^\beta \nu_\eps (\d \tau )
<
\left(
\int z^\beta \mu ( \d z)
\right)
\left(
1+ \int \tau^\beta \nu_\eps(\d \tau)
\right)
\end{equation} 
which, for $\beta \in (0,\alpha)$, gives $\int \tau^\beta \nu_\eps (\d \tau) < k(\beta):= \bbE[Z^\gb]/(1-\bbE[Z^\gb])$ for all $\eps$.  Then by Markov's Inequality, we also have $G_{\nu_\eps}(x) < k(\beta) x^{-\beta}$, which implies that the $\nu_\eps$ are a tight family.  As a result, Prohorov's theorem implies that there is some sequence $\eps_n \to 0$ such that $\nu_{\eps_n}$ converges weakly.  Calling the limit $\nu_0$, we see that $\nu_0$ is a probability measure.
For readability, we will denote $T_n := T_{\eps_n}$, $\nu_n := \nu_{\eps_n}$, $g_n := g_{\eps_n}$ in the following.

We now need to confirm that $T_0 \nu_0 = \nu_0$.  We will do so by showing that $T_n \nu_n \to T_0 \nu_0$ weakly. We do this in two parts, by showing  that $T_0 \nu_n$ converges weakly to $T_0 \nu_0$ and that $F_{T_0\nu}(\tau)-F_{T_n\nu}(\tau)$ tends to zero for every $\tau$, uniformly in the choice of the probability measure $\nu$.

\begin{sloppypar}
Using \eqref{eq:T_def_CDF} we write
\begin{equation}
F_{T_0 \nu_0} (\tau) - F_{T_0 \nu_n} (\tau) 
=
\int
\left[
F_{\nu_0} \left( \frac{\tau}{g_0 (s)} \right)
- F_{\nu_n} \left( \frac{\tau}{g_0 (s)} \right)
\right]
\mu(\d s)\, .
\end{equation}
The integrand on the right hand side is bounded above by one and goes to zero \mbox{(Lebesgue-)}almost surely, hence also $\mu$ almost surely.  Then by dominated convergence the right hand side goes to 0 for all $\tau$, and indeed $T_0 \nu_n \to T_0 \nu_0$ weakly.
\end{sloppypar}

Then recalling \eqref{eq:T_mu_CDF}, for any probability measure $\nu$ we have
\begin{multline}
\left| 
F_{T_0 \nu} (\tau) - F_{T_n \nu}(\tau)
\right|
\, =\,
\int 
\mu
\left( \left[
\frac{\tau}{g_0(s)} , \frac{\tau}{g_n(s)}
\right) \right)
\nu (\d s)
\, 
\\ 
\le\, 
\tau
\left\|
 \mu 
\right\|_\infty
\left(
\frac{1}{g_n(s)}
- \frac{1}{g_0(s)}
\right)
\,
\le
\,  
\tau
\left\|
 \mu
\right\|_\infty
\eps_n^2,
\end{multline}
where we recall that $\Vert \mu\Vert_\infty$ is the maximum of the density of $\mu$ and 
 we have used the explicit definition~\eqref{eq:g_def} of $g_\eps(s)$ to obtain
\begin{equation}
\frac{1}{g_n(s)}
- \frac{1}{g_0(s)}
=
\eps_n^2 \frac{s}{1+s}
< \eps_n^2.
\end{equation}
Since this bound is uniform in $\nu$, it also implies that $T_0 \nu_n - T_n \nu_n \to 0$ weakly.  Together with $T_0 \nu_n \to T_0 \nu_0$, this implies that $T_n \nu_n \to T_0 \nu_0$, and since $T_n \nu_n = \nu_n \to \nu_0$ we see that $T_0 \nu_0 = \nu_0$.

This in turn implies that $\nu_0$ is absolutely continuous. Use the first identity in \eqref{eq:T_def_CDF}: noting that $g_0(s) = 1+s$ and 
\begin{equation}
\frac{\d}{\d \tau} 
F_\mu \left( \frac{\tau}{1+s} \right)\, 
=\,  \frac{1}{1+s} \mu \left( \frac{\tau}{1+s} \right)
\, \le\,   \| \mu \|_\infty
\, ,
\end{equation}
and that $\| \mu\|_\infty$ is finite (since the density $\mu$ is a continuous function with compact support),
then from \eqref{eq:T_mu_CDF} we see that $F_{T_0 \nu_0}$ is continuously differentiable with 
\begin{equation}
F_{T_0 \nu_0}'(\tau) 
= \int \frac{1}{1+s} \mu \left( \frac{\tau}{1+s} \right) \nu_0 (\d s)
< \infty.
\end{equation}
\end{proof}
\medskip

The situation will be similar for $\omega_0$, once we have fixed the normalization to obtain a nontrivial limit.  As a preliminary, 
\medskip

\begin{lemma}
\label{th:prelem}
For any $y < c_+-1$ and $\eps > 0$, 
$G_{\nu_\eps} (\eps^{-2} y)
> 0$ .
\end{lemma}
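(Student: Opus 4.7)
The plan is to lower-bound $M^* := \sup \supp \nu_\gep$ and show that $M^* > \gep^{-2} y$. Since $\supp \nu_\gep$ is closed we have $M^* \in \supp \nu_\gep$, so every open neighborhood of $M^*$ carries positive $\nu_\gep$-mass; consequently $\nu_\gep\bigl((\gep^{-2} y, \infty)\bigr) > 0$ as soon as $M^* > \gep^{-2} y$. If $y \le 0$ the lemma is immediate since $\nu_\gep$ is a probability on $(0,\infty)$, so it is enough to treat $y \in (0, c_+ - 1)$; this range is non-empty because $\bbE[Z]>1$ (assumption (1) of Theorem~\ref{th:main}) forces $c_+>1$.

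The key step is a self-consistency bound: $M^* \ge c_+ g_\gep(M^*)$. Suppose toward a contradiction that $c_+ g_\gep(M^*) > M^*$. By continuity of $g_\gep$ choose $\delta_1, \delta_2 > 0$ so small that $(c_+ - \delta_1) g_\gep(M^* - \delta_2) > M^*$; since $g_\gep$ is increasing, this gives $Z g_\gep(s) > M^*$ for every $(Z,s) \in (c_+ - \delta_1, c_+] \times (M^* - \delta_2, M^*]$. The hypothesis that $[c_-, c_+]$ is the smallest closed interval supporting $\mu$ yields $\mu\bigl((c_+ - \delta_1, c_+]\bigr) > 0$, and the definition of $M^*$ gives $\nu_\gep\bigl((M^* - \delta_2, M^*]\bigr) > 0$. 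Plugging into \eqref{eq:T_def_integral} and using the product structure of the kernel,
\begin{equation}
T_\gep \nu_\gep\bigl((M^*, \infty)\bigr) \, \ge \, \mu\bigl((c_+ - \delta_1, c_+]\bigr)\, \nu_\gep\bigl((M^* - \delta_2, M^*]\bigr) \, > \, 0,
\end{equation}
which contradicts $T_\gep \nu_\gep = \nu_\gep$ together with $\nu_\gep\bigl((M^*, \infty)\bigr) = 0$.

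Rewriting $M^* \ge c_+ g_\gep(M^*)$ as $\gep^2 (M^*)^2 + (1 - c_+) M^* - c_+ \ge 0$ and selecting the positive root of the corresponding quadratic, one obtains
\begin{equation}
M^* \, \ge \, \frac{(c_+ - 1) + \sqrt{(c_+-1)^2 + 4 c_+ \gep^2}}{2 \gep^2}\, .
\end{equation}
Since $c_+>1$ the square root strictly exceeds $c_+ - 1$, so $\gep^2 M^* > c_+ - 1 > y$, i.e.\ $M^* > \gep^{-2} y$, which concludes the proof.

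I expect the only real subtlety to be the uniformity in the second paragraph, namely that the rectangle $(c_+ - \delta_1, c_+] \times (M^* - \delta_2, M^*]$ contributes positive mass under the product kernel; this is guaranteed by the hypothesis that $[c_-,c_+]$ is the \emph{smallest} closed interval carrying $\mu$, combined with the fact that $M^* \in \supp \nu_\gep$ by definition. The rest is an elementary quadratic computation, and no use of the (absolute continuity or smoothness) assumption (3) is needed.
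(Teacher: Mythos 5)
Your proof is correct, and the core of it coincides with the paper's: both arguments identify the top of the support ($M^*$ in your notation, $b_\eps=\inf\{\tau: G_{\nu_\eps}(\tau)=0\}$ in the paper's) and extract the bound from the quadratic inequality $\eps^2 (M^*)^2+(1-c_+)M^*-c_+\ge 0$, arriving at the identical explicit expression $\bigl(c_+-1+\sqrt{(c_+-1)^2+4\eps^2 c_+}\bigr)/(2\eps^2)>(c_+-1)/\eps^2$. The difference is in how the key relation is justified. The paper proves the full fixed-point identity $b_\eps=c_+\,g_\eps(b_\eps)$ by manipulating the integral equation \eqref{eq:T_def_CDF2} for $G_{\nu_\eps}$, which requires a preliminary step ($b_\eps\le \eps^{-2}(c_+-\delta)$ for some $\delta>0$) and some care with "almost every $t$" statements. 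You instead prove only the inequality $M^*\ge c_+\,g_\eps(M^*)$ --- which is all the lemma needs --- by a topological support argument: if it failed, a rectangle $(c_+-\delta_1,c_+]\times(M^*-\delta_2,M^*]$ of positive $\mu\otimes\nu_\eps$-mass would push mass above $M^*$ under $T_\eps$, contradicting stationarity. This is a legitimate shortcut that uses only that $c_+\in\supp\mu$ and $M^*\in\supp\nu_\eps$, and you are right that assumption (3) plays no role. The only points worth making explicit are that $M^*$ is finite (immediate from \eqref{eq:support}, which gives $\supp\nu_\eps\subset[c_-,\eps^{-2}c_+]$) and positive, so that $g_\eps(M^*)$ makes sense and the positive root of the quadratic is the relevant one; both are trivial to supply.
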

\medskip

\begin{proof} 
By \eqref{eq:T_def_CDF}
\begin{equation}
\label{eq:T_def_CDF2}
G_{ \nu_\gep}(\tau) 
\, =\,  
\int G_{\nu_\gep} \left( g_\eps^{-1} \left( \frac{\tau}{t} \right) \right) \mu (\d t)
-F_\mu(\gep^2 \tau)\, =\, \int_{\gep^2 \tau} ^\infty G_{\nu_\gep} \left( g_\eps^{-1} \left( \frac{\tau}{t} \right) \right) \mu (\d t)
\, .
\end{equation}
Set $b_\eps  := \inf \left\{ \tau :\,  G_{\nu_\eps} (\tau) = 0\right\}$.  By \eqref{eq:support}
we know  that $b_\gep \le \gep^{-2}c_+ $, but one can see also that $b_\gep \le \gep^{-2}(c_+ -\gd)$ for some $\gd\in (0, c_+)$.
Observe in  fact that \eqref{eq:T_def_CDF2} implies that 
\begin{equation}
G_{ \nu_\gep}(\gep^{-2}(c_+ -\gd)) 
\, =\,  
\int_{c_+ -\gd} ^{c_+} G_{\nu_\gep} \left( g_\eps^{-1} \left( \frac{\gep^{-2}(c_+ -\gd)}{t} \right) \right) \mu (\d t)
\, ,
\end{equation}
but $g_\eps^{-1} ( {\gep^{-2}(c_+ -\gd)}/{t} )\ge 
g_\eps^{-1}( {\gep^{-2}(c_+ -\gd)}/{c_+} )= (\gep^{-2}c_+/\gd)((1-(\gd/c_+))-\gep^2)$ for $t$ in the range of integration.
But since we know that $G_{\nu_\eps}(t)=0$  for every $t>\gep^{-2}c_+$ we obtain 
that 
$G_{ \nu_\gep}(\gep^{-2}(c_+ -\gd))$
is zero if $ (\gep^{-2}c_+/\gd)((1-(\gd/c_+))-\gep^2)> \gep^{-2}c_+$, that is if $\gd < (1-\gep^2) c_+/(c_+ +1)$.

We now claim that 
\begin{equation}
\label{eq:bgep_fp}
b_\gep\, =\, c_+ g_\eps(b_\eps)\, .
\end{equation}
In fact for $\tau>b_\gep$  we have $G_{\nu_\gep}(\tau)=0$ (we can choose $\tau< \gep^{-2}(c_+-\gd)$ for some $\gd >0$), which, by \eqref{eq:T_def_CDF2}, 
implies that $G_{\nu_\eps}(g_\eps^{-1}(\tau/t))=0$ for almost every $t>\gep^2\tau$ in the support of $\mu$, that is 
for almost every $t\in [c_-\vee \gep^2 \tau, c_+]$.
But since $G_{\nu_\eps}(\cdot)$ is non increasing, we have that $G_{\nu_\eps}(s)=0$ for every
$s>g_\eps^{-1}(\tau/c_+)$. This holds  for every $\tau\in(b_\gep, \gep^{-2}(c_+-\gd))$, so we obtain that
$G_{\nu_\eps}(s)=0$ for every
$s>g_\eps^{-1}(b_\gep/c_+)$, that is $g_\eps^{-1}(b_\gep/c_+)\ge b_\gep$.

On the other hand, if $\tau<b_\gep$ then $G_{\nu_\gep}(\tau)>0$, which requires 
$G_{\nu_\eps}(g_\eps^{-1}(\tau/t))>0$ for some $t\in [c_-\vee \gep^2 \tau, c_+]$, which in turn implies
that we have $G_{\nu_\eps}(g_\eps^{-1}(\tau/c_+))>0$, because $G_{\nu_\eps}(\cdot)$ is non increasing
and $g_\eps^{-1}(\cdot)$ is increasing. So for every $\tau<b_\gep$ we have $G_{\nu_\eps}(g_\eps^{-1}(\tau/c_+))>0$,
that is $g_\eps^{-1}(b_\gep/c_+)\le b_\gep$. Therefore \eqref{eq:bgep_fp} is established.

We can then explicitly solve  \eqref{eq:bgep_fp} to obtain
\begin{equation}
b_\eps \, =\, \frac{c_+ - 1 + \sqrt{(c_+-1)^2 + 4 \eps^2 c_+}}{2 \eps^2}
\,>\, 
\frac{c_+-1}{\eps^2}\, ,
\end{equation}
which completes the proof.
\end{proof}

We then fix a
\begin{equation}
\label{eq:y}
y\, \in\, \left(0, (c_+-1)\wedge \frac{c_+}2  \right)\, ,
\end{equation}
 and define $\omega_\eps$ by choosing the normalization
\begin{equation}
\label{eq:Gomy}
G_{\omega_\eps}(\sigma)
\,=\, 
\frac{G_{\nu_\eps}(\eps^{-2} \sigma)}{G_{\nu_\eps}(\eps^{-2} y)}\, ,
\end{equation}
so that $G_{\omega_\eps} (y) = 1$ for all $\eps > 0$.

We now consider the space of positive $\gs$-finite measures supported on $(0, \infty)$ equipped with 
the weak topology with respect to the bounded continuous functions whose support is bounded away from $0$,
that is the functions in $C^0_b((0, \infty); \bbR)$ that vanish in a neighborhood of zero.
We do so because, while $\go_\gep((0, \infty))< \infty$ 
as we will see 
the limit point has total mass $+ \infty$ because of a singular behavior at zero. 
\medskip

We have the following crucial estimate:
\medskip

\begin{lemma}
\label{th:powergrowth}
There exist $m \in(1, \infty)$, $a>1$ and a decreasing sequence $\{x_n\}_{n=0,1, \ldots}$ of positive numbers,
with $x_n \le y a^{-n}$, $y$ defined in \eqref{eq:y},  such that $G_{\omega_\eps} (x_n) \le m^n$ for all $n$ and all $\eps > 0$.
\end{lemma}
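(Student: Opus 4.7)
The plan is to iterate the invariance $S_\eps\omega_\eps = \omega_\eps$ in order to propagate the normalization $G_{\omega_\eps}(y)=1$ to geometrically smaller arguments. Applied to $\ind_{(\sigma,\infty)}$ the invariance reads
\[
  G_{\omega_\eps}(\sigma)
  \, =\, \int G_\mu\!\left( \tfrac{\sigma}{h_\eps(s)}\right)\omega_\eps(\dd s).
\]
By hypothesis~(1) we have $c_+>1$, so I fix once and for all some $a \in (1,c_+)$; by hypothesis~(2), $G_\mu(a)>0$, whence $m:=1/G_\mu(a)\in(1,\infty)$. Since $h_\eps$ is increasing and maps $[0,\infty)$ onto $[\eps^2,1)$, integrating the pointwise bound $G_\mu(\sigma/h_\eps(s))\ge G_\mu(a)\,\ind_{\{h_\eps(s)\ge \sigma/a\}}$ against $\omega_\eps$ yields, for every $\sigma>a\eps^2$, the one-step inequality
\[
  G_{\omega_\eps}\!\bigl(\phi_\eps(\sigma)\bigr)\,\le\, m\,G_{\omega_\eps}(\sigma),
  \qquad
  \phi_\eps(\sigma)\, :=\, h_\eps^{-1}(\sigma/a)\, =\, \frac{\sigma-a\eps^2}{a-\sigma}\, .
\]

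Next I would choose $a'\in(1,a)$ satisfying $a-a'\ge y$; this is possible because $y<c_+-1$ by~\eqref{eq:y}, so $a$ may be taken close to $c_+$ and $a'$ close to $1$. For $\sigma\in(a\eps^2, y]$ an elementary computation gives $\phi_\eps(\sigma)\le \sigma/(a-\sigma)\le\sigma/a'$. Iterating $\sigma_0=y$, $\sigma_n=\phi_\eps(\sigma_{n-1})$ as long as $\sigma_{n-1}>a\eps^2$ produces a finite strictly decreasing sequence $(\sigma_n)_{0\le n\le n^\star(\eps)}$ satisfying $\sigma_n\le y(a')^{-n}$ and, via the step inequality, $G_{\omega_\eps}(\sigma_n)\le m^n$. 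Setting $x_n:=y(a')^{-n}$, which is independent of $\eps$ and trivially fulfills $x_n\le y (a')^{-n}$, monotonicity of $G_{\omega_\eps}$ combined with $\sigma_n\le x_n$ yields $G_{\omega_\eps}(x_n)\le m^n$ for every $n\le n^\star(\eps)$.

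For the remaining indices $n>n^\star(\eps)$ I would close the argument by using a boundary estimate at $\sigma=a\eps^2$: since $a\eps^2/h_\eps(s)=a\eps^2(1+s)/(\eps^2+s)\in(a\eps^2,a]$ for all $s\ge0$, one has $G_\mu(a\eps^2/h_\eps(s))\ge G_\mu(a)$, and the invariance gives
\[
  M_\eps\, :=\,\omega_\eps\bigl((0,\infty)\bigr)\, \le\, m\,G_{\omega_\eps}(a\eps^2).
\]
Because $\sigma_{n^\star(\eps)}\le a\eps^2$ by the stopping rule, monotonicity yields $G_{\omega_\eps}(a\eps^2)\le G_{\omega_\eps}(\sigma_{n^\star(\eps)})\le m^{n^\star(\eps)}$, so $M_\eps\le m^{n^\star(\eps)+1}$, and therefore $G_{\omega_\eps}(x_n)\le M_\eps\le m^n$ for every $n>n^\star(\eps)$. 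The main delicate point will be selecting $(a,a',m)$ consistently and handling the degenerate range of $\eps$ bounded away from $0$ in which $a\eps^2\ge y$ and the forward iteration is empty; there $M_\eps$ is bounded by continuity and compactness, and $m$ can be enlarged a posteriori to absorb these values.
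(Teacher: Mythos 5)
Your proof is correct, and its engine is the same as the paper's: both iterate the stationarity identity $G_{\omega_\eps}(\sigma)=\int G_\mu\left(\sigma/h_\eps(s)\right)\omega_\eps(\d s)$, restricting the integral to a half-line of $s$ on which $G_\mu\ge G_\mu(a)$ for a fixed $a<c_+$ (positive by assumption (2)), so that each step costs at most a factor $m=1/G_\mu(a)$. The difference lies in how uniformity in $\eps$ is achieved. The paper bounds $h_\eps\ge h_0$ at the outset, which yields the one-step inequality $G_{\omega_\eps}(x_n)\le m\, G_{\omega_\eps}(x_{n-1})$ directly along a single $\eps$-independent sequence (defined so that $x_{n-1}(1+x_n)/x_n$ equals a constant $k<c_+$); the claim then holds for every $n$ with no further work. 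You instead iterate the exact map $\sigma\mapsto h_\eps^{-1}(\sigma/a)$, whose orbit depends on $\eps$ and must be stopped once it drops below $a\eps^2$, and you cover the remaining indices with the total-mass bound $\omega_\eps((0,\infty))\le m\,G_{\omega_\eps}(a\eps^2)$ (valid because $h_\eps(s)\ge\eps^2$ gives $a\eps^2/h_\eps(s)\le a$, and $\omega_\eps$ has finite total mass for each fixed $\eps>0$). This is a sound extra step, just one that the paper's substitution makes unnecessary. Two small points: your one-step inequality also needs $\sigma<a$ (so that $\sigma/a$ lies in the range of $h_\eps$), which is harmless since you only apply it for $\sigma\le y<a$; and the ``degenerate range'' $y\le a\eps^2$ you flag at the end requires no compactness argument or a posteriori enlargement of $m$ --- it is exactly the case $n^\star(\eps)=0$ of your own scheme, where $G_{\omega_\eps}(a\eps^2)\le G_{\omega_\eps}(y)=1$ already gives $M_\eps\le m$.
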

\medskip

\begin{proof}
To begin with, note that for any $\eps > 0$ and any $x,z > 0$,
\begin{equation}
\begin{split}
\label{eq:main74}
G_{\omega_\eps} (z) \, =\,  
\int G_\mu \left(\frac z {h_\eps(s)}\right) 
\omega_\eps (\d s)\, 
\ge\,  G_\mu \left(\frac z {h_\eps(x)}\right) G_{\omega_\eps}(x)\, 
\\ \ge\,  G_\mu \left(\frac {z(1+x)} {x}\right) G_{\omega_\eps}(x)\, ,
\end{split}
\end{equation}
where we have first used that both
since $G_\mu(\cdot)$ and $1/h_\eps(\cdot)$ are non increasing, so the composition of the two is non decreasing,  and then 
that $h_\eps (x ) > h_0(x)=x/(1+x)$ (and, again, the monotonicity of $G_\mu(\cdot)$).

\begin{sloppypar}
We now define a sequence $\{x_n\}_{n=0,1, \ldots}$ by setting 
$x_0=y$ and, for $n\ge 1$, ${x_n:=x_{n-1}/(k-x_{n-1})}$ for a  $k$ chosen in $(y+1\vee (y/2), c_+)$: note that $y+1\vee (y/2)< c_+$ by \eqref{eq:y}. 
Note also that the map $x\mapsto x/(k-x)$ has $0$ and $k/2$ as fixed points: they are both hyperbolic, $0$ is attractive while $k/2$ is repulsive. By \eqref{eq:y} we have $y<k/2$, hence 
$\{x_n\}_{n=0,1,\ldots}$ decreases  to zero exponentially fast: 
$x_n\le y/a^n$, $a:=k-y>1$. Moreover by \eqref{eq:main74} (with $z=x_{n-1}$ and $x=x_n$)
we have for $n=1, 2, \ldots$
\end{sloppypar}
\begin{equation}
\label{eq:1step}
G_{\omega_\eps}(x_n) \, \le\, G_{\omega_\eps}(x_{n-1})
  \left(
G_\mu 
\left(
x_{n-1} \frac{1+x_n}{x_n}
\right)
\right)^{-1}
\, ,
\end{equation}
so that by observing that $x_{n-1} (1+x_n)/x_n = k$, by   setting $m^{-1} := G_\mu (k)\in (0, 1]$
and recalling $G_{\omega_\eps}(x_{0})=1$ yields 
\begin{equation}
G_{\omega_\eps} (x_n) \, \le\,  m^n\, ,
\end{equation}
and the proof is complete.
\end{proof}

\medskip

Here is the main result about $\{\go_\gep\}_{\gep>0}$:
\medskip

\begin{proposition}
\label{th:omega0}
The family $\{\go_\gep\}_{\gep>0}$ is  compact and every limit point $\go_0$ satisfies $S_0 \omega_0 = \omega_0$, the support of $\go_0$
is in  $(0, c_+)$  and $G_{\omega_0}(y) =1$. Moreover 
there exists $U < \infty$ such that for every limit point $\go_0$ we have 
\begin{equation}
\int_0^\infty x^U \omega_0( \d x) \, =\, U \int_0^\infty x^{U-1}G_{\omega_0} (x)\d x< \infty\, .
\end{equation}
\end{proposition}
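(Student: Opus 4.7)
The plan is to mimic the compactness/continuity strategy of Proposition~\ref{lem:nu0}, with two new complications: the limit $\omega_0$ may be an infinite measure, so we work throughout in the weak topology against bounded continuous functions with support bounded away from zero, and the normalization $G_{\omega_\eps}(y)=1$ must pass to the limit.

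For precompactness, Lemma~\ref{th:powergrowth} yields $G_{\omega_\eps}(a)\le m^n$ uniformly in $\eps$ whenever $x_n\le a$. Combined with $\supp(\omega_\eps)\subset[c_-\eps^2,c_+]$ (inherited from $\supp(\nu_\eps)$ through the change of variables $\sigma=\eps^2 s$, which moreover gives $S_\eps\omega_\eps=\omega_\eps$), this bounds $\omega_\eps([a,c_+])$ uniformly for every $a>0$. A diagonal extraction along a sequence $a_k\searrow 0$ then produces a subsequence $\omega_{\eps_n}$ converging to some $\sigma$-finite $\omega_0$ with support in $(0,c_+]$. The fixed-point identity $S_0\omega_0=\omega_0$ follows from the splitting
\[
S_{\eps_n}\omega_{\eps_n}-S_0\omega_0\,=\,(S_{\eps_n}-S_0)\omega_{\eps_n}\,+\,S_0(\omega_{\eps_n}-\omega_0),
\]
exactly as in Proposition~\ref{lem:nu0}: for a test function $f$ with support in $[a,\infty)$, the function $s\mapsto\int f(zh_0(s))\mu(\d z)$ is continuous, bounded, and supported in $\{s:s\ge a/(c_+-a)\}$ (since $h_0(s)<1$ for all $s$), so $\omega_{\eps_n}\to\omega_0$ applies to the second summand; the first is $\Oeps{2}$ by $|h_\eps(s)-h_0(s)|\le\eps^2$ and the $C^1$ regularity of $\mu$.

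For the normalization I would adapt the last step of Proposition~\ref{lem:nu0}: from $S_0\omega_0=\omega_0$ and the $C^1$ regularity of $\mu$ it follows that $F_{\omega_0}$ is continuously differentiable on $(0,\infty)$, so $\omega_0$ is atomless and $G_{\omega_{\eps_n}}(y)\to G_{\omega_0}(y)=1$. For the moment bound, choose $U>\log m/\log a$; the estimates $G_{\omega_\eps}(x)\le m^{n+1}$ on $[x_{n+1},x_n]$ and $x_n\le y a^{-n}$ from Lemma~\ref{th:powergrowth} give
\[
\int_0^y x^{U-1}G_{\omega_\eps}(x)\,\d x\,\le\,\frac{m\,y^U}{U}\sum_{n\ge 0}(m/a^U)^n\,<\,\infty
\]
uniformly in $\eps$, while $\int_y^{c_+} x^{U-1}G_{\omega_\eps}(x)\,\d x\le c_+^U/U$ trivially. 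Fatou applied to the convergence on each $[a_k,c_+]$, followed by $a_k\searrow 0$, yields $\int_0^\infty x^{U-1}G_{\omega_0}(x)\,\d x<\infty$; the integration-by-parts identity is then Fubini, using that $\omega_0$ is $\sigma$-finite.

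The principal obstacle I expect is the proof that $\omega_0$ is atomless: this is what allows the normalization $G_{\omega_\eps}(y)=1$ to transfer cleanly to the limit, and it requires reading the regularity of $F_{\omega_0}$ off the fixed-point equation $S_0\omega_0=\omega_0$ in the infinite-measure setting, where the standard reasoning used for probability measures in Proposition~\ref{lem:nu0} has to be adapted to functions of the form $s\mapsto \mu(\sigma/h_0(s))/(1+s)^2$ integrated against a non-finite $\omega_0$.
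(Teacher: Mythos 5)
Your proposal follows essentially the same route as the paper: compactness from the uniform tail bounds of Lemma~\ref{th:powergrowth} via Helly--Bray and a diagonal extraction, the fixed-point identity $S_0\omega_0=\omega_0$ by the same two-term splitting used in Proposition~\ref{lem:nu0} applied to test functions supported away from zero, and the moment bound from the geometric decay $G_{\omega_\eps}(x_n)\le m^n$ with any $U>\log m/\log a$. Your extra step establishing that $\omega_0$ is atomless (so that the normalization $G_{\omega_{\eps_n}}(y)=1$ passes to the limit) is exactly the content of the paper's Lemma~\ref{lem:omega_ac}, so nothing essential is missing or different.
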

\medskip 

\begin{proof}
The compactness of $\{\go_\gep\}_{\gep>0}$ can be obtained by the Helly-Bray Lemma as follows: Consider in fact 
a decreasing sequence $\{x_n\}_{n=0,1,\ldots}$ of numbers in $(0, c_+)$ that tends to $0$. For every 
Borel subset $B$ of $\bbR$ we set
\begin{equation}
\go_{\gep, n}(B)\,:= \, \frac{\go_\gep(B\cap{(x_n, \infty)})}{M_n}\, ,
\end{equation}
with $M_n:= \max_\gep\go_\gep((x_n,\infty))$ and $M_n< \infty $  by Lemma~\ref{th:powergrowth}.
Therefore $\{ \go_{\gep, n}\}_{\gep>0}$ is a family of sub-probabilities; hence it is relatively compact 
by the Helly-Bray Lemma, so  also $\{M_n\go_{\gep, n}\}_{\gep>0}$ is relatively compact 
and of course $M_n\go_{\gep, n}(B)$ is just $\go_\gep(B \cap (x_n \infty))$. 
Via a diagonal  procedure we can therefore extract from any sequence tending to zero 
a  subsequence $\{\gep_j\}_{j=1, 2, \ldots}$ such that $\lim_j\int h(\gs) \go_{\gep_j} (\dd \gs )=\int h(\gs) \go_{0} (\dd \gs )$
for every bounded continuous $h$ whose support is bounded away from $0$. 
 
The proposed properties of $\go_0$ can now be confirmed directly, notably 
 $S_0 \omega_0 = \omega_0$ follows by the same argument used in Proposition~\ref{lem:nu0} with the obvious changes, in particular noting that 
\begin{equation}
\frac{1}{h_0(\sigma)} - \frac{1}{h_n(\sigma)}
\, =\,
\frac{\eps_n^2}{1+s}
\le \eps_n^2\, ,
\end{equation}
and $U$ is easily found by using Lemma~\ref{th:powergrowth}. Note in fact that Lemma~\ref{th:powergrowth}
implies the practical formulation 
\begin{equation}
G_\omega (x) \le
\left\{
\begin{array}{ll}
1, & x \ge y \\
m, & \frac{y}{a} \le x < y \\
m^2, & \frac{y}{a^2} \le x < \frac{y}{a} \\
\vdots & \vdots
\end{array}
\right.
\ \ \le 
\left\{
\begin{array}{ll}
1, & x \ge y \\
m \left( \frac{x}{y} \right) ^{-\log m / \log a}, &x < y
\end{array}
\right.
\end{equation}
and we see that any $U>\log m/ \log a$ will do: explicit values of the positive constant $a$ and $m$
are given in the proof of Lemma~\ref{th:powergrowth}.
\end{proof}

Finally, we note that
\begin{lemma}
$\omega_0$ defined in Proposition~\ref{th:omega0} is absolutely continuous.
\label{lem:omega_ac}
\end{lemma}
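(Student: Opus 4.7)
My plan is to mirror the final part of the proof of Proposition~\ref{lem:nu0}, adapted to the fact that $\omega_0$ is an infinite measure whose singularity is localized near $0$. The starting point is to rewrite the fixed-point equation $S_0\omega_0 = \omega_0$ as an integral equation for the tail $G_{\omega_0}$: applying \eqref{eq:S_def_integral} with $f = \ind_{(\sigma,\infty)}$ and using $h_0(s) = s/(1+s)$ yields
\begin{equation*}
  G_{\omega_0}(\sigma) \, =\, \int G_\mu\!\left( \frac{\sigma(1+s)}{s} \right) \omega_0(\d s)\, .
\end{equation*}

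The key observation I would exploit is that, although $\omega_0$ has infinite total mass, the compact support of $\mu$ in $[c_-, c_+]$ truncates the integrand uniformly away from $s=0$ for every fixed $\sigma > 0$: namely $G_\mu(\sigma(1+s)/s) = 0$ whenever $\sigma(1+s)/s > c_+$, i.e.\ whenever $s < \sigma/(c_+ - \sigma)$. Consequently, for $\sigma$ in any compact sub-interval $[a,b] \subset (0, c_+)$, the effective domain of integration is contained in $[s_*, \infty)$ with $s_* := a/(c_+ - a) > 0$, and $\omega_0([s_*,\infty)) < \infty$ by Lemma~\ref{th:powergrowth}.

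Next I would differentiate under the integral sign. On $\{s \ge s_*\}$ the prefactor $(1+s)/s$ is bounded by $1 + 1/s_*$, and by hypothesis (3) of Theorem~\ref{th:main} the density $\mu$ is bounded, so
\begin{equation*}
  \left| \frac{\partial}{\partial \sigma} G_\mu\!\left( \frac{\sigma(1+s)}{s} \right) \right|
  \, =\, \frac{1+s}{s}\, \mu\!\left( \frac{\sigma(1+s)}{s} \right)
  \, \le\, \left(1 + \frac{1}{s_*}\right) \|\mu\|_\infty
\end{equation*}
uniformly for $(\sigma, s) \in [a,b] \times [s_*, \infty)$. Dominated convergence then gives
\begin{equation*}
  G'_{\omega_0}(\sigma) \, =\, -\int \frac{1+s}{s}\, \mu\!\left( \frac{\sigma(1+s)}{s} \right) \omega_0(\d s)\, ,
\end{equation*}
continuous in $\sigma \in [a,b]$. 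Since $[a,b] \subset (0,c_+)$ is arbitrary and $\omega_0$ is supported in $(0, c_+)$ by Proposition~\ref{th:omega0} (so $G_{\omega_0} \equiv 0$ on $[c_+, \infty)$), this shows $G_{\omega_0}$ is continuously differentiable on $(0, \infty)$, and $-G'_{\omega_0}$ is the Lebesgue density of $\omega_0$.

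The only subtle step is recognizing that the infinite $\omega_0$-mass near zero is never accessed by the fixed-point integrand when $\sigma > 0$, thanks to $\supp \mu \subset [c_-, c_+]$; once this truncation is in place the argument runs exactly parallel to the one for $\nu_0$, so I do not expect any further obstacle.
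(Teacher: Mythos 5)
Your proposal is correct and follows essentially the same route as the paper: write out the fixed-point identity $G_{\omega_0}(\sigma)=\int G_\mu(\sigma/h_0(s))\,\omega_0(\d s)$, observe that the compact support of $\mu$ forces the integrand (and its $\sigma$-derivative) to vanish for $s$ below $h_0^{-1}(\sigma/c_+)=\sigma/(c_+-\sigma)$, where $\omega_0$ has finite mass, and then differentiate under the integral sign using the boundedness of the density $\mu$. The paper encodes the truncation via an indicator and bounds the resulting integral by $\|\mu\|_\infty\, (c_+/\tau)\, G_{\omega_0}(h_0^{-1}(\tau/c_+))$ rather than localizing $\sigma$ to compact intervals, but this is only a cosmetic difference.
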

\begin{proof}
From \eqref{eq:S_def_integral}, we obtain
\begin{equation}
G_{S_\eps \omega} (\tau) = \int G_\mu\left( \frac{\tau}{h_\eps(s)} \right) \omega ( \d s)
\end{equation}
and, setting $\eps = 0$, $\omega = \omega_0$, this becomes
\begin{equation}
G_{\omega_0} (\tau) 
=
\int G_\mu \left( \frac{\tau}{h_0(s)} \right) \omega_0(\d s).
\end{equation}
Noting that
\begin{equation}
\frac{\d}{\d \tau} G_\mu \left( \frac{\tau}{h_0(s)} \right)
\, =\,
\frac{1}{h_0 (s)} \mu \left( \frac{\tau}{h_0(s)} \right)
\, \le\,  
\frac{c_+}{\tau} \| \mu \|_\infty \ind_{(h_0^{-1}(\tau/c_+),\infty)}(s)
\ind_{(0,c_+)}(\tau) \, ,
\end{equation}
we see that $G_{\omega_0}$ is differentiable on $(0,\infty)$ with the bound
\begin{equation}
G'_{\omega_0}(\tau)
=
\int
\frac{1}{h_0 (s)} \mu \left( \frac{\tau}{h_0(s)} \right) \omega_0 (\d s)
\le \frac{c_+}{\tau} \| \mu \|_\infty G_{\omega_0}\left(h_0^{-1}(\tau/c_+)\right)
\ind_{(0,c_+)}(\tau) < \infty \, .
\end{equation}
\end{proof}

\section{Asymptotic behaviour of $\nu_0$ and $\omega_0$} \label{sec:asymptotics}

This section culminates in Lemmas~\ref{lem:nu_asymptotics} and~\ref{lem:omega_asymptotics}, which will allow us to control the asymptotics of a number of integrals containing $G_{\nu_0}$ and $G_{\omega_0}$, which appear in Section~\ref{sec:norm_bounds} below.
We will do this by using Mellin transform techniques (see \cite{MellinChapter} for a review), which were applied to $\nu_0$ in~\cite{deC}; the calculation is reproduced in Section~\ref{sec:nu_Mellin}.

Generically, the poles of the Mellin transform of a function correspond to the powers appearing in its asymptotic expansion at $0$ or $\infty$; however, except when it is known \textit{a priori} that such an expansion exists, proving this requires some control on the behaviour of the Mellin transform for large imaginary argument.  No suitable control is available here, but we shall see that the expansion holds in a distributional sense which will be adequate for our purposes, as proven in Section~\ref{sec:nu_asymptotics}.

Finally, in Section~\ref{sec:omega_asymptotics}, we show that the techniques of the previous two sections can be adapted with minor changes to obtain similar results for $\omega_0$.

\subsection{Mellin analysis of $\nu_0$}
\label{sec:nu_Mellin}

Let 
\begin{equation}
{\mathtt M}(u) \,=\, \int t^u \mu(\d t)\, .
\end{equation}
As noted in Section~\ref{sec:approach}, we need $\delta > 0$ to be such that $\delta \le \alpha$, $\delta \le 1-\alpha$, and ${\mathtt M}(u) \neq 1$ for all complex $u$ with $\Re u \in [\alpha,\alpha+\delta]$ other than $u=\alpha$.  Let us confirm that there is actually a positive number satisfying these conditions.  As a preliminary, we will need the following version of the Riemann-Lebesgue lemma:
\begin{lemma}\label{lem:RL}
Let $\mu$ be a continuously differentiable function with support $[c_-,c_+] \subset (0,\infty)$, let $C \subset \R$ be a compact interval, and let $z_j$ be a sequence of complex numbers such that $|z_j| \to \infty$ and $\Re z_j \in C$.
Then
\begin{equation}
\int_{c_-}^{c_+} t^{z_j} \mu(t) \d t
\to 0. 
\end{equation}
\end{lemma}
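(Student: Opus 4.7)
The plan is to treat this as the classical Riemann--Lebesgue lemma adapted to complex exponents: a single integration by parts suffices. The key preliminary observation is that since $\Re z_j$ is confined to the compact set $C$ while $|z_j|\to\infty$, the imaginary parts must satisfy $|\Im z_j|\to\infty$, so in particular $|z_j+1|\to\infty$ (and for all sufficiently large $j$ we have $z_j\neq-1$, so dividing by $z_j+1$ will be harmless).

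Next I would exploit the regularity of $\mu$. Because $\mu$ is continuously differentiable with support $[c_-,c_+]$, its extension by zero outside this interval is $C^1$ on all of $\R$, which forces $\mu(c_-)=\mu(c_+)=0$. Writing $t^{z_j}=(z_j+1)^{-1}\tfrac{\d}{\d t}t^{z_j+1}$ and integrating by parts therefore gives, with no boundary contribution,
\begin{equation*}
\int_{c_-}^{c_+} t^{z_j}\mu(t)\,\d t \; =\; -\,\frac{1}{z_j+1}\int_{c_-}^{c_+} t^{z_j+1}\mu'(t)\,\d t.
\end{equation*}

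To finish I would bound the modulus of the right-hand side in an elementary way. Since $\Re z_j\in C$, the quantity $|t^{z_j+1}|=t^{\Re z_j+1}$ is uniformly bounded for $t\in[c_-,c_+]$ and all $j$ by some constant $M=M(C,c_-,c_+)<\infty$; combined with $\|\mu'\|_\infty<\infty$ (which follows from $\mu\in C^1$ with compact support) this yields
\begin{equation*}
\left|\int_{c_-}^{c_+} t^{z_j}\mu(t)\,\d t\right| \;\leq\; \frac{M\,\|\mu'\|_\infty (c_+-c_-)}{|z_j+1|}\;\longrightarrow\;0,
\end{equation*}
which is the claim. No step looks genuinely obstructive; the only mild point worth verifying carefully is the vanishing of the boundary terms, which is automatic from the $C^1$-with-compact-support hypothesis.
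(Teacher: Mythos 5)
Your proof is correct and is essentially identical to the paper's: a single integration by parts throwing the derivative onto $\mu$, followed by the uniform bound $|t^{z_j+1}| = t^{\Re z_j+1} \le M$ for $\Re z_j \in C$ and the observation that $|z_j+1|\to\infty$. Your explicit check that the boundary terms vanish (because a $C^1$ function supported on $[c_-,c_+]$ must vanish at the endpoints) is a detail the paper leaves implicit, but there is no substantive difference.
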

\begin{proof}
Integrating by parts,
\begin{equation}
\left|
\int_{c_-}^{c_+} t^z \mu(t) \d t
\right|
=
\left|
\frac{1}{z+1}
\int_{c_-}^{c_+} t^{z+1} \mu'(t) \d t
\right|
\le 
\frac{1}{|z+1|}
\int_{c_-}^{c_+} t^{\Re z+1} \left| \mu'(t) \right| \d t
\end{equation}
and the last integral is bounded uniformly given $\Re z \in C$.
\end{proof}

\begin{lemma}
There is a $\delta > 0$ such that
\begin{equation}
\left\{
z \in \mathbb{C}
\,
\middle|\, 
\Re z \in [\alpha,\alpha+\delta] \textup{ and }
{\mathtt M}(z) = 1
\right\}
= \{\alpha\}.
\end{equation}\label{lem:delta}
\end{lemma}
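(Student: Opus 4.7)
The plan is to combine analyticity of $\mathtt{M}$, a rigidity argument on the vertical line $\Re z=\alpha$, and the Riemann--Lebesgue estimate in Lemma~\ref{lem:RL} to isolate $z=\alpha$ as the sole zero of $\mathtt{M}(z)-1$ in a thin right-strip. Since $\mu$ has compact support in $(0,\infty)$, $\mathtt{M}(z)=\int t^z\mu(\mathrm{d}t)$ is entire, and the definition of $\alpha$ gives $\mathtt{M}(\alpha)=1$. By strict convexity of $\beta\mapsto\mathtt{M}(\beta)$ on $\mathbb{R}$ (which follows from $\mu$ being non-degenerate, guaranteed by assumption (2) together with (3)), the real derivative $\mathtt{M}'(\alpha)=\mathbb{E}[Z^\alpha\log Z]$ is strictly positive, so $\alpha$ is a simple zero of the analytic function $\mathtt{M}-1$.

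Next, I would verify that $\alpha$ is the only zero on the whole vertical line $\Re z=\alpha$. For $z=\alpha+is$ with $s\in\mathbb{R}$, the inequality
\begin{equation}
|\mathtt{M}(\alpha+is)|=\left|\int t^\alpha t^{is}\mu(\mathrm{d}t)\right|\le\int t^\alpha\mu(\mathrm{d}t)=1
\end{equation}
holds, and equality with $\mathtt{M}(\alpha+is)=1$ forces $\cos(s\log t)=1$ for $\mu$-a.e.\ $t$. Since $\mu$ is absolutely continuous with support containing a non-degenerate interval (assumptions (2) and (3)), the set of admissible $t$ has positive Lebesgue measure, which is incompatible with $s\log t\in 2\pi\mathbb{Z}$ unless $s=0$.

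Now I would control large imaginary parts. Applying Lemma~\ref{lem:RL} to the compact interval $C=[\alpha,\alpha+\delta_0]$ with any fixed $\delta_0\in(0,\min(\alpha,1-\alpha)]$, one obtains that $\mathtt{M}(z)\to 0$ as $|\Im z|\to\infty$ uniformly for $\Re z\in C$. Hence there exists $R<\infty$ such that $|\mathtt{M}(z)|\le 1/2$ whenever $\Re z\in C$ and $|\Im z|>R$, which rules out any zero of $\mathtt{M}-1$ in that region. The remaining slab
\begin{equation}
K:=\{z\in\mathbb{C}:\Re z\in[\alpha,\alpha+\delta_0],\ |\Im z|\le R\}
\end{equation}
is compact, and zeros of the analytic function $\mathtt{M}-1$ inside $K$ are isolated, hence finite in number. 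Only $z=\alpha$ of these has $\Re z=\alpha$ (by the previous paragraph), so setting
\begin{equation}
\delta:=\min\Bigl(\delta_0,\ \min\{\Re z-\alpha:z\in K,\ \mathtt{M}(z)=1,\ z\ne\alpha\}\Bigr)>0
\end{equation}
(with the convention $\min\emptyset=\delta_0$) gives the required $\delta$.

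The main obstacle is the possibility of a sequence of zeros $z_n$ with $\Re z_n\searrow\alpha$ and $|\Im z_n|\to\infty$, which is precisely the kind of escape to infinity that local analyticity at $\alpha$ cannot exclude; this is where Lemma~\ref{lem:RL} is essential, and it is why assumption (3) (continuous differentiability of the density $\mu$) has been built into the hypotheses of Theorem~\ref{th:main}.
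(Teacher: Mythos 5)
Your proof is correct and follows the same route as the paper's: Lemma~\ref{lem:RL} excludes solutions of $\mathtt{M}(z)=1$ with large imaginary part in the strip, analyticity of the nonconstant entire function $\mathtt{M}-1$ makes the remaining zeros isolated and hence finitely many in a compact slab, and $\delta$ is then taken smaller than the resulting gap. The one substantive difference is your rigidity argument on the line $\Re z=\alpha$ (equality in $|\mathtt{M}(\alpha+is)|\le 1$ forces $s\log t\in 2\pi\mathbb{Z}$ for $\mu$-a.e.\ $t$, impossible for an absolutely continuous $\mu$ with nondegenerate support unless $s=0$): the paper's proof only considers the set of roots with $\Re z\in(\alpha,A]$ and never explicitly excludes roots with $\Re z=\alpha$, $z\neq\alpha$, so on this point your write-up is actually the more complete of the two.
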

\begin{proof}
%
For any $A > \alpha$, let $Z_A := \{ z \in \mathbb{C} | {\mathtt M}(z)=1 \textup{ and } \Re z \in (\alpha,A]\}$.
$Z_A$ cannot have any finite accumulation points, since ${\mathtt M}$ is entire and not constant, so if $Z_A$ is infinite it must also be unbounded.  
This would then imply the existence of a sequence $z_j$ such that $\Re z_j \in [\alpha,A]$, $| z_j| \to \infty$, and ${\mathtt M}(z_j) \equiv 1$, which is in contradiction with Lemma~\ref{lem:RL}.

We therefore know that $Z_A$ is finite for any $A > \alpha$, and since $\Re z > \alpha$ for any $z \in Z_A$ we conclude that $\inf_{z \in Z_A} \Re z - \alpha > 0$, and the result holds for any $\delta$ smaller than this quantity.
\end{proof}

Define a measure $\xi$ by
\begin{equation}
F_{\nu_0} (x ) = F_{\xi} (x+1),
\label{eq:xi_def}
\end{equation}
and let
\begin{equation}
{\mathtt X}(u) := \int t^u \xi (\d t).
\label{eq:X_def}
\end{equation}
${\mathtt X}$ is, up to a shift of the coordinate $u$ which we find convenient, the Mellin transform of $\xi$.  Among the calculations in \cite{deC}, we find the following:
\begin{lemma} \label{lem:X_mellin}
${\mathtt X}$ is analytic on the the set of complex numbers $u$ with $\Re u \le \alpha + \delta$, except for $u = \alpha$, which is a simple pole.
\end{lemma}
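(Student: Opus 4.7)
The strategy is to derive a functional equation for $\mathtt{X}$ from the fixed point identity $T_0\nu_0=\nu_0$ and then combine it with Lemma~\ref{lem:delta} to locate the singularities in the strip $\{\Re u \le \alpha+\delta\}$.

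First I would confirm that $\mathtt{X}$ is analytic on the open half-plane $\{\Re u<\alpha\}$ directly from \eqref{eq:X_def}. For $\Re u\le 0$ this is immediate, since by \eqref{eq:xi_def} the measure $\xi$ is a probability measure supported on $[c_-+1,\infty)$ and $|t^u|\le 1$ on this set. For $0<\Re u<\alpha$, passing the moment estimate \eqref{eq:technu0} to the weak limit $\nu_{\eps_n}\to\nu_0$ (via a truncation and monotone convergence) yields $\int s^{\Re u}\nu_0(\d s)<\infty$; the bound $(1+s)^{\Re u}\le 2^{\Re u}(1+s^{\Re u})$ combined with Morera's theorem and Fubini then gives analyticity.

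Next, inserting $f(\tau)=\tau^u$ into \eqref{eq:T_def_integral} at $\eps=0$ produces the algebraic identity
\begin{equation}
\int \tau^u\,\nu_0(\d\tau) \,=\, \mathtt{M}(u)\,\mathtt{X}(u)\, .
\end{equation}
Writing $\int\tau^u\nu_0(\d\tau)=\mathtt{X}(u)-\Psi(u)$, where
\begin{equation}
\Psi(u)\,:=\,\int\bigl[(1+s)^u-s^u\bigr]\,\nu_0(\d s)\, ,
\end{equation}
one rearranges to $(1-\mathtt{M}(u))\mathtt{X}(u)=\Psi(u)$ on $\{\Re u<\alpha\}$. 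A mean-value estimate (or binomial expansion $(1+1/s)^u-1=u/s+O(1/s^2)$) yields $|(1+s)^u-s^u|\le C(u)\,s^{\Re u-1}$ uniformly for $u$ in compact subsets of $\{\Re u<\alpha+1\}$ and $s\ge c_-$. Combined with the moment bound, this shows that $\Psi$ is analytic on the strictly larger half-plane $\{\Re u<\alpha+1\}$, which contains $\{\Re u\le\alpha+\delta\}$ since $\delta\le 1-\alpha$.

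Because $\mu$ is compactly supported in $(0,\infty)$, the function $\mathtt{M}$ is entire. The relation
\begin{equation}
\mathtt{X}(u) \,=\, \frac{\Psi(u)}{1-\mathtt{M}(u)}
\end{equation}
therefore provides the desired meromorphic continuation of $\mathtt{X}$ to $\{\Re u<\alpha+1\}$, with singularities only at zeros of $1-\mathtt{M}$. By Lemma~\ref{lem:delta} the unique such zero in the strip $\Re u\in[\alpha,\alpha+\delta]$ is at $u=\alpha$. Strict convexity of $\mathtt{M}$ on the real line, together with $\mathtt{M}(0)=\mathtt{M}(\alpha)=1$ and $\mathtt{M}'(0)=\bbE[\log Z]<0$, forces $\mathtt{M}'(\alpha)>0$, so this zero is simple. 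Finally, $\Psi(\alpha)=\int[(1+s)^\alpha-s^\alpha]\,\nu_0(\d s)>0$ by strict positivity of the integrand and nontriviality of $\nu_0$, so the pole is genuine. The main technical point is the uniform bound on $(1+s)^u-s^u$ for complex $u$ underlying the analyticity of $\Psi$ on the enlarged half-plane; everything else is essentially bookkeeping.
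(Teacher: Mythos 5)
Your proof is correct, but it follows a genuinely different route from the paper's. The paper also starts from the stationarity of $\nu_0$ (in the form ${\mathtt X}(u)=\int\int(1+tz)^u\mu(\d z)\xi(\d t)$), but then expands $(1+tz)^u$ via a Mellin--Barnes contour integral and shifts the contour across the pole of $\Gamma(w-u)$ to obtain the recursion \eqref{eq:X_recursion}, ${\mathtt X}(u)=\frac{1}{\Gamma(-u)[1-{\mathtt M}(u)]}\cdot(\text{contour integral})$, which is then continued in $u$; non-removability of the singularity at $\alpha$ is proved by a Taylor-series/positivity contradiction. You instead extract the purely algebraic identity $\int\tau^u\nu_0(\d\tau)={\mathtt M}(u){\mathtt X}(u)$, rewrite it as $(1-{\mathtt M}(u)){\mathtt X}(u)=\Psi(u)$ with $\Psi(u)=\int[(1+s)^u-s^u]\nu_0(\d s)$, and gain the extra strip of analyticity from the elementary estimate $|(1+s)^u-s^u|\le C(u)s^{\Re u-1}$ (which indeed holds locally uniformly in $u$, e.g.\ via $(1+s)^u-s^u=u\int_s^{s+1}t^{u-1}\d t$) together with the moment bound $\int s^\beta\nu_0(\d s)<\infty$ for $\beta<\alpha$ inherited from \eqref{eq:technu0}; the pole at $\alpha$ is then immediate from $\Psi(\alpha)>0$ and ${\mathtt M}'(\alpha)>0$. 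For the lemma as stated your argument is more elementary and arguably cleaner (in particular, your positivity argument for the pole is more direct than the paper's contradiction via Taylor series). What the paper's heavier machinery buys is reuse: the representation \eqref{eq:X_recursion} is exactly what is exploited in Lemma~\ref{lem:X_growth} to control $|{\mathtt X}(u)|$ for large $|\Im u|$, which the later inverse-Mellin manipulations require. Your formula ${\mathtt X}=\Psi/(1-{\mathtt M})$ could serve the same purpose -- the fundamental-theorem-of-calculus bound gives $|\Psi(u)|=O(|u|)$ uniformly on vertical strips, and $1/(1-{\mathtt M}(u))$ is bounded for large $|\Im u|$ by Lemma~\ref{lem:RL} -- though it yields $O(|\Im u|)$ rather than the paper's $O(|\Im u|^{1/2})$; one would need to check this weaker growth still suffices downstream (it does, since only absolute convergence after division by $u(1-u)$ is used). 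One cosmetic remark: with the standard sign convention your residue at $\alpha$ comes out negative, $-\Psi(\alpha)/{\mathtt M}'(\alpha)$; this is consistent with $G_\xi>0$ once one accounts for the sign picked up when shifting the inversion contour rightward in \eqref{eq:G_xi_expansion}, and does not affect the lemma, which only asserts the pole is simple.
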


\begin{proof}
Since $\xi$ is a probability measure whose support is contained in $[1,\infty)$, it is clear that
\begin{equation}
\left| {\mathtt X}(u) \right| =
\left| \int t^{u} \xi (\d t) \right| \le \int t^{\Re u} \xi (\d t) \le 1
\label{eq:X_1bound}
\end{equation}
for all complex $u$ with $\Re u \le 0$; then on this region ${\mathtt X}$ is given by a uniformly absolutely convergent integral of an analytic function and is therefore itself analytic.  Similarly, recalling 
\begin{equation}
\label{eq:M(u)}
{\mathtt M}(u) \,=\,  \int t^u \mu(\d t),
\end{equation}
we see that ${\mathtt M}$ is an entire function (using the fact that the support of $\mu$ is bounded) with $|{\mathtt M}(u)| \le {\mathtt M}(\Re u)$. 

From the definition of $\xi$ as a translation of $\nu_0$, we see that \eqref{eq:T_def_integral} implies
\begin{equation}
\int f(t) \xi (\d t) = \int \int f\left( 1 + t z \right)  \mu(\d z) \xi (\d t)
\end{equation}
for all $\xi-$integrable $f$ (note that this is the stationarity condition of the iteration $t_{i+1} = 1+t_i z_i$, cf.\ \eqref{eq:s_recursion}).

Letting $f(t) = t^{u}$ we have
\begin{equation}
{\mathtt X}(u) = \int \int ( 1 + t z )^{u}  \mu(\d z)  \xi (\d t). \label{eq:xi_Mellin}
\end{equation}
 

The right hand side of \eqref{eq:xi_Mellin} can be rewritten using the Mellin-Barnes integral
\begin{equation}
(1 + tz)^{u} = \frac{1}{\Gamma(-u)}\frac{1}{2 \pi i }\int_{w_0 - i \infty}^{w_0 + i \infty} \Gamma(-w) \Gamma(w - u) (tz)^{w} \d w,
\end{equation}
valid for $\Re u < w_0 < 0$, giving us
\begin{equation}
{\mathtt X}(u) = \frac{1}{\Gamma(-u)}\frac{1}{2 \pi i }\int_{w_0 - i \infty}^{w_0 + i \infty} \Gamma(-w) \Gamma(w - u) {\mathtt X}(w) {\mathtt M}(w) \d w
\label{eq:X_countour}
\end{equation}
where we have used the exponential decay of the Gamma function in the imaginary direction to change the order of integration.  This decay is also more than enough to allow us to move the contour of integration.  For $\Re w \le 0$, the only poles of the integrand are those of $\Gamma(w-u)$, i.e.\ it has simple poles at $w = u,\  u - 1, \dots$.  The residue of $\Gamma(w-u)$ at $w=u$ is 1, so we have
\begin{equation}
{\mathtt X}(u) = {\mathtt X}(u) {\mathtt M}(u) + \frac{1}{\Gamma(-u)}\frac{1}{2 \pi i }\int_{w_0 - i \infty}^{w_0 + i \infty} \Gamma(-w) \Gamma(w - u) {\mathtt X}(w) {\mathtt M}(w) \d w
\end{equation}
for $\Re u -1 < w_0 < \Re u <0$, which we further rewrite as
\begin{equation}
{\mathtt X}(u) = \frac{1}{\Gamma(-u)[1-{\mathtt M}(u)]}\frac{1}{2 \pi i }\int_{w_0 - i \infty}^{w_0 + i \infty} \Gamma(-w) \Gamma(w - u) {\mathtt X}(w) {\mathtt M}(w) \d w.
\label{eq:X_recursion}
\end{equation}
As a function of $u$, the integral on the right hand side can be analytically continued into the  right half-plane, so long as we maintain the condition $\Re u - 1 < w_0 < \Re u$ and $w_0 < 0$ which prevents the contour of integration from encountering the poles of the Gamma functions.  
The right hand will have singularities only at the zeros of $1- {\mathtt M}(u)$, i.e.\ the solutions of \eqref{eq:A_def}.  
We have already seen that ${\mathtt X}$ is analytic for $\Re u \le 0$ (indeed the apparent singularity in \eqref{eq:X_recursion} at $u=0$ is removable, since $\Gamma(-u)$ also has a simple pole there).  
Since 0 and $\alpha$ are the only such zeros with $0 \le \Re u \le \alpha+\delta$, all that remains is to prove that $\alpha$ is not removable (it is obvious from \eqref{eq:X_recursion} that it is then a simple pole).

Suppose that ${\mathtt X}$ can be analytically continued at $\alpha$.  Then the Taylor series of ${\mathtt X}$ is absolutely convergent at some real $u$ with $u > \alpha$, giving
\begin{multline}
\sum_{n=0}^\infty \frac{(u - \alpha)^n}{n!} \int t^\alpha ( \log t)^n \xi (\d t) 
\,=\,  \int \sum_{n=0}^\infty \frac{(u - \alpha)^n}{n!}  t^\alpha ( \log t)^n \xi (\d t)
\,
\\ =\,  \int t^u \xi ( \d t) < \infty\, ,
\end{multline}
where the fact that all terms are positive on the support of $\xi$ has allowed us to exchange the sum and the integral.  
Then ${\mathtt X}(u)$ is given by a well-defined integral, and we can apply \eqref{eq:xi_Mellin} to obtain
\begin{equation}
{\mathtt X}(u) = \int \int (1+tz)^u \mu (\d z) \xi (\d t)
\ge {\mathtt M}(u) {\mathtt X}(u),
\end{equation}
which is impossible if ${\mathtt X}(u) \in (0,\infty)$, since $M(u) > 1$ for $u > \alpha$.  Since ${\mathtt X}(u)$ manifestly positive, we have obtained a contradiction.
\end{proof}

\subsection{Asymptotics of $\nu_0$}
\label{sec:nu_asymptotics}

Returning to the definition of ${\mathtt X}$,  \eqref{eq:X_def}, we can integrate by parts to obtain
\begin{equation}
{\mathtt X}(u) = u \int_1^\infty t^{u-1} G_\xi (t) \d t + G_\xi(1)
\end{equation}
for $\Re u \le 0$.  Noting furthermore that $G_\xi(1) = {\mathtt X}(0)$ 
we have
\begin{equation}
\frac{{\mathtt X}(u) - {\mathtt X}(0)}{u} 
=
\int_1^\infty t^{u-1} G_\xi (t) \d t, \label{eq:G_xi_Mellin}
\end{equation}
so that $[{\mathtt X}(u)-{\mathtt X}(0)]/u$ is the Mellin transform of $\ind_{[1,\infty)}G_\xi$ with fundamental strip containing $\Re u \le 0$.  
We can use the inverse Mellin formula to
convert this into a formula for $G_\xi$, noting that by \cite[Theorem 28]{Titchmarsh} this formula is valid since $G_\xi$ is a continuous function (by Proposition~\ref{lem:nu0}) with bounded local variation (being bounded and monotone).
This gives  
%
%
%
\begin{equation}
G_\xi (t) = \frac{1}{2 \pi i	} \int_{u_0 - i \infty}^{u_0 + i \infty}  t^{-u} \frac{ {\mathtt X}(u) - {\mathtt X}(0)}{u} \d u
\label{eq:G_inverse_Mellin}
\end{equation}
for all $u_0 \le 0$ and $t > 1$. 

To get asymptotics from this expression we will need to displace the contour of integration further to the right, which requires some control on the growth of ${\mathtt X}$.  We can obtain this from \eqref{eq:X_recursion} as follows:
\begin{lemma} \label{lem:X_growth}
For $|\Im u|$ large,
\begin{equation}
|{\mathtt X}(u)| = \O{|\Im u |^{1/2}|}
\end{equation}
uniformly for $u_0 \in [0,U]$ for any $U < 1$.
\end{lemma}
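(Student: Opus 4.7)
The plan is to read off the bound directly from the Mellin--Barnes representation~\eqref{eq:X_recursion} via Stirling's formula, exploiting the cancellation between the exponentials coming from $|\Gamma(-u)|^{-1}$ and from the integrand. Set $T := |\Im u|$ and fix $U \in (0,1)$; write $u_0 := \Re u \in [0,U]$.

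First I would fix the contour $\Re w = w_0$ in~\eqref{eq:X_recursion} independently of $u$, by taking $w_0 = -\eta$ with $\eta \in \bigl(0, \min(1-U,\tfrac12)\bigr)$, so that $w_0 \in (u_0-1,0)$ and $w_0 > -\tfrac12$ uniformly for every $u_0 \in [0,U]$. On this contour $|{\mathtt X}(w)| \le 1$ by~\eqref{eq:X_1bound}, and the hypothesis that $\mu$ is $C^1$ with compact support in $(0,\infty)$ yields, via one integration by parts as in Lemma~\ref{lem:RL}, the quantitative decay $|{\mathtt M}(w_0+iv)| \le C/(1+|v|)$. The same estimate applied at $u$ gives $|1-{\mathtt M}(u)| \ge \tfrac12$ for $T$ large, uniformly for $u_0 \in [0,U]$.

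Next, Stirling's formula gives, for $|v|$ and $|v-\Im u|$ bounded away from $0$,
\[
|\Gamma(-w_0-iv)\,\Gamma(w_0+iv-u)| \le C\, |v|^{-w_0-1/2}\, |v-\Im u|^{w_0-u_0-1/2}\, e^{-\pi(|v|+|v-\Im u|)/2},
\]
together with $|\Gamma(-u)|^{-1} \le C\, T^{u_0+1/2} e^{\pi T/2}$. The key observation is that $|v|+|v-\Im u| \ge T$ with equality exactly on the segment between $0$ and $\Im u$, so the exponentials cancel on this segment and decay strictly faster outside. I would split the contour integral into three parts: the bulk $v \in [1,T-1]$ (WLOG $\Im u = T > 0$), the bounded regions near $v=0$ and $v=T$ where the Gamma factors are handled by uniform boundedness, and the exterior $v \notin [0,T]$ where the extra exponential decay kills the contribution after multiplication by $|\Gamma(-u)|^{-1}$. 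On the bulk the substitution $v = Tx$ reduces the problem to estimating
\[
e^{-\pi T/2}\, T^{-u_0-1}\int_{1/T}^{1-1/T} x^{-w_0-3/2}(1-x)^{w_0-u_0-1/2}\, \d x,
\]
a truncated Beta integral whose two endpoints contribute respectively $O(T^{w_0+1/2})$ (from $x\to 0$, thanks to $w_0 > -\tfrac12$) and $O(T^{u_0-w_0-1/2})$ (from $x\to 1$, relevant when $u_0 > w_0+\tfrac12$).

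Assembling all pieces, the $e^{\pi T/2}$ and $e^{-\pi T/2}$ cancel and the surviving power of $T$ is bounded by $\max(T^{w_0}, T^{u_0 - 1/2})$; since $w_0 < 0$ and $u_0 \le U < 1$, this gives $|{\mathtt X}(u)| = O(T^{U-1/2}) = O(T^{1/2})$, uniformly in $u_0 \in [0,U]$. The main technical hurdle is the careful book-keeping needed to glue Stirling in the bulk to the boundary regions near $v=0$ and $v=T$ and to keep all implicit constants uniform in $u_0$; the truncated Beta integral is the only place where the $T^{1/2}$ exponent genuinely appears, and all other contributions are $o(T^{1/2})$.
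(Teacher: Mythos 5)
Your argument is correct and rests on the same skeleton as the paper's: the representation \eqref{eq:X_recursion}, the bound $|{\mathtt X}(w_0+ix)|\le 1$ from \eqref{eq:X_1bound}, and Stirling's formula to cancel the $e^{\pi|\Im u|/2}$ coming from $|\Gamma(-u)|^{-1}$ against the $e^{-\pi(|x|+|x-\Im u|)/2}$ in the integrand. The genuine difference is that you keep the decay $|{\mathtt M}(w_0+iv)|\le C/(1+|v|)$ \emph{inside} the contour integral, whereas the paper pulls $|{\mathtt M}|$ out as the constant ${\mathtt M}(w_0)$ (see \eqref{eq:X_decay_bound1}) and applies the Stirling asymptotic over the whole line. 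This refinement is not cosmetic. With a fixed $w_0\in(-\frac12,0)$ and $u_0>w_0+\frac12$ (which occurs once $U>\frac12$), the Stirling-model integrand $|x-\Im u|^{w_0-u_0-1/2}$ has exponent below $-1$ and is not integrable at $x=\Im u$; the true contribution of a unit neighbourhood of $x=\Im u$ is of order $|\Im u|^{-w_0-1/2}e^{-\pi|\Im u|/2}$, which after multiplication by $|\Gamma(-u)|^{-1}$ gives $O(|\Im u|^{u_0-w_0})$ and can exceed $|\Im u|^{1/2}$. Your extra factor $1/|v|$ near $v=\Im u$ reduces this to $O(|\Im u|^{u_0-w_0-1})=o(1)$, so your version closes precisely the case that the paper's displayed chain of estimates glosses over; in this sense your proof is the more complete one. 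Two small bookkeeping remarks. First, the $x\to 1$ endpoint of your truncated Beta integral yields, after the prefactor, $T^{u_0-w_0-1}$ rather than $T^{u_0-1/2}$ (these coincide only for $w_0=-\frac12$); since $w_0>-\frac12$ gives $T^{u_0-w_0-1}\le T^{u_0-1/2}\le T^{U-1/2}$, your conclusion is unaffected, but the exponent you state is not the one your computation produces. Second, the ``uniform boundedness'' near $v=\Im u$ applies only to the factor $\Gamma(w_0-u_0+i(v-\Im u))$ (the other factor there is estimated by Stirling and is exponentially small), and it requires $w_0-u_0$ bounded away from $0$ and $-1$; this is exactly what your choice $\eta<\min(1-U,\frac12)$ guarantees, and it is worth saying so explicitly.
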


\begin{proof}
Fix some $w_0 \in (-1/2,0)$ so that $u_0 - 1 < w_0$.
Then recalling $|{\mathtt X}(w_0 + i x )| \le 1$ (see \eqref{eq:X_1bound}), \eqref{eq:X_recursion} yields
\begin{equation}
|{\mathtt X}(u)| \le 
\frac{|{\mathtt M}(w_0)|}{|1- {\mathtt M}(u)|} 
\frac{1}{|\Gamma(-u)|}
\frac{1}{2 \pi}
\int_{-\infty}^\infty \left|
\Gamma ( - w_0 - ix) \Gamma(w_0 + i x - u)
\right| \d x.
\label{eq:X_decay_bound1}
\end{equation}
Lemma~\ref{lem:RL} shows that ${\mathtt M}(u) \to 0$ as $\Im u \to \pm \infty$, and examining the proof we see that the convergence is in fact uniform, so the first ratio on the right hand side is uniformly $O(1)$.

Using Stirling's series and then the expansion $\arctan(x)-\pi/2+1/x=O(1/x^3)$
  for $x\to \infty$, we have
\begin{multline}
\log | \Gamma (z) | = \Re \Log \Gamma (z)
= \\
\frac12 \log 2\pi + \left( \Re z - \frac12 \right) \log |z| - \Re z - (\Im z) \arg z + \O{\frac{1}{\Im z}}
\\
= \frac12 \log 2\pi + \left( \Re z - \frac12 \right) \log |z|  - \frac{\pi}{2} |\Im z| + \O{\frac{1}{\Im z}}\, ,
 \label{eq:Stirling}
\end{multline}
 where the first expression holds for $\Im z$ large uniformly in $\Re z$ \cite{Spira} and the second 
 holds uniformly over compact sets.
Thus for $\Im u$ large, 
\begin{multline}
\int_{-\infty}^\infty \left|
\Gamma ( - w_0 - ix) \Gamma(w_0 + i x - u)
\right| \d x
=\\
\O{
\int_{-\infty}^\infty
|x|^{-w_0 - \tfrac12} 
\left| x - \Im u \right|^{w_0 - u_0 - \tfrac12}
\exp \left(
	- \frac{\pi}{2} \left[ |x| + |x - \Im u| \right]
\right)
\d x
}
\\=
\O{
	\exp \left(
		- \frac{\pi}{2} | \Im u|
	\right)
	|\Im u |^{-u_0}
}
\end{multline}
uniformly for $\Re u \in [0,U]$. 
Plugging this into \eqref{eq:X_decay_bound1} and estimating $\Gamma(-u)$ using \eqref{eq:Stirling}, we obtain the desired result.
\end{proof}

Along with Lemma~\ref{lem:X_mellin}, this allows us to displace the contour in \eqref{eq:G_inverse_Mellin} to obtain
\begin{equation}
G_\xi (t) =  \frac{\res_{\mathtt X} (\alpha) }{\alpha} t^{-\alpha}
 +  \frac{1}{2 \pi i	} \int_{u_0 - i \infty}^{u_0 + i \infty}  t^{-u} \frac{{\mathtt X}(u) - {\mathtt X}(0) }{u} \d u\, ,
\label{eq:G_xi_expansion}
\end{equation}
for all $t > 1$ and any $u_0 \in (\alpha,U]$ where $\res_{\mathtt X}(a)$ denotes the residue of ${\mathtt X}$ at $a$ and for some $U > \alpha + \delta$.  If the integral on the right hand side were absolutely convergent, it would be $O(t^{-\alpha - \delta})$.  
Unfortunately this is not the case; however all is not lost.  

Denoting the value of this integral by $R_\xi(t)$ and noting that it is independent of $u_0$ within the specified interval, for any $s \in (\alpha,U)$ we can choose $u_0 \in (\alpha,s)$ and obtain
\begin{equation}
  \begin{split}
    & \int_0^\infty t^{s-1} R_\xi (t) \d t
    \\
    & \ =
    \frac1{2\pi i} \bigg[ 
      \int_0^1 \int_{u_0 - i \infty}^{u_0 + i \infty}
      t^{s-u-1} \frac{{\mathtt X}(u) - {\mathtt X}(0) }{u}
      \d u \d t
      \\ 
  &    \phantom{movemovemovemove}
      + 
      \int_1^\infty \int_{U - i \infty}^{U + i \infty}
      t^{s-u-1} \frac{{\mathtt X}(u) - {\mathtt X}(0) }{u}
      \d u \d t
    \bigg]
    \\
    & \ = 
    \frac1{2\pi i} \left[ 
      \int_{u_0 - i \infty}^{u_0 + i \infty}
      t^{s-u} \frac{{\mathtt X}(u) - {\mathtt X}(0) }{u(s-u)}
      \d u \d t
      - 
      \int_{U - i \infty}^{U + i \infty}
      t^{s-u} \frac{{\mathtt X}(u) - {\mathtt X}(0) }{u(s-u)}
      \d u \d t
    \right]
    \\
    & \ =
    \frac{{\mathtt X}(s) - {\mathtt X}(0) }{s},
  \end{split}
  \end{equation}
noting that the $t$ integrals are uniformly convergent and recognizing the resulting expression as an integral around a closed contour.  
This establishes that the Mellin transorm of $R_\xi$ on the strip containing $\alpha+\delta$ is $\frac{{\mathtt X}(u) - {\mathtt X}(0) }{u}$, and \cite[Theorem~11.10.1]{ML.transforms} gives an expression for the unique distribution with that property:
\begin{equation}
  R_\xi (t)
  =
  \frac{\d^2 }{\d t^2} \left[ \frac{1}{2\pi i}
  	\int_{\alpha+ \delta - i \infty}^{\alpha+\delta + i \infty}
      		\frac{{\mathtt X}(u) - {\mathtt X}(0) }{u(u-1)(u-2)} t^{2-u}
		\d u
	      \right].
\end{equation}
The bounds on the growth of ${\mathtt X}$ form Lemma~\ref{lem:X_growth} are sufficient to take one of the derivatives inside the integral, giving $R_\xi(t) = Q_\xi'(t)$ where
\begin{equation}
Q_\xi (t) := \frac{1}{2 \pi i} \int_{\alpha + \delta - i \infty}^{\alpha + \delta + i \infty} \frac{{\mathtt X}(u) - {\mathtt X}(0)}{u (1 - u)} t^{1-u} \d u,
\label{eq:Q_xi_def}
\end{equation}
and Lemma~\ref{lem:X_growth} suffices for \textit{this} integral to be absolutely convergent.  
As a result there is a constant $C_\xi$ such that
\begin{equation}
\left|Q_\xi(t)\right| \le C_\xi t^{1-\alpha - \delta}
\label{eq:Q_xi_bound}
\end{equation}
for all $t > 0$.

Noting that the definition of $\xi$ is such that $G_{\nu_0} (t) = G_\xi ( t+1)$, can use this to obtain:

\medskip

\begin{lemma} \label{lem:nu_asymptotics}
There is a function $R_\nu:\mathbb{R}^+ \to \mathbb{R}$ and a constant $C_\nu \neq 0$ such that 
\begin{equation}
G_{\nu_0} (t) = C_\nu \left[t^{-\alpha} + R_\nu (t) \right]
\label{eq:Gnu_expansion}
\end{equation}
where 
\begin{equation}
	R_\nu ( t) = \O{t^{-\alpha - \delta}}
	\label{eq:Rnu}
\end{equation}
for $ t \to \infty$.
\end{lemma}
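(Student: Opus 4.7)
The plan is to transfer the pointwise bound on the primitive $Q_\xi$ of $R_\xi$ into a pointwise bound on $G_{\nu_0}$ itself, via the smoothing supplied by the fixed-point equation $T_0 \nu_0 = \nu_0$. The key idea is that although $R_\xi$ is only controlled ``in a distributional sense'' (as the derivative of the bounded function $Q_\xi$), convolving against the $C^1$ density $\mu$ restores pointwise control.

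First I would integrate the expansion \eqref{eq:G_xi_expansion} from $1$ to $v$: since $R_\xi = Q_\xi'$ holds pointwise (the derivative having been pulled inside the Mellin integral thanks to Lemma~\ref{lem:X_growth}), the fundamental theorem of calculus and \eqref{eq:Q_xi_bound} give
\begin{equation}
  \int_1^v G_\xi(w)\,\d w \, =\, \frac{C_\nu}{1-\alpha}\,v^{1-\alpha} \,+\, \hat\Phi(v)\, , \qquad |\hat\Phi(v)| \, =\, O(v^{1-\alpha-\delta})\, ,
\end{equation}
with $C_\nu := \res_{\mathtt X}(\alpha)/\alpha$, nonzero by Lemma~\ref{lem:X_mellin} (which identifies $\alpha$ as a genuine simple pole), and where $\delta \le 1-\alpha$ ensures the exponent is non-negative so that $\hat\Phi$ is a genuine error. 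Next, starting from \eqref{eq:T_mu_CDF} at $\eps = 0$ and integrating by parts in $s$ (boundary terms vanish for $\tau > c_+$ because $G_\mu(\tau) = 0$ and $G_{\nu_0}(\infty) = 0$), then changing variables $u = \tau/(1+s)$, the identity $T_0 \nu_0 = \nu_0$ takes the form
\begin{equation}
  \label{eq:planFP}
  G_{\nu_0}(\tau) \, =\, \int_{c_-}^{c_+} \mu(u)\, G_{\nu_0}(\tau/u - 1)\,\d u \qquad \text{for } \tau > c_+\, .
\end{equation}
Writing $G_{\nu_0}(v-1) = G_\xi(v) = \Phi'(v)$ where $\Phi(v) := \int_1^v G_\xi$, and substituting the splitting $\Phi'(v) = C_\nu v^{-\alpha} + \hat\Phi'(v)$ into \eqref{eq:planFP}, the leading piece contributes exactly $C_\nu\tau^{-\alpha}\int\mu(u)\,u^\alpha\,\d u = C_\nu \tau^{-\alpha}$ by \eqref{eq:A_def}.

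The remaining contribution is $\int_{c_-}^{c_+}\mu(u)\,\hat\Phi'(\tau/u)\,\d u$; here I would change variables $v = \tau/u$ and integrate by parts in $v$, transferring the derivative onto the kernel $\tau\mu(\tau/v)/v^2$. The boundary terms vanish, since hypothesis~(3) (continuity of $\mu$) together with the compact support given by hypothesis~(2) force $\mu(c_\pm) = 0$. For $v$ in the range of integration one has $v \asymp \tau$, so the derivative of the kernel is $O(\tau^{-2})$, and combined with $\hat\Phi(v) = O(\tau^{1-\alpha-\delta})$ and a range of length $O(\tau)$, this yields a remainder of order $O(\tau^{-\alpha-\delta})$. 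Altogether $G_{\nu_0}(\tau) = C_\nu\tau^{-\alpha} + O(\tau^{-\alpha-\delta})$, which after dividing by $C_\nu$ is precisely \eqref{eq:Gnu_expansion}--\eqref{eq:Rnu}. The main technical obstacle is verifying that the boundary contributions in the integration by parts really vanish; this is exactly where hypotheses (2) and (3) combine, since a $C^1$ density of compact support on $\R$ must vanish at the endpoints of its support.
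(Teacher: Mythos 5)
Your argument is correct and is essentially the paper's own proof: both reinject the primitive-level estimate $|Q_\xi(v)|=O(v^{1-\alpha-\delta})$ into the fixed-point identity $G_{\nu_0}(\tau)=\int G_{\nu_0}(\tau/u-1)\,\mu(u)\,\d u$, recover the leading term from $\bbE[Z^\alpha]=1$, and convert the bound on the primitive into a pointwise bound on the remainder by the change of variables $v=\tau/u$ and integration by parts onto the $C^1$ density (with $\mu(c_\pm)=0$ killing the boundary terms). The only cosmetic difference is that you keep the leading power as $(\tau/u)^{-\alpha}$, which reproduces $\tau^{-\alpha}$ exactly, whereas the paper expands $(\tau/u-1)^{-\alpha}$ by the generalized binomial theorem and absorbs an $O(\tau^{-1-\alpha})$ correction.
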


\medskip

\begin{proof}
	Taking \eqref{eq:Gnu_expansion} as a definition of $R_\nu$ with $C_\nu = \res_{\mathtt X} (\alpha) / \alpha$, then we obtain Equation~\eqref{eq:Gnu_expansion} with 
$R_\nu(t) = \frac{1}{C_\nu} Q'_\xi(t+1)$.
Then recalling that $G_{\nu_0} = G_{T_0 \nu_0}$ and writing out the action of $T_0$ as in Equation~\eqref{eq:alt-G}, we obtain
\begin{equation}
	t^{-\alpha} + R_\nu (t)
	=
	\int \left[ \left( \frac{t}{z} - 1 \right)^{-\alpha}
	+ Q'_\xi \left( \frac{t}{z} \right)\right]
	\mu (\d z) \ ,
	\label{eq:nu_stationary_expansion}
\end{equation}
where we have used $g_0^{-1}(x) = x-1$ and cancelled a factor of $C_\nu$.  Using the generalized binomial theorem,
\begin{equation}
	\left( \frac{t}{z} - 1 \right)^{-\alpha}
	= \left( \frac{z}{t} \right)^{\alpha} \left( 1 - \frac{z}{t} \right)^{-\alpha}
	= \left( \frac{z}{t} \right)^{\alpha} \sum_{k=0}^\infty \binom{-\alpha}{k} \left( \frac{z}{t} \right)^k,
\end{equation}
where the series on the right-hand side is absolutely convergent for $t > z$; then using the fact that $\mu$ is a probability measure with support $(c_-,c_+)$ and $\int z^\alpha \mu(\d z) = 1$,
\begin{equation}
	\int  \left( \frac{t}{z} - 1 \right)^{-\alpha}
	\mu (\d z)
	=
    t^{-\alpha} + \sum_{k=1}^\infty \binom{-\alpha}{k} t^{-\alpha - k} \int z^{k+\alpha} \mu ( \d z)
    =
    t^{-\alpha} + \O{t^{-1-\alpha}}
    ,
    \label{eq:nu_expansion_powerterm}
\end{equation}
where the sum is absolutely convergent for $\tau > c_+$.

As for the other term in Equation~\eqref{eq:nu_stationary_expansion}, we can make the change of variables $x = t/z$, use the fact that $\mu$ has a $C^1$ density to integrate by parts, and estimate $Q_\xi$ using Equation~\eqref{eq:Q_xi_bound} to obtain
\begin{equation}
	\begin{split}
	  &\int_{c_-}^{c_+} Q'_\xi \left( \frac{t}{z} \right) \mu (\d z)
		=
		t \int_{t/c_+}^{t/c_-} Q'_\xi (x) \mu \left( \frac{t}{x} \right) \frac{\d x}{x^2}
		\\ & \quad =
		t \int_{t/c_+}^{t/c_-} Q_\xi (x) \left[ \frac2{x^3}  \mu \left( \frac{t}{x} \right)
			+ \frac{2 t}{x^4} \mu' \left( \frac{t}{x} \right)
		\right] \d x
		= 
		\O{t^{-\alpha-\delta}}.
	\end{split}
\end{equation}
Substituting this and Equation~\eqref{eq:nu_expansion_powerterm} into Equation~\eqref{eq:nu_stationary_expansion}, we obtain Equation~\eqref{eq:Rnu}.

\end{proof}

\subsection{Asymptotics of $\omega_0$}
\label{sec:omega_asymptotics}

Defining a measure $\zeta$ by applying the change of variables $\tau = h_0(\sigma) = \sigma/(1+\sigma)$ to $\omega_0$, or in other words
$G_\zeta (\tfrac{\sigma}{1+\sigma}) = G_{\omega_0} (\sigma)$, then from the corresponding properties of $\omega_0$ it is clear that $\zeta$ is absolutely continuous and supported within $[0,c_+/(1+c_+)$.
 
Letting
\begin{equation}\label{eq:Z_def}
{\mathtt Z}(u) := \int \tau^u \zeta(\dd \tau),
\end{equation}
we will show the following counterpart of Lemma~\ref{lem:X_mellin}:

\begin{lemma}
\label{lem:Z_mellin_prelim}
There is a $U > 0$ such that ${\mathtt Z}(u)$ is analytic for $\Re u > U$, and has an analytic continuation for $0 \le \Re u \le U $ apart from the points where ${\mathtt M}(u) = 1$ with $u \neq 0$. 
\end{lemma}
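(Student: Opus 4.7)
The plan is to carry out the same strategy as in Lemma~\ref{lem:X_mellin}, with the roles of ``probability measure on $[1,\infty)$'' and the initial half-plane of analyticity ``$\Re u \le 0$'' now played by ``infinite measure on $(0,1)$'' and ``$\Re u > U$'' respectively. As a preliminary I would establish the initial analyticity on a right half-plane. Since $\zeta$ is the pushforward of $\omega_0$ under $\tau = h_0(\sigma)$,
\[
  {\mathtt Z}(u) \,=\, \int \left( \frac{\sigma}{1+\sigma} \right)^u \omega_0(\d \sigma),
\]
and because $\tau = h_0(\sigma) \le \sigma$ while the support of $\zeta$ lies in $(0, c_+/(1+c_+)) \subset (0,1)$, one has $|{\mathtt Z}(u)| \le \int \sigma^{\Re u} \omega_0(\d \sigma) \le \int \sigma^U \omega_0(\d \sigma) < \infty$ for $\Re u \ge U$, with $U > 0$ as in Proposition~\ref{th:omega0}. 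Hence ${\mathtt Z}$ is analytic on $\Re u > U$, with $|{\mathtt Z}(u)|$ bounded uniformly there by ${\mathtt Z}(U)$.

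Second, I would derive a Mellin-Barnes functional equation for ${\mathtt Z}$. Transporting $S_0 \omega_0 = \omega_0$ through $\tau = h_0(\sigma)$ yields the stationarity relation
\[
  \int g(\tau)\,\zeta(\d \tau) \,=\, \int\int g\!\left( \frac{z\tau}{1+z\tau} \right) \mu(\d z)\, \zeta(\d \tau),
\]
since the recursion $\sigma \mapsto z h_0(\sigma)$ is conjugate to $\tau \mapsto z\tau/(1+z\tau)$. Choosing $g(\tau) = \tau^u$ for $\Re u > U$, inserting the Mellin-Barnes identity
\[
  (1+z\tau)^{-u} \,=\, \frac{1}{\Gamma(u)} \frac{1}{2\pi i} \int_{w_0 - i\infty}^{w_0 + i\infty} \Gamma(-w)\,\Gamma(w+u)\,(z\tau)^w\, \d w, \qquad -\Re u < w_0 < 0,
\]
and applying Fubini (legitimate provided $w_0 \ge U - \Re u$, so that $\int (z\tau)^{\Re u + w_0}\,\mu(\d z)\zeta(\d \tau) < \infty$), I obtain
\[
  {\mathtt Z}(u) \,=\, \frac{1}{\Gamma(u)} \frac{1}{2\pi i} \int_{w_0 - i\infty}^{w_0 + i\infty} \Gamma(-w)\,\Gamma(w+u)\,{\mathtt M}(w+u)\,{\mathtt Z}(w+u)\, \d w
\]
for $\Re u > U$ with $w_0 \in [U - \Re u, 0)$.

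Third, I would shift the contour past $w = 0$ to $w_0 \in (0,1)$, enclosing the simple pole of $\Gamma(-w)$ with residue $-1$. The resulting residue contribution is $-\Gamma(u){\mathtt M}(u){\mathtt Z}(u)$, and (paying attention to the clockwise orientation of the closed contour when shifting to the right) rearrangement yields
\[
  {\mathtt Z}(u) \,=\, \frac{1}{\Gamma(u)\,(1-{\mathtt M}(u))}\,\frac{1}{2\pi i} \int_{w_0 - i\infty}^{w_0 + i\infty} \Gamma(-w)\,\Gamma(w+u)\,{\mathtt M}(w+u)\,{\mathtt Z}(w+u)\, \d w,
\]
valid for $w_0 \in (0,1)$ and $\Re u > U - w_0$. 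The right-hand side is a meromorphic continuation of ${\mathtt Z}$ to $\Re u > U - 1$, whose only possible singularities are zeros of $\Gamma(u)(1 - {\mathtt M}(u))$. Since $\Gamma$ has no zeros and its simple pole at $u = 0$ is cancelled exactly by the simple zero of $1-{\mathtt M}(u)$ there (using ${\mathtt M}(0) = 1$ and ${\mathtt M}'(0) = \bbE[\log Z] \neq 0$), the continuation is holomorphic at $u = 0$; every other zero of $1 - {\mathtt M}$ in the strip gives a simple pole of ${\mathtt Z}$.

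If $U \le 1$ the strip $[0,U]$ is already covered. Otherwise I would iterate, shifting $w_0$ successively across $w = 1, 2, \ldots$. Each crossing of $w = n$ adds an analytic term proportional to $u(u+1)\cdots(u+n-1)\,{\mathtt M}(u+n)\,{\mathtt Z}(u+n)/(1-{\mathtt M}(u))$, which is holomorphic on $\Re u > U - n$ (since $\Re(u+n) > U$ keeps ${\mathtt Z}(u+n)$ in the established analytic region) apart from the same zeros of $1-{\mathtt M}$, and a direct check shows that the cancellation of the pole at $u = 0$ is preserved. The main technical obstacle is the bookkeeping of these iterated residues and the uniform Fubini and contour-shift estimates, all of which depend on Stirling-type decay of $\Gamma(-w)\Gamma(w+u)$ in $|\Im w|$, the uniform bound $|{\mathtt Z}(u)| \le {\mathtt Z}(U)$ on $\Re u > U$, and a uniform form of the decay of ${\mathtt M}$ supplied by Lemma~\ref{lem:RL}.
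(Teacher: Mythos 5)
Your proposal follows essentially the same route as the paper's proof: initial analyticity on $\Re u > U$ via the moment bound of Proposition~\ref{th:omega0}, the stationarity relation \eqref{eq:zeta_mellin}, a Mellin--Barnes representation (yours is the paper's \eqref{eq:zeta_prerecursion} after the substitution $w \mapsto w+u$), a single contour shift across a Gamma pole to obtain the functional equation \eqref{eq:Z_recursion}, and the observation that the zero of $1-{\mathtt M}$ at $u=0$ is cancelled by the $\Gamma(u)$ factor. The only addition is your explicit iterated residue extraction for $U>1$, a case the paper's proof passes over silently; the argument is correct.
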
 
\begin{proof}
The control on the growth of $\omega_0$ at the origin demonstrated in Theorem~\ref{th:omega0}
there is some $U > 0$ such that for all $u \in \mathbb{C}$ with  $\Re u > U$ the integral in the definition~\eqref{eq:Z_def} of ${\mathtt Z}$ is absolutely convergent, and therefore defines an analytic function of $u$.

Equation~\eqref{eq:S_def_CDF} with $f(x) = x^u$ implies that $\zeta$ satisfies
\begin{equation}
\int \tau^u  \zeta(\d \tau) = \int \int \left( \frac{z \tau}{1 + z \tau}\right)^u \mu (\d z) \zeta (\d \tau).
\label{eq:zeta_mellin}
\end{equation}
Then  using the identity
\begin{equation}
\left( \frac{z \tau}{1 + z \tau}\right)^u 
=
\frac{1}{\Gamma(u)} \frac{1}{2 \pi i} 
\int_{w_0 - i \infty}^{w_0 + i \infty} (z\tau)^{w} \Gamma(u-w) \Gamma(w) \d w
\end{equation}
(obtained from the formula for the Mellin transform of the Beta function, and valid for $0 < w_0 < \Re u$), \eqref{eq:zeta_mellin} can be rewritten 
\begin{equation}
{\mathtt Z}(u) 
= 
\frac{1}{\Gamma(u)} \frac{1}{2 \pi i} 
\int_{w_0 - i \infty}^{w_0 + i \infty} \Gamma(u-w) \Gamma(w) {\mathtt Z}(w) {\mathtt M}(w) \d w.
\label{eq:zeta_prerecursion}
\end{equation}
Displacing the contour of integration to the right across the pole of $\Gamma(u-w)$ at $u=w$ and rearranging, we obtain the counterpart of \eqref{eq:X_recursion},
\begin{equation}
{\mathtt Z}(u) 
= 
\frac{1}{\Gamma(u) [1 - {\mathtt M}(u)]} \frac{1}{2 \pi i} \int_{w_0 - i \infty}^{w_0 + i \infty} \Gamma(u-w) \Gamma(w) {\mathtt Z}(w) {\mathtt M}(w) \d w,
\label{eq:Z_recursion}
\end{equation}
for $U < \Re u < w_0 <  \Re u + 1$.  The integral on the right-hand side can be analytically continued in $u$ so long as the contour of integration is displaced to maintain the condition $\Re u < w_0 <  \Re u + 1$ and $w_0 > 0$, allowing the whole expression to be analytically continued as well, apart from the zeros of $1 - {\mathtt M}(u)$.  Note that although $1-{\mathtt M}(0) = 0$, the associated pole is removable, thanks to the factor of $\Gamma(u)$.
\end{proof}

The fact that we have little control over the value of $U$ will make using this result inconvenient when we attempt to repeat the analysis of Section~\ref{sec:nu_asymptotics}, but we can refine the result as follows:

\begin{lemma}
The integral
\begin{equation}
\int \sigma^u \zeta ( \d \sigma)
\end{equation}
is absolutely convergent whenever $\Re u > \alpha$.
\label{lem:Z_strip}
\end{lemma}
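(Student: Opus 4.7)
The plan is to combine the analytic continuation from Lemma~\ref{lem:Z_mellin_prelim} with a Landau-type theorem for Mellin transforms of non-negative functions. Since $|\sigma^u|=\sigma^{\Re u}$ and $\zeta$ is a positive measure, absolute convergence reduces to convergence of $\int \sigma^v \zeta(\d\sigma)$ for $v = \Re u > \alpha$, so from here on I treat $u$ as real with $u>\alpha$.

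I will work with the auxiliary Mellin transform
\[
H(u) := \int_{0}^{M}\tau^{u-1}G_{\zeta}(\tau)\,\d\tau,\qquad M:=\frac{c_{+}}{1+c_{+}},
\]
which is the Mellin transform of the non-negative, non-increasing function $G_{\zeta}\ind_{[0,M]}$ (recall that $\zeta$ is supported in $[0,M]$ and $G_{\zeta}(M)=0$). For $u>U$, the polynomial bound on $G_{\omega_{0}}$ near the origin from Lemma~\ref{th:powergrowth}, transported to $G_{\zeta}$ through $G_{\zeta}(\tau)=G_{\omega_{0}}(\tau/(1-\tau))$, gives $G_{\zeta}(\tau)=O(\tau^{-U})$ as $\tau\to 0$ and hence $\eta^{u}G_{\zeta}(\eta)\to 0$ as $\eta\to 0$; integration by parts then yields the identity $H(u)={\mathtt Z}(u)/u$ for $u>U$. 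In particular the abscissa of convergence $\sigma_{c}:=\inf\{u\in\bbR:H(u)<\infty\}$ satisfies $\sigma_{c}\le U$.

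The crucial step is the classical Landau theorem for Laplace/Mellin transforms of non-negative functions (see for instance Widder's \emph{The Laplace Transform}, Chapter II): after the substitution $\tau=e^{-t}$, $H(u)$ becomes the Laplace transform of the non-negative function $\phi(t):=G_{\zeta}(e^{-t})\ind_{\{t\ge\log(1/M)\}}$, so the finite abscissa $\sigma_{c}$ must be a singularity of any analytic continuation of $H$. By Lemma~\ref{lem:Z_mellin_prelim}, such an analytic continuation is given by $H^{\mathrm{ac}}(u)={\mathtt Z}^{\mathrm{ac}}(u)/u$, and its singularities in $\Re u>0$ lie only among the zeros of $1-{\mathtt M}(u)$ with $u\neq 0$. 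The strict convexity of $\beta\mapsto{\mathtt M}(\beta)$ on $\bbR$, together with ${\mathtt M}(0)={\mathtt M}(\alpha)=1$ and ${\mathtt M}'(0)<0$, forces the only real zeros of $1-{\mathtt M}$ in $[0,\infty)$ to be $0$ and $\alpha$. Since $\sigma_{c}$ is real, this yields $\sigma_{c}\le\alpha$.

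To conclude, for any real $u>\alpha$ one has $H(u)<\infty$, and monotonicity of $G_{\zeta}$ gives
\[
\int_{\eta}^{2\eta}\tau^{u-1}G_{\zeta}(\tau)\,\d\tau\ge\frac{2^{u}-1}{u}\,\eta^{u}\,G_{\zeta}(2\eta),
\]
so finiteness of the full integral at $u$ forces $\eta^{u}G_{\zeta}(\eta)\to 0$; integration by parts then extends the identity ${\mathtt Z}(u)=uH(u)<\infty$ to all real $u>\alpha$, establishing the lemma. The principal obstacle is invoking Landau's theorem correctly in this setting and ruling out unexpected real singularities of ${\mathtt Z}^{\mathrm{ac}}/u$ in $(\alpha,U]$; both are secured by Lemma~\ref{lem:Z_mellin_prelim} combined with the convexity of ${\mathtt M}$ on the real axis.
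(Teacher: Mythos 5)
Your proof is correct and rests on essentially the same mechanism as the paper's: a Landau-type principle (the real point of the abscissa of convergence of a transform with non-negative integrand must be a singularity of the analytic continuation), combined with the fact that the continuation provided by Lemma~\ref{lem:Z_mellin_prelim} has no real singularity in $(\alpha,U]$ because $0$ and $\alpha$ are the only real solutions of ${\mathtt M}(u)=1$. The only differences are presentational: the paper proves the Landau step inline by expanding ${\mathtt Z}$ in a Taylor series at a point of convergence and using positivity to exchange sum and integral, whereas you cite Landau's theorem and route the argument through the Mellin transform of $G_\zeta$, which costs you the (correctly handled) integration-by-parts bookkeeping to return to ${\mathtt Z}$.
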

\begin{proof}
Let
\begin{equation}
A := \inf \left\{ u \in \mathbb{R} \middle| \int \sigma^u \zeta ( \d \sigma)  < \infty \right\}.
\end{equation}
Suppose $A > \alpha$.  Then from Lemma~\ref{lem:Z_mellin_prelim}, $\alpha$ is the only pole of ${\mathtt Z}$ on the positive real axis, so ${\mathtt Z}$ is analytic at $A$ and there is a $u > A$ such that the Taylor series of ${\mathtt Z}$ at $u$ converges on an open disk containing $A$.
The Taylor series of ${\mathtt Z}$ at $u$ is 
\begin{equation}
\sum_{n=0}^\infty \frac{(u-v)^n}{n!} \int \sigma^u (- \log \sigma)^n \zeta( \d \sigma).
\end{equation}
Since $\log \sigma < 0$ on the support of $\zeta$, we can exchange the sum and integral whenever the Taylor series is absolutely convergent.  In particular, there is some $v < A$ for which this is true, and we have
\begin{equation}
\int \sigma^v \zeta ( \d \sigma)
=
\int \sum_{n=0}^\infty \frac{(u-v)^n}{n!} (- \log \sigma)^n \sigma^u \zeta ( \d \sigma)
< \infty,
\end{equation}
contradicting the definition of $A$.

This proves convergence on the real line; to extend to complex numbers, we simply note that
\begin{equation}
\int \left| \sigma^u \right| \zeta ( \d \sigma)
= 
\int \sigma^{\Re u} \zeta ( \d \sigma).
\end{equation}
\end{proof}

We can confirm that $\alpha$ actually is a pole of ${\mathtt Z}$ in the same way as we did for ${\mathtt X}$ in the previous section, using \eqref{eq:zeta_mellin} to obtain
\begin{equation}
\int \sigma^u \zeta(\d u)
\ge 
{\mathtt M}(u) \int \sigma^u \zeta (\d u),
\end{equation}
so that since ${\mathtt M}(u) < 1$ for $u \in (0,1)$ the integral must diverge there.

The relationship of ${\mathtt Z}$ to $G_\zeta$ is slightly simpler than what we saw in the previous sections: for $\Re u > U$ we can integrate \eqref{eq:Z_def} by parts to obtain
\begin{equation}
{\mathtt Z}(u) = u \int_0^\infty \sigma^{u-1} G_\zeta(\sigma) \d \sigma.
\label{eq:zeta_forward_mellin}
\end{equation}
Repeating the proof of Lemma~\ref{lem:Z_strip} we see that the integral on the right hand side is absolutely convergent for all $\Re u > \alpha$ and this expression holds everywhere on that half-plane by analytic continuation.
\begin{remark}\label{rem:omega_at_origin}
In particular, since $\alpha < 1$,
\begin{equation}
\int_0^\infty G_\zeta(\sigma) \d \sigma < \infty,
\end{equation}
and since $G_\zeta$ is a nonnegative, nonincreasing function this implies $G_\zeta (\sigma) = o(1/\sigma)$ (and therefore also $G_{\omega_0} (\sigma) = o(1/\sigma)$) for $\sigma \searrow 0$.
\end{remark}

Noting that $G_\zeta$ is continuous and monotone, we can apply the inverse Mellin formula to obtain 
\begin{equation}\label{eq:zeta_inverse_Mellin}
G_\zeta (\sigma) = \frac{1}{2 \pi i	} \int_{u_0 - i \infty}^{u_0 + i \infty} \frac{1}{u} \sigma^{-u} {\mathtt Z}(u) \d u
\end{equation}
for all $\sigma \ge 0$ where $G_\zeta$ is continuous and all $u_0 > \alpha$.

In order to displace the contour of integration in \eqref{eq:zeta_inverse_Mellin} as we did with \eqref{eq:G_inverse_Mellin}, we again need a little control over the growth of ${\mathtt Z}$ for large imaginary arguments.  This can be obtained in nearly the same way as was done for ${\mathtt X}$ in Lemma~\ref{lem:X_growth}:
\begin{lemma}
For $\Im u$ large with $\Re u = u_0 > \alpha - 1$ fixed,
\begin{equation}
|{\mathtt Z}(u)| = \O{|\Im u |^{1/2}}.
\end{equation}
\end{lemma}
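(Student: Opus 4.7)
The plan is to mimic the proof of Lemma~\ref{lem:X_growth} using the recursion~\eqref{eq:Z_recursion} in place of \eqref{eq:X_recursion}, together with the boundedness of $|{\mathtt Z}(w_0+ix)|$ on a suitably chosen vertical line. The main structural difference is that \eqref{eq:Z_recursion} requires $\Re u < w_0 < \Re u + 1$ (rather than $\Re u - 1 < w_0 < \Re u$), so we must pick $w_0$ to the right of $u$, and large enough that ${\mathtt Z}(w_0)$ is finite.

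Concretely, given $u_0 > \alpha - 1$, I would choose $w_0 \in (\max(u_0,\alpha), u_0+1)$; this interval is nonempty since $u_0 + 1 > \alpha$. By Lemma~\ref{lem:Z_strip} the integral defining ${\mathtt Z}(w_0)$ is absolutely convergent, hence for all $x \in \mathbb{R}$,
\begin{equation}
  |{\mathtt Z}(w_0 + ix)| \;\le\; \int \sigma^{w_0}\, \zeta(\d\sigma) \;=\; {\mathtt Z}(w_0) \;<\; \infty.
\end{equation}
Applying \eqref{eq:Z_recursion} with this $w_0$ and writing $w = w_0 + ix$ gives
\begin{equation}
  |{\mathtt Z}(u)| \;\le\; \frac{{\mathtt Z}(w_0)\,|{\mathtt M}(w_0)|}{|\Gamma(u)|\,|1-{\mathtt M}(u)|}\,\frac{1}{2\pi}\int_{-\infty}^{\infty}\bigl|\Gamma(w_0+ix)\,\Gamma(u-w_0-ix)\bigr|\, \d x.
\end{equation}

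The inner integral is estimated via the Stirling expansion~\eqref{eq:Stirling} exactly as in Lemma~\ref{lem:X_growth}: the integrand is, up to polynomial factors, bounded by
$|x|^{w_0-1/2}\,|x-\Im u|^{u_0-w_0-1/2}\,\exp\!\bigl(-\tfrac{\pi}{2}(|x|+|x-\Im u|)\bigr)$. Splitting the integral into the ranges $x\in[0,\Im u]$ (where $|x|+|x-\Im u|=|\Im u|$ and the remaining Beta-type integral contributes $|\Im u|^{u_0}$) and the complementary ranges (which decay a further factor $e^{-\pi|\Im u|/2}$ and are negligible), one obtains
\begin{equation}
  \int_{-\infty}^{\infty}\bigl|\Gamma(w_0+ix)\,\Gamma(u-w_0-ix)\bigr|\, \d x \;=\; \O{\exp(-\tfrac{\pi}{2}|\Im u|)\,|\Im u|^{u_0}}.
\end{equation}
Meanwhile, Stirling also yields $|\Gamma(u)|^{-1} = \O{|\Im u|^{1/2-u_0}\exp(\tfrac{\pi}{2}|\Im u|)}$, and by Lemma~\ref{lem:RL} we have ${\mathtt M}(u)\to 0$ uniformly as $|\Im u|\to\infty$, so $|1-{\mathtt M}(u)|^{-1}$ is bounded. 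Multiplying these three estimates, the exponentials cancel and the powers combine to give $|{\mathtt Z}(u)| = \O{|\Im u|^{1/2}}$, as claimed.

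The only real obstacle is the bookkeeping of the power $|\Im u|^{u_0}$ from the contour integral against the $|\Im u|^{1/2 - u_0}$ from $1/|\Gamma(u)|$; the exponent $1/2$ emerges (as in Lemma~\ref{lem:X_growth}) independently of the choice of $w_0$, provided one confirms that picking $w_0$ in the admissible interval $(\max(u_0,\alpha), u_0+1)$ does not spoil the convergence of the Stirling-type integral—which holds because $u_0-w_0-1/2 > -3/2$ and $w_0-1/2 > \alpha - 1/2 > -1/2$, keeping both ends of the interval integrable.
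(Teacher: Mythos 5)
Your proposal is correct and follows essentially the same route as the paper: apply the recursion \eqref{eq:Z_recursion} with a contour at $w_0\in(\max(u_0,\alpha),u_0+1)$, bound $|{\mathtt Z}(w_0+ix)|\le{\mathtt Z}(w_0)$ via Lemma~\ref{lem:Z_strip}, and reuse the Stirling estimates of Lemma~\ref{lem:X_growth} with the signs of $u$, $w_0$ and $x$ reversed. Your explicit bookkeeping of the exponents ($|\Im u|^{u_0}$ from the contour integral against $|\Im u|^{1/2-u_0}$ from $1/|\Gamma(u)|$) and your care that the interval for $w_0$ is nonempty are exactly what the paper leaves implicit.
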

\begin{proof}
Fixing some $w_0 \in (\alpha, u_0+1)$, \eqref{eq:Z_recursion} implies
\begin{equation}
|{\mathtt Z}(u)| 
\le 
\frac{{\mathtt M}(w_0)}{|1- {\mathtt M}(u)|} 
\frac{{\mathtt Z}(w_0)}{|\Gamma(u)|}
\frac{1}{2 \pi}
\int_{-\infty}^\infty \left|
\Gamma ( w_0 + ix) \Gamma(u - w_0 - i x )
\right| \d x,
\end{equation}
where we note that Lemma~\ref{lem:Z_strip} implies $|{\mathtt Z}(w_0+ix)| <  {\mathtt Z}(w_0) < \infty$. 
Then using the estimates in the proof of Lemma~\ref{lem:X_growth} with the signs of $u$, $w_0$ and $x$ reversed we arrive at the same estimate.
\end{proof}

Then we can displace the contour in \eqref{eq:zeta_inverse_Mellin} to obtain
\begin{equation}
G_\zeta (\sigma) = {\mathtt Z}(0) +  \frac{\res_{\mathtt Z}(\alpha)}{\alpha} \sigma^{-\alpha} + \frac{1}{2 \pi i	} \int_{u_0 - i \infty}^{u_0 + i \infty} \frac{1}{u} \sigma^{-u} {\mathtt Z}(u) \d u,
\label{eq:zeta_expansion}
\end{equation}
valid for some $u_0 < 0$, since Lemma~\ref{lem:Z_strip} implies that $\alpha$ is the only pole of ${\mathtt Z}$ in the right half plane.

As before, the integral on the right hand side of \eqref{eq:zeta_expansion} is not absolutely convergent, but is equal to the derivative of a function given by an absolutely convergent integral, in this case
\begin{equation}
Q_\zeta(\sigma) =  \frac{1}{2 \pi i} \int_{u_0 - i \infty}^{u_0 + i \infty} \frac{1}{u (1 - u)} \sigma^{1-u} {\mathtt Z}(u) \d u.
\end{equation}
This expression is manifestly is $O(\sigma^{1-u_0})$ (in particular $o(\sigma)$, since $u_0 < 0$), and we can use this to obtain the counterpart of Lemma~\ref{lem:nu_asymptotics}:
\medskip

\begin{lemma} \label{lem:omega_asymptotics}
There is a function $R_\omega:\mathbb{R}^+ \to \mathbb{R}$ and a constant $C_\omega \neq 0$ such that 
\begin{equation}
G_{\omega_0} (\sigma) = C_\omega \left[\sigma^{-\alpha} + R_\omega (\sigma) \right]
\label{eq:Gom_expansion}
\end{equation}
and 
\begin{equation}
	R_\omega (\sigma) =  \O{1}
	\label{eq:Rom_order}
\end{equation}
as $\sigma \searrow 0$.
\end{lemma}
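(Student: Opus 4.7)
My plan is to parallel the proof of Lemma~\ref{lem:nu_asymptotics}, reinjecting the Mellin-based estimate into the fixed-point equation for $\omega_0$ to upgrade the implicit control on the derivative $Q_\zeta'$ into a direct bound on $R_\omega$. I set $C_\omega := \res_{\mathtt Z}(\alpha)/\alpha$, which is nonzero by the argument used for ${\mathtt X}$ right after Lemma~\ref{lem:X_mellin}, and take \eqref{eq:Gom_expansion} as the \emph{definition} of $R_\omega$. From the identity $G_{\omega_0}(\sigma) = G_\zeta(\sigma/(1+\sigma))$ together with the expansion \eqref{eq:zeta_expansion} and $(1+\sigma)^\alpha = 1+\alpha\sigma + O(\sigma^2)$, one obtains
\begin{equation}
R_\omega(\sigma) \,=\, \frac{{\mathtt Z}(0)}{C_\omega} + \alpha\sigma^{1-\alpha} + O(\sigma^{2-\alpha}) + \frac{1}{C_\omega}\, Q_\zeta'\!\left( \frac{\sigma}{1+\sigma}\right),
\end{equation}
so since $\alpha < 1$ the problem reduces to showing $Q_\zeta'(\tau) = O(1)$ as $\tau \searrow 0$.

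To control $Q_\zeta'$ I would translate $S_0\omega_0 = \omega_0$ into an equation for $G_\zeta$. Applying \eqref{eq:S_def_integral} with $f=\ind_{(\sigma,\infty)}$ and exchanging the $\mu$ and $\omega_0$ integrals by Fubini, the condition $z h_0(s) > \sigma$ (with $z>\sigma$) becomes $s > \sigma/(z-\sigma)$, giving $G_{\omega_0}(\sigma) = \int_{c_-}^{c_+} G_{\omega_0}\!\left(\sigma/(z-\sigma)\right) \mu(\d z)$ for every $\sigma < c_-$. The substitution $\sigma = \tau/(1-\tau)$ sends $\sigma/(z-\sigma) \mapsto \tau/(z(1-\tau))$ and yields
\begin{equation}
G_\zeta(\tau) \,=\, \int_{c_-}^{c_+} G_\zeta\!\left( \frac{\tau}{z(1-\tau)} \right) \mu(\d z)
\end{equation}
for all sufficiently small $\tau>0$. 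Substituting \eqref{eq:zeta_expansion} and using $\int z^\alpha \mu(\d z)=1$ from \eqref{eq:A_def}, the $\tau^{-\alpha}$ contributions from the right-hand side produce a factor $(1-\tau)^\alpha = 1-\alpha\tau+O(\tau^2)$ which cancels $C_\omega \tau^{-\alpha}$ on the left and leaves the self-consistency relation
\begin{equation}
Q_\zeta'(\tau) \,=\, -C_\omega \alpha \tau^{1-\alpha} + O(\tau^{2-\alpha}) + \int_{c_-}^{c_+} Q_\zeta'\!\left( \frac{\tau}{z(1-\tau)} \right) \mu(z)\,\d z.
\end{equation}
The explicit error terms are $O(1)$ since $\alpha<1$, so everything reduces to bounding the remaining integral.

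The main technical step, and principal obstacle, is the integration by parts needed to exploit the bound on $Q_\zeta$ itself rather than on its derivative. Writing $\phi(z) := \tau/(z(1-\tau))$, so that $\phi'(z) = -\phi(z)/z$, I rewrite $Q_\zeta'(\phi(z)) = (d/dz)[Q_\zeta(\phi(z))]/\phi'(z)$ and integrate by parts in $z$: the boundary terms vanish because the $C^1$ assumption on $\mu$ together with its compact support forces $\mu(c_\pm)=0$, and the result is
\begin{equation}
\int_{c_-}^{c_+} Q_\zeta'(\phi(z))\,\mu(z)\,\d z \,=\, \frac{1-\tau}{\tau} \int_{c_-}^{c_+} Q_\zeta(\phi(z)) \left[ 2z\mu(z) + z^2 \mu'(z) \right] \d z.
\end{equation}
Now $|Q_\zeta(\phi(z))| = O(\phi(z)^{1-u_0})$ for any fixed $u_0<0$ (with $u_0$ chosen so that the contour in the definition of $Q_\zeta$ avoids the poles of ${\mathtt Z}$ in $\Re u\le 0$), and $\phi(z) \le 2\tau/c_-$ uniformly on $[c_-,c_+]$ for $\tau$ small. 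Hence the right-hand side is $O(\tau^{-u_0})$, which is $o(1)$ as $\tau \searrow 0$; combined with the explicit $O(\tau^{1-\alpha})$ contribution this gives $Q_\zeta'(\tau) = o(1)$ and hence $R_\omega(\sigma) = O(1)$. The delicate points are verifying that the Fubini swap and change of variables are licit on the relevant range, and checking that the $O$-constants in the integration by parts estimate are uniform in $\tau$.
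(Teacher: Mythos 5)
Your proposal is correct and takes essentially the same route as the paper's proof: define $R_\omega$ through the expansion with $C_\omega=\res_{\mathtt Z}(\alpha)/\alpha$, reinject it into the stationarity equation $S_0\omega_0=\omega_0$ (you work in the $\zeta$/$\tau$ variable, the paper in the $\omega_0$/$\sigma$ variable, which is only a reparametrization), and control the remaining $Q_\zeta'$ integral by integrating by parts against the $C^1$ density $\mu$ and using the absolute-convergence bound $Q_\zeta(x)=O(x^{1-u_0})$. Your explicit treatment of the $(1+\sigma)^\alpha$ factor arising from $G_{\omega_0}(\sigma)=G_\zeta(\sigma/(1+\sigma))$ is in fact slightly more careful than the paper's.
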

The proof is the same as Lemma~\ref{lem:nu_asymptotics}, apart from a few details.
\begin{proof}
	Letting 
	$C_\omega := \res_{\mathtt Z}(\alpha)/\alpha$ and 
	$R_\omega(\sigma) := C_{\omega}^{-1} \left[ {\mathtt Z} (0) + Q'_\zeta(\sigma/(1-\sigma))
	\right]$ and recalling that $G_\zeta(\tfrac{\sigma}{1+\sigma}) = G_{\omega_0}(\sigma)$,
	we obtain the expansion~\eqref{eq:Gom_expansion} from Equation~\eqref{eq:zeta_expansion}.  Then using writing out the stationarity condition $\omega_0 = S_0 \omega_0$ as in~\eqref{eq:S_def_CDF2}, we have
	\begin{equation}
		\sigma^{-\alpha} + R_\omega (\sigma)
		=
		\int_\sigma^\infty \left[ \left( \frac{z-\sigma}{\sigma} \right)^\alpha + 
			Q'_\zeta \left( \frac{\sigma}{z} \right) 
		\right]\mu ( \d z) 
		+ {\mathtt Z} (0) G_\mu (\sigma)
		\label{eq:om_stationary_expansion}
	\end{equation}
	after cancelling a factor of $C_\omega$ and noting that $h_0^{-1}(y) = y/(1-y)$.
	By the generalized binomial theorem,
	\begin{equation}
		\left( \frac{z-\sigma}{\sigma} \right)^\alpha
		=
		\frac{z^\alpha}{\sigma^\alpha} \sum_{k=0}^{\infty} \binom{\alpha}{k} \left( -\frac{\sigma}{z} \right)^{-\alpha},
	\end{equation}
	where the sum is absolutely convergent for $\sigma < z$; then we have
	\begin{equation}
		\begin{split}
			\int_\sigma^\infty \left( \frac{z-\sigma}{\sigma} \right)^\alpha \mu (\d z)
			& = 
			\sigma^{-\alpha} 
			+ \sum_{k=1}^\infty (-1)^k \binom{\alpha}{k} \sigma^{k - \alpha} \int_\sigma^\infty z^{\alpha-k} \mu (\d z)
			\\ & =
			\sigma^{-\alpha} 
			+ \O{\sigma^{1-\alpha}}
		\end{split}
		\label{eq:om_expansion_powerterm}
	\end{equation}
	for $\sigma \searrow 0$.

	As for the other integral in~\eqref{eq:om_stationary_expansion}, for $\sigma < c_-$ we can make the change of variables $x = \sigma/z$, use the fact that $\mu$ has $C^1$ density to integrate by parts, and use the observation that $Q_\zeta(\sigma) = o(\sigma)$ to obtain
	\begin{equation}
		\begin{split}
			\int_\sigma^\infty Q_\zeta'\left( \frac{\sigma}{z} \right) \mu ( \d z)
			&=
			\sigma \int_{\sigma/c_+}^{\sigma/c_-} Q'_\zeta (x) \mu\left( \frac\sigma{x} \right) \frac{\d x}{x^2}
			\\ &= \sigma \int_{\sigma/c_+}^{\sigma/c_-} Q_\zeta (x)
			\left[ 
				\frac2{x^3}\mu\left( \frac\sigma{x} \right) 
				+\frac\sigma{x^4}\mu'\left( \frac\sigma{x} \right) 
			\right] \d x
			= o(1)
		\end{split}
	\end{equation}
	as $\sigma \searrow 0$.  Inserting this and \eqref{eq:om_expansion_powerterm} into~\eqref{eq:om_stationary_expansion}, and noting that $G_\mu(\sigma) \to 1$ as $\sigma \to 0$, we obtain Equation~\eqref{eq:Rom_order}.
\end{proof}

\section{An approximately stationary point} \label{sec:norm_bounds}

This section is devoted to the proof of Theorem~\ref{th:main}. Following the strategy outlined 
in Section~\ref{sec:approach} we introduce a measure $\gamma_\eps$ which is changed only slightly (as measured by $\vertiii{\cdot}_\beta$) by the action of $T_\eps$.  Applying Lemma~\ref{th:mainlem}, we see that this implies that $\gamma_\eps$ is close to the stationary measure $\nu_\eps$ in a way which allows us to use it to estimate the Lyapunov exponent $\mathcal{L}(\eps)$.

\medskip
\noindent
{\it Proof of Theorem~\ref{th:main}}.
For each $\eps$, in view of introducing the probability $\gamma_\gep$ define a measure $\gex$ by 
\begin{equation}
\label{eq:gex_def}
G_\gex ( x) = \left\{
\begin{array}{ll}
a(\eps) G_{\omega_0} (\eps^2 x), & x \ge \frac{1}{\eps}\\
a(\eps) G_{\omega_0} (\eps  ) + G_{\nu_0} (x) - G_{\nu_0} (1/\eps) , & x < \frac{1}{\eps}
\end{array}
\right.
\end{equation}
or equivalently
\begin{equation}
\label{eq:gex_def2}
F_\gex (x) = \left\{
\begin{array}{ll}
F_{\nu_0} (x), & x < \frac{1}{\eps}\\
F_{\nu_0} ( 1/\eps) + a(\eps) \left[ G_{\omega_0} (\eps ) - G_{\omega_0} (\eps^2 x ) \right], & x \ge \frac{1}{\eps}
\end{array}
\right. ,
\end{equation}
where $a(\eps) := (C_\nu / C_\omega ) \eps^{2 \alpha}$, so that (see 
\eqref{eq:Gnu_expansion} and \eqref{eq:Gom_expansion})
\begin{equation}
G_{\nu_0} (t) - a(\eps) G_{\omega_0} (\eps^2 t) = 
C_\nu
\left[
R_\nu (t) - \eps^{2 \alpha} R_\omega (\eps^2 t)
\right]. \label{eq:G_cancellation}
\end{equation}
In Section~\ref{sec:ghat} below, we will show that
\begin{equation}
\vertiii{T_\eps \hat{\gamma}_\eps - \hat{\gamma}_\eps}_\beta
=
\Oeps{2\alpha - \beta}
+\Oeps{\alpha+\delta - \beta}.
\label{eq:preview}
\end{equation}
This will be done by using the definition of $\gex$ as a piecewise expression in terms of the stationary measures $\nu_0 = T_0 \nu_0$ and $\omega_0 = S_0 \omega_0$ to write out the above distance as an integral of terms with approximate calculations due to the presence of differences either of the form $T_\eps - T_0$ or $S_\eps - S_0$, or of the form of the left hand side of Equation~\eqref{eq:G_cancellation}.

We cannot apply Lemma~\ref{th:mainlem} immediately, because $\hat{\gamma}_\eps$ is not a probability measure.  However 
using Lemma~\ref{lem:nu_asymptotics} 
and Lemma~\ref{lem:omega_asymptotics},
\begin{equation}\label{eq:gamma_normalization}
	G_\gex (0)
=
a(\eps) G_{\omega_0} (\eps ) + 1 - G_{\nu_0}(1/\eps)
=
1+\Oeps{\alpha},
\end{equation}
and thus if we define a probability measure
\begin{equation}
\gamma_\eps = \frac{\hat{\gamma}_\eps }{G_\gex(0)}
\end{equation}
we have
\begin{equation}
\label{eq:formainlem}
\vertiii{T_\eps \gamma_\eps - \gamma_\eps}_\beta
\, =\, 
\left[
1+ \Oeps{\alpha}
\right]
\vertiii{T_\eps \hat{\gamma}_\eps - \hat{\gamma}_\eps}_\beta
=
\Oeps{2\alpha - \beta}
+\Oeps{\alpha+\delta - \beta}\, .
\end{equation}
Now we write
\begin{multline}
L_\eps[\gex] \, =\, 
\int \log(1 + \eps^2 s) \gex (\d s)
\\
=\,  \int \log(1 + \eps^2 s) \ind_{[0,1/\eps)}(s) \nu_0(\d s)
+
a(\eps) \int \log(1+\sigma) \ind_{[\eps ,\infty)}(\sigma) \omega_0(\d s)
\\
=\,
 a(\eps)
\int \log (1+\sigma) \omega_0(\d \sigma) 
+ \int_0^{1/\gep} \log\left(1+ \gep^2s \right) \nu_0(\dd s) 
\\ - a(\gep) \int_0^{\gep } \log(1+ \gs) \go_0(\dd \gs)\, .
\end{multline}
The last two terms can be bounded as follows:
\begin{equation}
\begin{split} 
0 \le &
\int_0^{1/\gep} \log\left(1+ \gep^2s \right) \nu_0(\dd s) 
\le 
\eps^2 \int_0^{1/{\eps}} s \ \nu_0 (\d s)
\\
& =
- \eps^2 \frac{1}{\eps} G_{\nu_0}\left( 1/{\eps} \right)
+ \eps^2 \int_0^{1/{\eps}} G_{\nu_0}(s) \d s
= \Oeps{\alpha + 1},
\end{split}
\end{equation}
where we have used Lemma~\ref{lem:nu_asymptotics} for the last estimate, and similarly
\begin{multline}
0 \le 
\int_0^{\gep } \log(1+ \gs) \go_0(\dd \gs)
\le 
\int_0^{\gep } \gs \go_0(\dd \gs)
=
- \eps  G_{\omega_0} (\eps )
+ \int_0^{\gep } G_{\go_0} (\gs) \dd \gs
\\ =
\Oeps{1-\alpha}
\end{multline}
using Remark~\ref{rem:omega_at_origin} to conclude that the other boundary term is zero and using Lemma~\ref{lem:omega_asymptotics} to estimate the final integral.  
Combining the last three equations and taking into account the correction from \eqref{eq:gamma_normalization}, we obtain
\begin{equation}
\label{eq:formainlem-2}
L_\eps(\gamma_\eps) \, =\,  \left( 1 + \Oeps{\alpha} \right)
a(\eps) \int \log (1+\sigma) \omega_0(\d \sigma) + \Oeps{1+\alpha} \, .
\end{equation}
The integral appearing here is finite: 
\begin{equation}
\begin{split}
\int \log (1+\sigma) \omega_0(\d \sigma) 
= - \int \log (1-\tau)
\zeta (\d \tau)
<
\frac{(1+c_+) \log(1+c_+)}{c_+}
\int \tau \zeta (\d \tau) 
\\ = \frac{(1+c_+) \log(1+c_+)}{c_+} {\mathtt Z}(1) < \infty
\end{split}
\end{equation}
for $\zeta$ and ${\mathtt Z}$ defined in Section~\ref{sec:omega_asymptotics} (noting in particular that the support of $\zeta$ is contained in $[0,\frac{c_+}{1+c_+}]$).  We can then apply  Lemma~\ref{th:mainlem}, using \eqref{eq:formainlem} and \eqref{eq:formainlem-2}, and 
the proof of Theorem~\ref{th:main} is complete, modulo of course establishing \eqref{eq:preview} to which all the rest of the section is devoted, with
\begin{equation}
\label{eq:Cmu}
C_\mu := 
\frac{C_\nu}{C_\omega} 
\int \log (1 + \sigma) \, \omega_0 (\d \sigma).
\end{equation} 
\qed

\medskip

\begin{remark}
\label{rem:uni2}
Any change in the normalization of $\omega_0$ is cancelled by a change in $C_\omega$, so this definition of $C_\mu$ is indeed independent of this normalization.
\end{remark}

\subsection{Quasi-stationarity estimates for $\hat{\gamma}_\eps$: proof of \eqref{eq:preview}}\label{sec:ghat}

Writing out the definitions of $\vertiii{\cdot}_\beta$ and $\gex$ and using the stationarity properties $\nu_0 = T_0 \nu_0$ and $\omega_0 = S_0 \omega_0$ we obtain
\begin{multline}
\vertiii{T_\eps \hat{\gamma}_\eps - \hat{\gamma}_\eps}_\beta
=
\int_{c_-}^{c_+\eps^{-2}} 
\tau^{\beta - 1} 
\left|
 \left[
F_{T_\eps \gex}(\tau) - F_\gex (\tau)
\right]
\right|
\d \tau
\\
=
\int_{c_-}^{c_+\eps^{-2}}
\tau^{\beta - 1}
\left|
\left\{
\ind_{[1/\eps,\infty]}(\tau)
\left[
a(\eps) G_{S_0 \omega_0} (\eps^2 \tau ) - G_{T_\eps \gex} (\tau)
\right] \right. \right.
\\
+ \left. \left. \ind_{[0,1/\eps)} (\tau)
\left[
F_{T_\eps \gex} (\tau) - F_{T_0\nu_0} (\tau)
\right]
\vphantom{\eps^2}
\right\} 
\vphantom{\int}
\right| \d \tau\, ,
\label{eq:presplit1}
\end{multline}
and by
 \eqref{eq:alt-G} and \eqref{eq:T_def_CDF}
we can rewrite \eqref{eq:presplit1} as
\begin{multline}
\vertiii{T_\eps \hat{\gamma}_\eps - \hat{\gamma}_\eps}_\beta\, =
\\
\int_{c_-}^{c_+\eps^{-2}}
\tau^{\beta - 1}
\left|
\ind_{[1/\eps,\infty]}(\tau)
 \int_{\eps^2\tau}^\infty
\left[ 
a(\eps) G_{\omega_0} \left( h_0^{-1} \left(\frac{\eps^2 \tau}{t} \right) \right)
-
G_\gex \left( g_\eps^{-1} \left(\frac{\tau}{t} \right) \right)
\right] \right.
 \mu (\d t) 
\\ +
 \ind_{[0,1/\eps)}(\tau) \left\{
\int
\left[ 
F_\gex \left( g_\eps^{-1} \left(\frac{ \tau}{t} \right) \right)
-
F_{\nu_0} \left( g_0^{-1} \left(\frac{ \tau}{t} \right) \right)
\right] \mu (\d t) \right. 
\\ + \left. \left.[F_{\gex}(\infty) - 1] F_\mu(\eps^2 \tau)\vphantom{\int}
\right\} 
\right| 
\d \tau.
\end{multline}
We can simplify this somewhat by restricting to $\eps \le c_- $; 
then $\eps^2 \tau \le c_-$ whenever $\tau < 1/\eps$, and therefore $ \ind_{[0,1/\eps)}(\tau) F_\mu(\eps^2 \tau)=0$.
Then writing out the definition of $\gex$ in \eqref{eq:gex_def} and  \eqref{eq:gex_def2}, 
we have
\begin{equation}
\begin{split}
& \vertiii{T_\eps \hat{\gamma}_\eps - \hat{\gamma}_\eps}_\beta\, =
\\ & \quad
\int_{c_-}^{c_+\eps^{-2}}
\tau^{\beta - 1}
\left|
	\int 
	\bigg\{
		\ind_{[1/\eps,\infty)}(\tau) 
			\ind_{[\eps^2 \tau,\eps \tau]}(t)
			\right.
\\ 
	& \qquad  \qquad\times
	\left[
		a(\eps)G_{\omega_0} \left( h_0^{-1} \left(\frac{\eps^2 \tau}{t} \right) \right) 
		- a(\eps) G_{\omega_0} \left(\eps^2 g_\eps^{-1} \left(\frac{\tau}{t} \right) \right)
	\right]
\\ & \qquad
+ 
\ind_{[1/\eps,\infty)}(\tau) \ind_{(\eps \tau,\infty)}(t)
\\
& \qquad  \qquad \times
\left[
a(\eps) G_{\omega_0} \left( h_0^{-1} \left(\frac{\eps^2 \tau}{t} \right) \right) 
-
a(\eps) G_{\omega_0} (\eps )
-
G_{\nu_0} \left( g_\eps^{-1} \left(\frac{ \tau}{t} \right) \right)
+
G_{\nu_0} \left( {\eps}^{-1} \right)
\right]
\\ & \qquad
+ 
\ind_{[0,1/\eps)}(\tau) \ind_{[0,\eps \tau]} (t)
\\
& \qquad  \qquad\times
\left[
F_{\nu_0}  \left( \frac{1}{\eps} \right)
+
a(\eps) G_{\omega_0} (\eps )
-
a(\eps) G_{\omega_0} \left( \eps^2 g_\eps^{-1} \left( \frac{\tau}{t}\right) \right)
-
F_{\nu_0} \left( g_0^{-1} \left(\frac{\tau}{t}\right) \right)
\right]
\\ & \qquad
+ \left. 
\ind_{[0,1/\eps)}(\tau) \ind_{(\eps \tau,\infty)} (t)
\left[
F_{\nu_0} \left( g_\eps^{-1} \left(\frac{\tau}{t}\right) \right)
-
F_{\nu_0} \left( g_0^{-1} \left(\frac{\tau}{t}\right) \right)
\right]
\vphantom{G_{\omega_0} \left( h_0^{-1} \left(\frac{\eps^2 \tau}{t} \right) \right)}
\bigg\} \mu(\d t)
\vphantom{\int} \right|
\d \tau,
\end{split}
\end{equation}
where we have used the observation that $g_\eps^{-1}(\tau/t) \ge 1/\eps$ is equivalent to $t \le \eps \tau$ since $g_\eps(1/\eps) = 1/\eps$ to simplify the indicator functions.

Rewriting the $F$s as $G$s, and using $\eps^2 g_\eps^{-1}(\tau / t) = h_\eps^{-1}(\eps^2 \tau/t)$ (which can be checked from \eqref{eq:g_inverse} and \eqref{eq:h_inverse}) in the first line 
\begin{equation}
\begin{split}
&
\vertiii{T_\eps \hat{\gamma}_\eps - \hat{\gamma}_\eps}_\beta =
\\ & 
\int
\tau^{\beta - 1}
\left|
	\int 
	\left\{
		\ind_{[1/\eps,\infty)}(\tau) 
			\ind_{[\eps^2 \tau,\eps \tau]}(t)
		\vphantom{\left( h_0^{-1} \left(\frac{\eps^2 \tau}{t} \right) \right)}
	\right.
\right.
\\ & \qquad \qquad \times
\left[
a(\eps)G_{\omega_0} \left( h_0^{-1} \left(\frac{\eps^2 \tau}{t} \right) \right) 
- a(\eps) G_{\omega_0} \left(h_\eps^{-1} \left(\frac{\eps^2 \tau}{t} \right) \right)
\right]
\\ & \quad
+ 
\ind_{[1/\eps,\infty)}(\tau) \ind_{(\eps \tau,\infty)}(t)
	\\
& \qquad  \qquad \times
\left[
a(\eps) G_{\omega_0} \left( h_0^{-1} \left(\frac{\eps^2 \tau}{t} \right) \right) 
-
a(\eps) G_{\omega_0} (\eps )
-
G_{\nu_0} \left( g_\eps^{-1} \left(\frac{ \tau}{t} \right) \right)
+
G_{\nu_0} \left( \frac{1}{\eps} \right)
\right]
\\ & \quad
+ 
\ind_{[0,1/\eps)}(\tau) \ind_{[0,\eps \tau]} (t)
\\
& \qquad  \qquad \times
\left[
G_{\nu_0} \left( g_0^{-1} \left(\frac{\tau}{t}\right) \right)
-
G_{\nu_0}  \left( \frac{1}{\eps} \right)
+
a(\eps) G_{\omega_0} (\eps )
-
a(\eps) G_{\omega_0} \left( \eps^2 g_\eps^{-1} \left( \frac{\tau}{t}\right) \right)
\right]
\\ & \quad
+ \left. \left.
\ind_{[0,1/\eps)}(\tau) \ind_{(\eps \tau,\infty)} (t)
\left[
G_{\nu_0} \left( g_0^{-1} \left(\frac{\tau}{t}\right) \right)
-
G_{\nu_0} \left( g_\eps^{-1} \left(\frac{\tau}{t}\right) \right)
\right]
\vphantom{\left( h_0^{-1} \left(\frac{\eps^2 \tau}{t} \right) \right)}
\right\}  \mu (\d t)
\right| 
\d \tau.
\end{split}
\end{equation}
We can then use the Triangle inequality to split up the integrals into four parts, obtaining

\begin{equation}
\begin{split}
&\vertiii{T_\eps \hat{\gamma}_\eps - \hat{\gamma}_\eps}_\beta
\le
\\
& \quad 
	a(\eps) \int_{1/\eps}^{c_+\eps^{-2}}
	\tau^{\beta - 1}
	\int_{\eps^2 \tau}^{\eps\tau}
	\left[ 
		G_{\omega_0} \left(h_\eps^{-1} \left(\frac{\eps^2 \tau}{t} \right) \right)
		- G_{\omega_0} \left( h_0^{-1} \left(\frac{\eps^2 \tau}{t} \right) \right) 
	\right] 
	 \mu (\d t) \d \tau
\\ & \quad
+
	\int_{c_-}^{1/\eps}
	\tau^{\beta - 1}
	\int_{\eps \tau}^\infty
	\left[
	G_{\nu_0} \left( g_0^{-1} \left(\frac{\tau}{t}\right) \right)
	-
	G_{\nu_0} \left( g_\eps^{-1} \left(\frac{\tau}{t}\right) \right)
	\right] 
	\mu (\d t) \d \tau
\\ & \quad
+
	\int_{1/\eps}^{c_+\eps^{-2}}
	\tau^{\beta - 1}
	\left| \int_{\eps\tau}^\infty \right.
\\ & \qquad \left. \times 
	\left[
	a(\eps) G_{\omega_0} \left( h_0^{-1} \left(\frac{\eps^2 \tau}{t} \right) \right) 
	-
	a(\eps) G_{\omega_0} (\eps )
	-
	G_{\nu_0} \left( g_\eps^{-1} \left(\frac{ \tau}{t} \right) \right)
	+
	G_{\nu_0} \left( \frac{1}{\eps} \right)
	\right]\right.
\\ & \qquad \qquad \times \left. \vphantom{\int} 	
	\mu (\d t) 
	\right| \d \tau
\\ & \quad
+
	\int_{c_-}^{1/\eps}
	\tau^{\beta - 1}
	\left| \int_0^{\eps \tau}
		\right. 
\\ & \left.  \qquad \times 
	\left[
	G_{\nu_0} \left( g_0^{-1} \left(\frac{\tau}{t}\right) \right)
	-
	G_{\nu_0}  \left( \frac{1}{\eps} \right)
	+
	a(\eps) G_{\omega_0} (\eps )
	-
	a(\eps) G_{\omega_0} \left( \eps^2 g_\eps^{-1} \left( \frac{\tau}{t}\right) \right)
	\right] \right.
\\ & \qquad \qquad \times 
	\left. \vphantom{\int}	
	\mu (\d t) 
	\right| \d \tau
\end{split}
\label{eq:big_split}
\end{equation}
where the fact that $h_\eps^{-1} (y) \le h_0^{-1}(y)$ and 
$g_\eps^{-1} (y) \ge g_0^{-1}(y)$ for the relevant $y$ implies that the first two integrands are non negative.

Let us now examine the four terms in \eqref{eq:big_split} in turn.  The first one can be rewritten (mainly using the fact that the integrand is nonnegative)
\begin{equation}
\begin{split}
&
	a(\eps) \int_{1/\eps}^{c_+\eps^{-2}}
	\tau^{\beta - 1}
	\int_{\eps^2 \tau}^{\eps\tau}
	\left[ 
		G_{\omega_0} \left(h_\eps^{-1} \left(\frac{\eps^2 \tau}{t} \right) \right)
		- G_{\omega_0} \left( h_0^{-1} \left(\frac{\eps^2 \tau}{t} \right) \right) 
	\right] 
\\
& \le 
a(\eps) \int \int_{1/{\eps} }^\infty 
\tau^{\beta - 1}
\left[ 
G_{\omega_0} \left(h_\eps^{-1} \left(\frac{\eps^2 \tau}{t} \right) \right)
- G_{\omega_0} \left( h_0^{-1} \left(\frac{\eps^2 \tau}{t} \right) \right) 
\right]  \d \tau \ \mu (\d t)
\\
& =
a(\eps) \int \int \int_{1/\eps }^\infty \left[ 
	\ind_{[0,\frac{t}{\eps^2}h_\eps(\sigma)]}(\tau)
-
\ind_{[0,\frac{t}{\eps^2}h_0(\sigma)]}(\tau)
\right] \tau^{\beta-1} \d \tau \ \omega_0(\d \sigma) \  \mu (\d t)
\\
& =
a(\eps) \int \int \left[
	\ind_{(h_\eps^{-1}(\eps/t),\infty)} (\sigma)
	\int_{1/\eps}^{t\eps^{-2} h_\eps(\sigma)} \tau^{\beta - 1} \d \tau  
\right.
\\ & \qquad \qquad \qquad \qquad -
\left.
	\ind_{(h_0^{-1}(\eps/t),\infty)} (\sigma)
	\int_{1/\eps}^{t\eps^{-2} h_0(\sigma)} \tau^{\beta - 1} \d \tau 
\right] 
\\ & \qquad \qquad \qquad
\times \omega_0(\d \sigma) \mu(\d t)
\\
& =
\left( \frac{1}{\eps}\right)^\beta
\frac{a(\eps)}{\beta} \int  
\left[
G_{\omega_0} \left( h_\eps^{-1} \left( \frac{\eps}{t} \right) \right)
 -
G_{\omega_0} \left( h_0^{-1} \left( \frac{\eps}{t} \right) \right)
\right] \mu (\d t)
\\ & \quad 
+
\frac{a(\eps) }{\beta \eps^{2\beta}} \int t^\beta
\int
\left[
\ind_{(h_\eps^{-1}(\eps/t),\infty)} (\sigma) \left(
 h_\eps(\sigma)
\right)^\beta
-
\ind_{(h_0^{-1}(\eps/t),\infty)} (\sigma) \left(
 h_0(\sigma)
\right)^\beta
\right] 
\\ & \qquad \qquad \qquad \times \omega_0(\d \sigma) \ 
 \mu (\d t) \ .
\end{split}
\label{eq:part1}
\end{equation}
As for the first integral on the right hand side, noting that
$$
h_0^{-1}\left(\frac{\eps}{t}\right) - h_\eps^{-1} \left(\frac{\eps}{t}\right)
=
\frac{t \eps^2}{t - \eps},
$$
it can be estimated using Lemma~\ref{lem:omega_asymptotics} as
\begin{equation}
\begin{split}
&\int  
\left[
G_{\omega_0} \left( h_\eps^{-1} \left( \frac{\eps}{t} \right) \right)
 -
G_{\omega_0} \left( h_0^{-1} \left( \frac{\eps}{t} \right) \right)
\right] \mu (\d t)
\\
&=
C_\omega \int \left[
\left(h_\eps^{-1} \left( \frac{\eps}{t} \right) \right)^{-\alpha}
-
\left(h_0^{-1} \left( \frac{\eps}{t} \right) \right)^{-\alpha}
+
R_\omega \left(h_\eps^{-1} \left( \frac{\eps}{t} \right)\right)
-
R_\omega \left(h_0^{-1} \left( \frac{\eps}{t} \right) \right)
\right] 
\\ & \qquad \qquad \times \mu (\d t)
\\
&\le 
C_\omega \int \left[
\alpha \frac{t\eps^2}{t - \eps} \left( \frac{t - \eps }{\eps } \right)^{\alpha+1}
+
R_\omega \left(h_\eps^{-1} \left( \frac{\eps}{t} \right)\right)
-
R_\omega \left(h_0^{-1} \left( \frac{\eps}{t} \right) \right)
\right] \mu (\d t)
= \O{1}
\end{split}
\label{eq:part1a}
\end{equation}
for $\eps \searrow 0$.  Multiplying by $a(\eps)/\eps^\beta$, we see that the first term in \eqref{eq:part1} is $O(\eps^{2 \alpha - \beta})$.

The second integral on the rightmost side of \eqref{eq:part1} can be written as
\begin{equation}
\begin{split}
&\int t^\beta
\int
\left[
\ind_{(h_\eps^{-1}(\eps/t),\infty)} (\sigma) \left(
 h_\eps(\sigma)
\right)^\beta
-
\ind_{(h_0^{-1}(\eps/t),\infty)} (\sigma) \left(
 h_0(\sigma)
\right)^\beta
\right] \omega_0(\d \sigma)
 \mu (\d t)
\\ &
=
\int t^\beta \int \left[
\ind_{(h_\eps^{-1}(\eps/t),h_0^{-1}(\eps/t)]}(\sigma)
h_\eps(\sigma)^\beta
+
\ind_{(h_0^{-1}(\eps/t),\infty)} (\sigma)
\left\{
h_\eps(\sigma)^\beta - h_0(\sigma)^\beta
\right\}
\right] 
\\ 
& \phantom{movemovemovemovemovemovemovemovemovemovemove}
\times\omega_0 (\d \sigma) \mu (\d t)
\\ &
\le 
c_+^\beta \int \left[h_\eps\left(h_0^{-1}\left(\frac{\eps}{t}\right) \right) \right]^\beta 
\left[
G_{\omega_0} \left(h_\eps^{-1}\left(\frac{\eps}{t}\right) \right) \mu (\d t)
-
G_{\omega_0} \left(h_0^{-1}\left(\frac{\eps}{t}\right) \right)\right] \mu (\d t)
\\ & \quad
+
\beta c_+^\beta \int \int \ind_{(h_0^{-1}(\eps/t),\infty)} (\sigma)
\left(
\frac{\eps^2}{1+\sigma}
\right)
\left(
\frac{\eps^2 + \sigma}{1+\sigma}
\right)^{\beta-1}
\omega_0 (\d \sigma) \mu (\d t)
\end{split}
\label{eq:part1b}
\end{equation}
The first integral in the final expression is similar to the one estimated in \eqref{eq:part1a}, apart from the presence of a factor of order $\eps^\beta$, and so the whole term is $O(\eps^\beta)$.
As for the second term, by integrating by parts and then applying Lemma~\ref{lem:omega_asymptotics} in the same fashion as \eqref{eq:part1a}, it can be bounded in the following way:
\begin{multline}
\int \int \ind_{(h_0^{-1}(\eps/t),\infty)} (\sigma)
\left(
\frac{\eps^2}{1+\sigma}
\right)
\left(
\frac{\eps^2 + \sigma}{1+\sigma}
\right)^{\beta-1}
\omega_0 ( \d\sigma)\  \mu (\d t)
\\
\le \eps^2 (c_+ +\eps^2)^{\beta-1} \int G_{\omega_0} \left(h_0^{-1}\left(\frac{\eps}{t}\right) \right) \mu (\d t)
=
O\left( \eps^{2-\alpha} \right)\, .
\end{multline}
Then the right hand side of \eqref{eq:part1b} is $O(\eps^\beta)$ (since $\beta < 1 < 2 - \alpha$), and so the second term on the right hand side of \eqref{eq:part1} is $O(\eps^{2\alpha - \beta})$ when the prefactor is included.  We already obtained an estimate of the same order for the first term, so we arrive at the estimate
\begin{multline}
	a(\eps)  \int_{1/\eps}^{c_+\eps^{-2}}
	\tau^{\beta - 1}
	\int
	 \ind_{[0,1/\eps)} \left( g_\eps^{-1} \left( \frac\tau{t} \right) \right) 
\\
 \qquad \times 
	\left[ 
		G_{\omega_0} \left(h_\eps^{-1} \left(\frac{\eps^2 \tau}{t} \right) \right)
		- G_{\omega_0} \left( h_0^{-1} \left(\frac{\eps^2 \tau}{t} \right) \right) 
	\right] 
	 \mu (\d t) \d \tau
\, 
\\  \quad =\,
\Oeps{2\alpha-\beta} 
\label{eq:bound1}
\end{multline}
for the first term on the right hand side of \eqref{eq:big_split}.

The next term is fairly similar:
\begin{equation}
\begin{split}
& 
	\int_{c_-}^{1/\eps}
	\tau^{\beta - 1}
	\int_{\eps \tau}^\infty
	\left[
	G_{\nu_0} \left( g_0^{-1} \left(\frac{\tau}{t}\right) \right)
	-
	G_{\nu_0} \left( g_\eps^{-1} \left(\frac{\tau}{t}\right) \right)
	\right] 
	\mu (\d t) \d \tau
\\ & \quad 
\le \int  \int_0^{1/\eps} 
\left[
F_{\nu_0} \left( g_\eps^{-1} \left(\frac{\tau}{t}\right) \right)
-
F_{\nu_0} \left( g_0^{-1} \left(\frac{\tau}{t}\right) \right)
\right] \tau^{\beta-1} \d \tau \, \mu (\d t)
\\ & \quad 
=
\int \int \left[
\ind_{[0,g_\eps^{-1}(1/t\eps)]}(s)
\int_{tg_\eps(s)}^{1/\eps} \tau^{\beta-1} \d \tau
\right.
\\ & \qquad \qquad \qquad 
\left.
-
\ind_{[0,g_0^{-1}(1/t\eps)]}(s)
\int_{tg_0(s)}^{1/\eps} \tau^{\beta-1} \d \tau
\right] 
\nu_0(\d s)\, \mu (\d t)
\\ &
=
\frac{1}{\beta}
\int \int
\left\{
	\ind_{[0,g_\eps^{-1}(1/t\eps)]}(s)
	\left[
		\eps^{-\beta}
		- \left( t g_\eps(s) \right)^\beta
	\right]
	\right.
	\\ & \qquad \qquad \qquad 
	\left.
	- 
	\ind_{[0,g_0^{-1}(1/t\eps)]}(s)
	\left[
		\eps^{-\beta}
		- \left( t g_0(s) \right)^\beta
	\right]
\right\}
\nu_0(\d s) \, \mu (\d t)
\\ &
	=
	\frac{1}{\beta \eps^\beta} 
	\int
	\left[
		G_{\nu_0} \left( g_0^{-1} 
		\left( \frac{1}{t \eps }\right)\right)
		-
		G_{\nu_0} \left( g_\eps^{-1} 
		\left( \frac{1}{t \eps }\right)\right)
	\right] \mu (\d t)
\\& \quad
	-
	\int
	t^\beta
	\left\{
		\int 
		\left[
			\ind_{[0,g_\eps^{-1}(1/t\eps)]}(s)
			\left( t g_\eps(s) \right)^\beta
			-
			\ind_{[0,g_0^{-1}(1/t\eps)]}(s)
			\left( t g_0(s) \right)^\beta
		\right]
		\nu_0(\d s)
	\right\} \mu (\d t)
\end{split}
\label{eq:part2}
\end{equation}
Noting that
$$
g_\eps^{-1}\left( \frac{1}{t\eps}\right) 
- g_0^{-1}\left( \frac{1}{t\eps}\right) 
= \frac{1-t\eps }{\eps(t-\eps)} - \frac{1-t\eps }{t\eps }
=
\frac{1-t\eps }{t(t-\eps)}
$$
and so
\begin{multline}
\left( g_0^{-1}\left( \frac{1}{t\eps}\right)\right)^{-\alpha}
-
\left( g_\eps ^{-1}\left( \frac{1}{t\eps}\right)\right)^{-\alpha} 
\le 
\alpha \frac{1-t\eps }{t(t-\eps)}
\left(
	\frac{t \eps }{1 - t \eps}
\right)^{\alpha+1}
\\
=
 \alpha 
\frac{t^\alpha \eps^{\alpha+1}}{(t-\eps)(1-t\eps)^\alpha}
,
\end{multline}
the first integral on the right hand side of \eqref{eq:part2} can be estimated as follows:
\begin{multline}
 \int \left[
G_{\nu_0} \left( g_0^{-1} \left( \frac{1}{t\eps} \right) \right) 
- G_{\nu_0} \left( g_\eps^{-1} \left( \frac{1}{t\eps} \right) \right) 
\right]  \mu (\d t)\, 
=\\
C_\nu
\int \bigg[
\left( g_0^{-1}\left( \frac{1}{t\eps}\right)\right)^{-\alpha}
-
\left( g_\eps ^{-1}\left( \frac{1}{t\eps}\right)\right)^{-\alpha}
+
R_\nu \left( g_0^{-1}\left( \frac{1}{t\eps}\right)\right)
-R_\nu \left( g_\eps ^{-1}\left( \frac{1}{t\eps}\right)\right)
\bigg]
\\ \times \mu (\d t)
\\ 
\le 
	 \alpha C_\nu \eps^{\alpha+1}
	\int
		\frac{t^\alpha}{(t-\eps )(1-t\eps)^\alpha}
		\mu (\d t) \phantom{movemovemovemovemove}
		\\
	+ C_\nu \int \left[
		R_\nu \left( g_0^{-1}\left( \frac{1}{t\eps}\right)\right)
		-R_\nu \left( g_\eps ^{-1}\left( \frac{1}{t\eps}\right)\right)
	\right] \mu (\d t)\,
= \, O\left( \eps^{\alpha+\delta} \right)\, ,
\label{eq:part2a}
\end{multline}
where we have used  Lemma~\ref{lem:nu_asymptotics} and observed that $g_\eps^{-1} (1/t\eps) = \Oeps{-1}$, $g_0^{-1} (1/t\eps) = \Oeps{-1}$ as $\eps \searrow 0$, uniformly for $t \in [c_-,c_+]$.
Consequently, the corresponding term in \eqref{eq:part2} is $\O{\eps^{\alpha+\delta-\beta}}.$

The inner part of the second integral on the rightmost side of \eqref{eq:part2} can be rewritten as
\begin{equation}
\begin{split}
	& \int 
	\left[
	\ind_{[0,g_0^{-1}(1/t\eps)]}(s)
		\left( t g_0(s) \right)^\beta
-
		\ind_{[0,g_\eps^{-1}(1/t\eps)]}(s)
		\left( t g_\eps(s) \right)^\beta
			\right]
	\nu_0(\d s)
\\& \quad
	= t^\beta \int
	\left\{
		\ind_{[0,g_\eps^{-1}(1/t\eps)]}(s)
		\left[ g_0(s)^\beta - g_\eps(s)^\beta \right]
		-
		\ind_{[g_0^{-1}(1/t\eps),g_\eps^{-1}(1/t\eps)]}(s)
		\left( g_\eps(s) \right)^\beta
	\right\}
\\& \qquad \qquad \qquad \times 
	\nu_0(\d s)
\\ & \quad
	\le 
		t^\beta\int
		\ind_{[0,g_\eps^{-1}(1/t\eps)]}(s)
		\left[ g_0(s)^\beta - g_\eps(s)^\beta \right]
		\nu_0(\d s)
\\ & \quad
	\le 
		\beta \eps^2 t^\beta \int 
		\ind_{[0,g_\eps^{-1}(1/t\eps)]}(s)
		s 
		\left(
			1+s
		\right)^\beta
		\nu_0(\d s)
\\ & \quad
	\le 
		\frac{\beta \eps^2 t^\gb }{(t-\gep)}
		\int (1+s)^\gb \nu_0(\dd s)
	\, \le \, 4
	\beta \eps t^{\gb-1} 
		\int \left(1+s^\gb\right) \nu_0(\dd s) \, = \, 
		\Oeps{},
\end{split}
\label{eq:part2b}
\end{equation}
where the second to last inequality uses the observation that $g_\gep^{-1}(1/(t\gep))\ge s$ implies $s\le 1/(\gep(t-\gep))$,
and the final estimate uses $\int s^\gb \nu_0(\dd s) < \infty$
(cf. \eqref{eq:technu0} and line right after).
With this we see that the right hand side of \eqref{eq:part2} is $\Oeps{\alpha+\delta-\beta}$ (noting $\alpha+\delta - \beta < \alpha+\delta < 1$).

We rewrite the third term in \eqref{eq:big_split} using Lemmas~\ref{lem:nu_asymptotics} and~\ref{lem:omega_asymptotics} and the triangle inequality,
\begin{equation}
\begin{split}
	&
	\int_{1/\eps}^{c_+\eps^{-2}}
	\tau^{\beta - 1}
	\left| \int_{\eps\tau}^\infty \right.
\\ & \qquad \left. \times 
	\left[
	a(\eps) G_{\omega_0} \left( h_0^{-1} \left(\frac{\eps^2 \tau}{t} \right) \right) 
	-
	a(\eps) G_{\omega_0} (\eps )
	-
	G_{\nu_0} \left( g_\eps^{-1} \left(\frac{ \tau}{t} \right) \right)
	+
	G_{\nu_0} \left( \frac{1}{\eps} \right)
	\right]\right.
\\ & \qquad \qquad \times \left. \vphantom{\int} 	
	\mu (\d t) 
	\right| \d \tau
\\ &\quad
\le
	C_\nu
	\int_{1/\eps}^\infty \tau^{\beta-1}
	\left|
		\int_{\eps\tau}^\infty  
		\left[
			\left(
				\frac{1}{\eps^2}
				h_0^{-1}\left( \frac{\eps^2 \tau}{t}\right)
			\right)^{-\alpha}
			-
			\left(
				g_\eps^{-1}\left( \frac{\tau}{t}\right)
			\right)^{-\alpha}
		\right] 
		\mu (\d t)
	\right|
	   \d \tau
\\ & \quad  \quad
+
	\ C_\nu \eps^{2\alpha}
	\int_{1/\eps}^\infty \tau^{\beta-1}
	\left|
		\int_{\eps\tau}^\infty  
		R_\omega \left( h_0^{-1} \left(\frac{\eps^2 \tau}{t} \right) \right)
		\d \mu(t)
	\right|
	\d \tau
\\ & \quad  \quad
+
	\ C_\nu \eps^{2\alpha}
	\int_{1/\eps}^\infty \tau^{\beta-1}
	\left|
		\int_{\eps\tau}^\infty  
		R_\omega (\eps )
		\d \mu(t)
	\right|
	\d \tau
\\ & \quad  \quad
+
	C_\nu 
	\int_{1/\eps}^\infty \tau^{\beta-1}
	\left|
	\int_{\eps \tau}^\infty 
		R_\nu \left(
			g_\eps^{-1}\left( \frac{\tau}{t}\right)
		\right)
		\mu (\d t)
	\right|
	\d \tau
\\ & \quad  \quad
+
	C_\nu 
	\int_{1/\eps}^\infty \tau^{\beta-1}
	\left|
		\int_{\eps\tau}^\infty  
		R_\nu (1/\eps )
		\d \mu(t)
	\right|
	\d \tau
\end{split}
\label{eq:part3}
\end{equation}
Noting that 
\begin{equation}
\frac{1}{\eps^2} h_0^{-1}\left( \frac{\eps^2 \tau}{t}\right)
= 
\frac{\tau}{t - \eps^2 \tau}
\ge 
\frac{\tau - t}{t - \eps^2 \tau}
=
g_\eps^{-1}\left( \frac{\tau}{t}\right)
\end{equation}
and thus
\begin{equation}
\left|
	\left(
		\frac{1}{\eps^2}
		h_0^{-1}\left( \frac{\eps^2 \tau}{t}\right)
	\right)^{-\alpha}
	-
	\left(
		g_\eps^{-1}\left( \frac{\tau}{t}\right)
	\right)^{-\alpha}
\right|\,
\le \,
\alpha 
\frac{t(t-\eps^2\tau)^\alpha}{(\tau - t)^{\alpha+1}}
\, \le \, \alpha 
\frac{t^{\alpha+1}}{(\tau - t)^{\alpha+1}}\, ,
\end{equation}
we see that
\begin{multline}
	\left|
		\int_{\eps\tau}^\infty  
		\left[
			\left(
				\frac{1}{\eps^2}
				h_0^{-1}\left( \frac{\eps^2 \tau}{t}\right)
			\right)^{-\alpha}
			-
			\left(
				g_\eps^{-1}\left( \frac{\tau}{t}\right)
			\right)^{-\alpha}
		\right] 
		\mu (\d t)
	\right|
\\
\le \, 
	\alpha 
	\int 
	\ind_{[0,t/\eps]} (\tau)
	\frac{t^{\alpha+1}}{(\tau - t)^{\alpha+1}} 
	\mu (\d t)
\, \le\,  
	\alpha c_+^{\alpha + 1}
	(\tau - c_+)^{-1-\alpha} \, ,
\label{eq:part3a}
\end{multline}
and therefore the first term on the right hand side of \eqref{eq:part3} is $\Oeps{1+\alpha-\beta}$.

The remaining terms in \eqref{eq:part3} can be estimated easily using Lemmas~\ref{lem:nu_asymptotics} and~\ref{lem:omega_asymptotics}, and they are either $\Oeps{2\alpha - \beta}$ or $\Oeps{\alpha+\delta-\beta}$.  In the second and third terms,
due to the bounded support of $\mu$ the inner integral is nonzero unless $\tau < c_+ / \eps$.  The integrands are both $\O{1}$ uniformly on the domain of integration for $\eps \searrow 0$, 
so both terms are of order
\begin{equation}
	\O{ 
		\eps^{2\alpha}
		\int_{1/\eps}^{c_+/\eps} \tau^{\beta - 1} \d \tau
	}
	= \Oeps{2 \alpha - \beta}.
	\label{eq:part3b}
\end{equation}

In the last two terms, the inner integral is again zero unless $\tau < c_+ / \eps$, but the integrand is now of order $\Oeps{\alpha+\alpha}$, so these terms are of order
\begin{equation}
	\O{
		\eps^{\alpha+\delta}
		\int_{1/\eps}^{c_+/\eps} \tau^{\beta - 1} \d \tau
	}
	=\Oeps{\alpha+\delta-\beta}.
	\label{eq:part3c}
\end{equation}
Summing up, the right hand side of \eqref{eq:part3} is $\Oeps{2\alpha-\beta}+\Oeps{\alpha+\delta-\beta}$.

The last term in \eqref{eq:big_split} is quite similar to the previous one.  We first rearrange as in \eqref{eq:part3},
\begin{equation}
\begin{split}
	&
	\int_{c_-}^{1/\eps}
	\tau^{\beta - 1}
	\bigg\vert 
	\int_0^{\eps \tau}
\\ & \times 
	\left[
	G_{\nu_0} \left( g_0^{-1} \left(\frac{\tau}{t}\right) \right)
	-
	G_{\nu_0}  \left( \frac{1}{\eps} \right)
	+
	a(\eps) G_{\omega_0} (\eps )
	-
	a(\eps) G_{\omega_0} \left( \eps^2 g_\eps^{-1} \left( \frac{\tau}{t}\right) \right)
	\right] 
	\mu (\d t) 
	\bigg\vert \d \tau
\\ &
=
	C_\nu 
	\int_{c_-}^{1/\eps}
	\tau^{\beta - 1}
	\left|
		\int_0^{\eps \tau}
		\left[
			\left(
				g_0^{-1} \left(\frac{\tau}{t}\right)
			\right)^{\alpha}
			-
			\left(
				g_\eps^{-1} \left(\frac{\tau}{t}\right)
			\right)^{\alpha}
		\right] \mu (\d t) 
	\right| \d \tau
\\ & \quad
+
	C_\nu 
	\int_{c_-}^{1/\eps}
	\tau^{\beta - 1}
	\left|
		\int_0^{\eps \tau}
		R_\nu
		\left( g_0^{-1} \left(\frac{\tau}{t}\right) \right)
		\mu (\d t)
	\right|
	\d \tau
\\ & \quad
+
	C_\nu 
	\int_{c_-}^{1/\eps}
	\tau^{\beta - 1}
	\left|
		\int_0^{\eps \tau}
		R_\nu
		\left( \frac{1}{\eps} \right)
		\mu (\d t)
	\right|
	\d \tau
\\ & \quad
+
	C_\nu \eps^{2\alpha}
	\int_{c_-}^{1/\eps}
	\tau^{\beta - 1}
	\left|
		\int_0^{\eps \tau}
		R_\omega
		\left( \eps^2 g_\eps^{-1} \left( \frac{\tau}{t}\right) \right)
		\mu (\d t)
	\right|
	\d \tau
\\ & \quad
+
	C_\nu \eps^{2\alpha}
	\int_{c_-}^{1/\eps}
	\tau^{\beta - 1}
	\left|
		\int_0^{\eps \tau}
		R_\omega
		(\eps)
		\mu (\d t)
	\right|
	\d \tau
	.
\end{split}
\label{eq:part4}
\end{equation}

Noting that
\begin{equation}
g_0^{-1}
\left(\frac{\tau}{t}\right)
=
\frac{\tau - t}{t}
<
\frac{\tau - t}{t - \eps^2 \tau}
=
g_\eps^{-1} \left(\frac{\tau}{t}\right)\, 
\end{equation}
and so
\begin{equation}
\begin{split}
	\left|
		\left(
			g_0^{-1} \left(\frac{\tau}{t}\right)
		\right)^{\alpha}
		-
		\left(
			g_\eps^{-1} \left(\frac{\tau}{t}\right)
		\right)^{\alpha}
	\right|
\le 
	\alpha
	\frac{\eps^2 t^\alpha}{(\tau - t)^\alpha (t - \eps^2 \tau)}\, ,
\end{split}
\end{equation}
which behaves like $\eps^2 \tau^{-\alpha}$ for $\tau$ on the order of $\eps^{-1}$, the first term in \eqref{eq:part4} is $\Oeps{2+\alpha - \beta}$.  The remaining terms are easy to estimate, giving either
\begin{equation}
	\O{
		\eps^{\alpha+\delta} 
		\int_0^{1/\eps} \tau^{\beta - 1} \d \tau
	}
	=\Oeps{\alpha+\delta-\beta}\, ,
	\label{eq:part4b}
\end{equation}
or
\begin{equation}
	\O{
		\eps^{2\alpha} 
		\int_0^{1/\eps} \tau^{\beta - 1} \d \tau
	}
	=\Oeps{2\alpha-\beta}\, .
	\label{eq:part4c}
\end{equation}
Combining this with the estimates for \eqref{eq:part1}, \eqref{eq:part2} and~\eqref{eq:part3} we conclude that
\begin{equation}
\vertiii{T_\eps \hat{\gamma}_\eps - \hat{\gamma}_\eps}_\beta
=
\Oeps{2\alpha - \beta}
+\Oeps{\alpha+\delta - \beta}.
\label{eq:big_bound}
\end{equation}

\appendix

\section{Some useful identies}
\label{sec:appendix}

Here are some useful identities: recall  the definitions   \eqref{eq:T_def_integral},  \eqref{eq:cumulative} and \eqref{eq:S_def_integral}.
Setting $f = \ind_{(-\infty,\sigma])}$, we obtain
\begin{equation}
F_{S_\eps \omega} (\sigma)\, 
=\, 
\int F_\omega \left( h_\eps^{-1} \left( \frac{\sigma}{z} \right) \right)  \mu(\d z)
+ F_\omega(\infty)F_\mu(\gs)\, ,
\label{eq:S_def_CDF}
\end{equation}
for all $\sigma$, and this is  another way to define the action of $S_\gep$ on $\go$.  
Of course $F_\omega(\infty)$ is one if $\go$ is a probability measure, but in general $\go$ is not normalized. 
In making use of  \eqref{eq:S_def_CDF} it is helpful to note that 
\begin{equation}
h_\eps^{-1} (y) = \frac{y-\eps^2}{1-y}.
\label{eq:h_inverse} 
\end{equation}
Observing that $h_\eps^{-1}(y) < 0 $ for $ y > 1$, we also see that
\begin{equation}
G_{S_\eps \omega} (\sigma)
=
\int_\sigma^\infty G_\omega \left(
h_\eps^{-1} \left( \frac{\sigma}{z} \right) 
\right)
\mu ( \d z)\,.
\label{eq:S_def_CDF2}
\end{equation}
Moreover
\begin{equation}
F_{T_\eps \nu}(\tau) 
\, =\,  \int F_\mu\left( \frac \tau{g_\gep(s)}\right) \nu( \dd s) \,=\, 
\int F_\nu \left( g_\eps^{-1} \left( \frac{\tau}{t} \right) \right) \mu (\d t)
+ F_\nu(\infty) F_\mu(\gep^2 \tau)\, ,
\label{eq:T_def_CDF}
\end{equation}
where
\begin{equation}
g_\eps^{-1}(y) \, =\,  \frac{y-1}{1-\eps^2 y}.
\label{eq:g_inverse}
\end{equation}
In fact, $T_{\gep} \nu$ is defined also
by the first equality in \eqref{eq:T_def_CDF}, 
or by equating  the left-most and right-most expressions.
It is useful to note that \eqref{eq:T_def_CDF} and \eqref{eq:S_def_CDF} 
can be rewritten (using $h_\gep^{-1}(y) <0$ for $y>1$ and $g_\gep^{-1}(y) <0$ for $y>1/\gep^2$)
\begin{equation}
\label{eq:alt-G}
G_{S_\gep \go}(\gs)\, =\, \int_\gs^\infty G_{\go}\left( h_\gep^{-1}\left( \frac \gs t\right)\right) \mu(\dd t)\ \textrm{ and } \
G_{T_\gep\nu}(\tau)\, =\, \int_{\gep^2 \tau}^\infty G_\nu\left( g_\gep^{-1}\left(\frac \tau t \right)\right) \mu (\dd t) \, ,
\end{equation}
for every $\gep\ge 0$.

\section*{Acknowledgements}  
  The authors wish to thank Gunter Stoltz and an anonymous referee for their comments that lead to improvements both in the presentation of the results and in some of the arguments of proof.

\end{document}